\documentclass[a4paper,USenglish, thm-restate]{lipics-v2019}

\hideLIPIcs
\nolinenumbers

\usepackage{needspace}
\usepackage{comment} 

\usepackage{libertine} 
\usepackage{inconsolata} 

\usepackage{amsmath, amsthm, amssymb}
\usepackage{stmaryrd}
\usepackage{graphicx}
\usepackage{enumerate}

\usepackage[textsize=small, textwidth=2cm, color=yellow]{todonotes}
\usepackage{thmtools}

\usepackage{float}
\usepackage{xcolor}
\usepackage{doi} 
\usepackage[numbers, sort&compress]{natbib} 

\declaretheorem[name=Observation, style=remark, sibling=theorem]{observation}

\def\cqedsymbol{\ifmmode$\lrcorner$\else{\unskip\nobreak\hfil
\penalty50\hskip1em\null\nobreak\hfil$\lrcorner$
\parfillskip=0pt\finalhyphendemerits=0\endgraf}\fi}

\newcommand{\logsn}{\log^*\!n}

\DeclareMathOperator{\poly}{poly}


\interfootnotelinepenalty=10000

 %
 %
 %
 %


 %

\let\le\leqslant
\let\ge\geqslant
\let\leq\leqslant
\let\geq\geqslant


\bibliographystyle{plain}

\title{Distributed recoloring of interval and chordal graphs}  

\author{Nicolas Bousquet}{Univ. Lyon, Université Lyon 1, LIRIS UMR CNRS 5205, F-69621, Lyon, France}{nicolas.bousquet@univ-lyon1.fr}{https://orcid.org/0000-0003-0170-0503}{}

\author{Laurent Feuilloley}{Univ. Lyon, Université Lyon 1, LIRIS UMR CNRS 5205, F-69621, Lyon, France}{laurent.feuilloley@univ-lyon1.fr}{https://orcid.org/0000-0002-3994-0898}{}

\author{Marc Heinrich}{University of Leeds, UK}{M.Heinrich@leeds.ac.uk }{0000-0003-4546-2359}{}

\author{Mikaël Rabie}
{Université de Paris, CNRS, IRIF, F-75006, Paris, France, France}{mikael.rabie@irif.fr}{https://orcid.org/0000-0001-6782-7625}{}

\authorrunning{N. Bousquet, L. Feuilloley, M. Heinrich and M. Rabie}

\funding{This work was supported by ANR project GrR (ANR-18-CE40-0032).}

\ccsdesc[100]{Theory of computation $\rightarrow$ Design and analysis of algorithms $\rightarrow$ Distributed algorithms}

\keywords{Distributed coloring, distributed recoloring, interval graphs, chordal graphs, intersection graphs} 

\begin{document}

\maketitle

\begin{abstract}

One of the fundamental and most-studied algorithmic problems in distributed computing on networks is graph coloring, both in bounded-degree and in general graphs. 
Recently, the study of this problem has been extended in two directions. 
First, the problem of recoloring, that is computing an efficient transformation between two given colorings (instead of computing a new coloring), has been considered, both to model radio network updates, and as a useful subroutine for coloring. 
Second, as it appears that general graphs and bounded-degree graphs do not model real networks very well (with, respectively, pathological worst-case topologies and too strong assumptions), coloring has been studied in more specific graph classes. 
In this paper, we study the intersection of these two directions: distributed recoloring in two relevant graph classes, interval and chordal graphs.

More formally, the question of recoloring a graph is as follows: we are given a network, an input coloring $\alpha$ and a target coloring $\beta$, and we want to find a schedule of colorings to reach $\beta$ starting from $\alpha$. 
In a distributed setting, the schedule needs to be found within the LOCAL model, where nodes communicate with their direct neighbors synchronously. 
The question we want to answer is: how many rounds of communication {are} needed to produce a schedule, and what is the length of this schedule?

In the case of interval and chordal graphs, we prove that, if we have less than $2\omega$ colors, $\omega$ being the size of the largest clique, extra colors will be needed in the intermediate colorings. For interval graphs, we produce a schedule after $O(\poly(\Delta)\logsn)$ rounds of communication, and for chordal graphs, we need $O(\omega^2\Delta^2\log n)$ rounds to get one.

Our techniques also improve classic coloring algorithms. 
Namely, we get $\omega+1$-colorings of interval graphs in $O(\omega\log^*\!n)$ rounds and of chordal graphs in $O(\omega\log n)$ rounds, which improves on previous known algorithms that use $\omega+2$ colors for the same running times.

\end{abstract}


\newpage{}

\section{Introduction}

Finding a proper coloring of the network is one of the most studied problems in the LOCAL model of distributed computing. 
It is a typical symmetry-breaking problem, and it can be used as a building block to solve many other problems \cite{BarenboimE13}. 
It also has direct applications, and a popular one since~\cite{Ramanathan99} is the assignment of frequencies (or time slots) in wireless networks. 
Consider a network of stations using radio communication. A simple model consists in considering that either two stations are close enough to communicate, but then they should use different emission frequencies to avoid
collisions, or they are far apart, cannot communicate, and can safely use the same frequency.
An assignment of frequencies that avoids conflict is equivalent to a proper coloring. 

Simply computing a coloring is not enough in some contexts. Indeed, suppose that we have an algorithm for the classic coloring task: starting from a coloring with a high number of colors (for example, unique identifiers), we get a coloring with fewer colors. Now, how do we update the frequencies of the network? 
If we do it simultaneously without stopping all communications, there might be conflicts during the transition period, between the nodes still using the old frequencies and the ones using the new frequencies.
Thus we need to stop all communications during the transition period, which is inconvenient if the network is performing other tasks in parallel.  
Therefore, instead of just a new coloring, we would like to have a series of colorings, such that at any step the color change is safe, that is old and new colors do not conflict. 
In this paper, we aim at finding such schedules in a distributed way. More precisely, we tackle the following more general recoloring question: how to go from an input coloring to a target coloring, such that at each step, the coloring is proper and the vertices whose colors are changing form an independent set of the network. To do so, it will sometimes be necessary to assume that we are allowed to use a few additional colors, that are not present in the input and target colorings.

For this problem, three questions naturally arise: Do we need extra colors to be able to produce a schedule, and if yes, how many? Can we find a new schedule locally? And how long the recoloring schedule needs to be? 
We study the problem from the point of view of the LOCAL model, thus we want our algorithms to use a sublinear number of communication rounds.
For this setting, it is known that in general graphs we sometimes need to use many additional colors (\emph{e.g.} $k-1$ extra colors to go from a $k$-coloring to another).
This is bad news, as using extra colors can be considered as costly (\emph{e.g.} because we need to reserve frequencies to make the transitions).
Fortunately, radio networks have topologies that are more constrained than general graphs, and we want to exploit these properties to design schedules that use less extra colors. 

In a first approximation, radio networks can be modeled as intersection graphs of (unit) disks in the plane (see Figure~\ref{fig:intersection-graphs}). In a unit disk graphs, we associate to each node a location on the plane, and two nodes are in conflict (i.e., linked by an edge) if their (unit) disks intersect.  
Unit disks can always be colored with $3\omega$ colors~\cite{peeters1991coloring}, where $\omega$ is the size of the maximum clique of the graph. In contrast, there exists triangle-free graphs with arbitrarily large chromatic number. However, deciding if there exists a $k$-coloring remains NP-complete in the centralized setting, even in unit disks graphs. In general, we do not have a good understanding of the coloring of unit-disk graphs, and we know very little about recoloring, even in the centralized setting.

One can then wonder what happens in other simpler geometric graph classes. For instance, if we assume that all the centers are on the same line in the plane, the obtained graphs are called \emph{interval graphs}. An interval graph is an intersection graph of intervals of the real line (see Figure~\ref{fig:intersection-graphs}). In other words, an interval of the real line is associated to every node of the graph and there is an edge between two nodes if their corresponding intervals intersect.

Interval graphs can be colored with $\omega$ colors in polynomial time in the centralized setting.

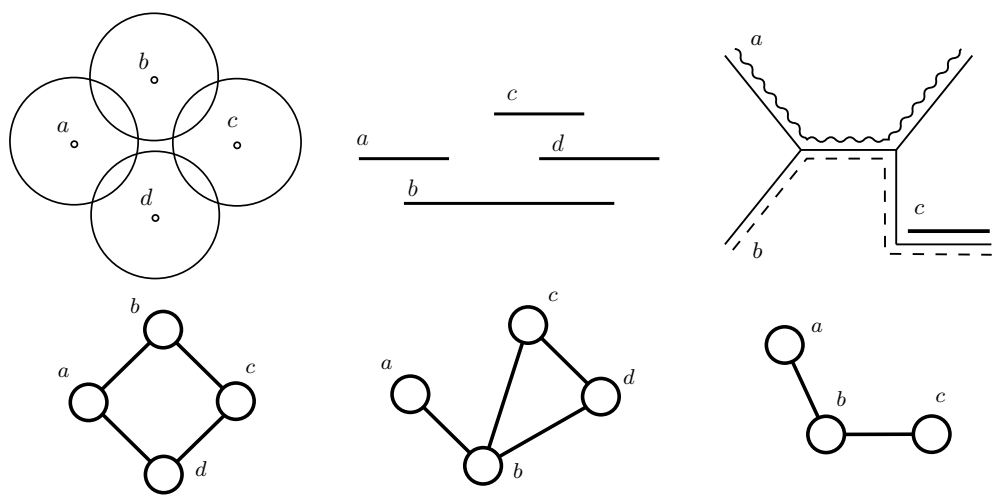
\begin{figure}[h!]
    \centering
    \begin{tabular}{ccc}
    \begin{minipage}{0.3\textwidth}
    \centering
        \scalebox{0.85}{
     \tikzset{every picture/.style={line width=0.75pt}} 

\begin{tikzpicture}[x=0.75pt,y=0.75pt,yscale=-1,xscale=1]

\draw   (172,149.65) .. controls (172,128.86) and (188.86,112) .. (209.65,112) .. controls (230.44,112) and (247.3,128.86) .. (247.3,149.65) .. controls (247.3,170.44) and (230.44,187.3) .. (209.65,187.3) .. controls (188.86,187.3) and (172,170.44) .. (172,149.65) -- cycle ;
\draw   (219,111.65) .. controls (219,90.86) and (235.86,74) .. (256.65,74) .. controls (277.44,74) and (294.3,90.86) .. (294.3,111.65) .. controls (294.3,132.44) and (277.44,149.3) .. (256.65,149.3) .. controls (235.86,149.3) and (219,132.44) .. (219,111.65) -- cycle ;
\draw   (267.65,150.3) .. controls (267.65,129.51) and (284.51,112.65) .. (305.3,112.65) .. controls (326.09,112.65) and (342.95,129.51) .. (342.95,150.3) .. controls (342.95,171.09) and (326.09,187.95) .. (305.3,187.95) .. controls (284.51,187.95) and (267.65,171.09) .. (267.65,150.3) -- cycle ;
\draw   (219.65,193.3) .. controls (219.65,172.51) and (236.51,155.65) .. (257.3,155.65) .. controls (278.09,155.65) and (294.95,172.51) .. (294.95,193.3) .. controls (294.95,214.09) and (278.09,230.95) .. (257.3,230.95) .. controls (236.51,230.95) and (219.65,214.09) .. (219.65,193.3) -- cycle ;
\draw   (207.98,151.62) .. controls (207.9,150.62) and (208.65,149.73) .. (209.65,149.65) .. controls (210.65,149.57) and (211.54,150.31) .. (211.62,151.32) .. controls (211.7,152.32) and (210.96,153.2) .. (209.95,153.29) .. controls (208.95,153.37) and (208.07,152.62) .. (207.98,151.62) -- cycle ;
\draw   (254.98,113.62) .. controls (254.9,112.62) and (255.65,111.73) .. (256.65,111.65) .. controls (257.65,111.57) and (258.54,112.31) .. (258.62,113.32) .. controls (258.7,114.32) and (257.96,115.2) .. (256.95,115.29) .. controls (255.95,115.37) and (255.07,114.62) .. (254.98,113.62) -- cycle ;
\draw   (303.63,152.27) .. controls (303.55,151.27) and (304.3,150.38) .. (305.3,150.3) .. controls (306.3,150.22) and (307.19,150.96) .. (307.27,151.97) .. controls (307.35,152.97) and (306.61,153.85) .. (305.6,153.94) .. controls (304.6,154.02) and (303.72,153.27) .. (303.63,152.27) -- cycle ;
\draw   (255.63,195.27) .. controls (255.55,194.27) and (256.3,193.38) .. (257.3,193.3) .. controls (258.3,193.22) and (259.19,193.96) .. (259.27,194.97) .. controls (259.35,195.97) and (258.61,196.85) .. (257.6,196.94) .. controls (256.6,197.02) and (255.72,196.27) .. (255.63,195.27) -- cycle ;

\draw (198,137) node [anchor=north west][inner sep=0.75pt]   [align=left] {{\large $a$}};
\draw (246,96) node [anchor=north west][inner sep=0.75pt]   [align=left] {{\large $b$}};
\draw (298,134) node [anchor=north west][inner sep=0.75pt]   [align=left] {{\large $c$}};
\draw (247,176) node [anchor=north west][inner sep=0.75pt]   [align=left] {{\large $d$}};

\end{tikzpicture}}
    \end{minipage}
         &  
    \begin{minipage}{0.3\textwidth}
    \centering
        \scalebox{0.75}{
     \tikzset{every picture/.style={line width=0.75pt}} 

\begin{tikzpicture}[x=0.75pt,y=0.75pt,yscale=-1,xscale=1]

\draw [line width=1.5]    (140,80) -- (200,80) ;
\draw [line width=1.5]    (170,110) -- (310,110) ;
\draw [line width=1.5]    (230,50) -- (290,50) ;
\draw [line width=1.5]    (260,80) -- (340,80) ;

\draw (137,62.4) node [anchor=north west][inner sep=0.75pt]    {{\Large $a$}};
\draw (171,92.4) node [anchor=north west][inner sep=0.75pt]    {{\Large $b$}};
\draw (237,32.4) node [anchor=north west][inner sep=0.75pt]    {{\Large $c$}};
\draw (267,62.4) node [anchor=north west][inner sep=0.75pt]    {{\Large $d$}};

\end{tikzpicture}}
    \end{minipage}     
         & 
         \begin{minipage}{0.3\textwidth}
         \centering
        \scalebox{0.95}{
     \tikzset{every picture/.style={line width=0.75pt}} 

\begin{tikzpicture}[x=0.75pt,y=0.75pt,yscale=-1,xscale=1]

\draw    (110,50) -- (150,100) ;
\draw    (150,100) -- (110,150) ;
\draw    (200,100) -- (150,100) ;
\draw    (200,150) -- (200,100) ;
\draw    (200,100) -- (240,50) ;
\draw    (250,150) -- (200,150) ;
\draw    (115.17,46.5) .. controls (117.51,46.77) and (118.54,48.08) .. (118.27,50.42) .. controls (118,52.76) and (119.03,54.07) .. (121.37,54.34) .. controls (123.71,54.61) and (124.75,55.92) .. (124.48,58.26) .. controls (124.21,60.6) and (125.24,61.91) .. (127.58,62.18) .. controls (129.92,62.45) and (130.95,63.76) .. (130.68,66.1) .. controls (130.41,68.44) and (131.45,69.75) .. (133.79,70.02) .. controls (136.13,70.29) and (137.16,71.6) .. (136.89,73.94) .. controls (136.62,76.28) and (137.65,77.59) .. (139.99,77.86) .. controls (142.33,78.13) and (143.37,79.44) .. (143.1,81.78) .. controls (142.83,84.12) and (143.86,85.43) .. (146.2,85.7) .. controls (148.54,85.97) and (149.58,87.28) .. (149.31,89.62) .. controls (149.04,91.96) and (150.07,93.27) .. (152.41,93.54) -- (153.17,94.5) -- (153.17,94.5) .. controls (154.84,92.83) and (156.5,92.83) .. (158.17,94.5) .. controls (159.84,96.17) and (161.5,96.17) .. (163.17,94.5) .. controls (164.84,92.83) and (166.5,92.83) .. (168.17,94.5) .. controls (169.84,96.17) and (171.5,96.17) .. (173.17,94.5) .. controls (174.84,92.83) and (176.5,92.83) .. (178.17,94.5) .. controls (179.84,96.17) and (181.5,96.17) .. (183.17,94.5) .. controls (184.84,92.83) and (186.5,92.83) .. (188.17,94.5) .. controls (189.84,96.17) and (191.5,96.17) .. (193.17,94.5) -- (196.17,94.5) -- (196.17,94.5) .. controls (195.9,92.16) and (196.93,90.85) .. (199.27,90.58) .. controls (201.61,90.31) and (202.64,89) .. (202.37,86.66) .. controls (202.1,84.32) and (203.14,83.01) .. (205.48,82.74) .. controls (207.82,82.47) and (208.85,81.16) .. (208.58,78.82) .. controls (208.31,76.48) and (209.34,75.17) .. (211.68,74.9) .. controls (214.02,74.63) and (215.06,73.32) .. (214.79,70.98) .. controls (214.52,68.64) and (215.55,67.33) .. (217.89,67.06) .. controls (220.23,66.79) and (221.26,65.48) .. (220.99,63.14) .. controls (220.72,60.8) and (221.76,59.49) .. (224.1,59.22) .. controls (226.44,58.95) and (227.47,57.64) .. (227.2,55.3) .. controls (226.93,52.96) and (227.97,51.65) .. (230.31,51.38) .. controls (232.65,51.11) and (233.68,49.8) .. (233.41,47.46) -- (234.17,46.5) -- (234.17,46.5) ;
\draw  [dash pattern={on 4.5pt off 4.5pt}]  (114.13,152.98) -- (153,104.58) -- (194,104.58) -- (194.13,154.98) -- (250.17,155.5) ;
\draw [line width=1.5]    (206.13,142.98) -- (222.13,142.98) -- (249.13,142.98) ;

\draw (122,38) node [anchor=north west][inner sep=0.75pt]    {$a$};
\draw (123,147.4) node [anchor=north west][inner sep=0.75pt]    {$b$};
\draw (208,127) node [anchor=north west][inner sep=0.75pt]    {$c$};

\end{tikzpicture}}
    \end{minipage}
    \\
    \begin{minipage}{0.3\textwidth}
        \centering
        \vspace{0.2cm}
        \scalebox{0.9}{
     \tikzset{every picture/.style={line width=0.75pt}} 

\begin{tikzpicture}[x=0.75pt,y=0.75pt,yscale=-1,xscale=1]

\draw [line width=1.5]    (108.48,150.82) -- (149.82,110.15) ;
\draw [line width=1.5]    (150.15,191.15) -- (191.48,150.48) ;
\draw [line width=1.5]    (149.82,110.15) -- (190.15,150.15) ;
\draw [line width=1.5]    (109.82,151.15) -- (150.15,191.15) ;
\draw  [fill={rgb, 255:red, 255; green, 255; blue, 255 }  ,fill opacity=1 ][line width=1.5]  (180,150.15) .. controls (180,144.54) and (184.54,140) .. (190.15,140) .. controls (195.76,140) and (200.3,144.54) .. (200.3,150.15) .. controls (200.3,155.76) and (195.76,160.3) .. (190.15,160.3) .. controls (184.54,160.3) and (180,155.76) .. (180,150.15) -- cycle ;
\draw  [fill={rgb, 255:red, 255; green, 255; blue, 255 }  ,fill opacity=1 ][line width=1.5]  (140,191.15) .. controls (140,185.54) and (144.54,181) .. (150.15,181) .. controls (155.76,181) and (160.3,185.54) .. (160.3,191.15) .. controls (160.3,196.76) and (155.76,201.3) .. (150.15,201.3) .. controls (144.54,201.3) and (140,196.76) .. (140,191.15) -- cycle ;
\draw  [fill={rgb, 255:red, 255; green, 255; blue, 255 }  ,fill opacity=1 ][line width=1.5]  (98.33,150.82) .. controls (98.33,145.21) and (102.88,140.67) .. (108.48,140.67) .. controls (114.09,140.67) and (118.63,145.21) .. (118.63,150.82) .. controls (118.63,156.42) and (114.09,160.97) .. (108.48,160.97) .. controls (102.88,160.97) and (98.33,156.42) .. (98.33,150.82) -- cycle ;
\draw  [fill={rgb, 255:red, 255; green, 255; blue, 255 }  ,fill opacity=1 ][line width=1.5]  (139.67,110.15) .. controls (139.67,104.54) and (144.21,100) .. (149.82,100) .. controls (155.42,100) and (159.97,104.54) .. (159.97,110.15) .. controls (159.97,115.76) and (155.42,120.3) .. (149.82,120.3) .. controls (144.21,120.3) and (139.67,115.76) .. (139.67,110.15) -- cycle ;

\draw (90,130) node [anchor=north west][inner sep=0.75pt]    {$a$};
\draw (130,91) node [anchor=north west][inner sep=0.75pt]    {$b$};
\draw (194,128) node [anchor=north west][inner sep=0.75pt]    {$c$};
\draw (166,182.4) node [anchor=north west][inner sep=0.75pt]    {$d$};

\end{tikzpicture}}
    \end{minipage}
         &  
    \begin{minipage}{0.3\textwidth}
    \centering
        \scalebox{0.9}{
     \tikzset{every picture/.style={line width=0.75pt}} 

\begin{tikzpicture}[x=0.75pt,y=0.75pt,yscale=-1,xscale=1]

\draw [line width=1.5]    (124.82,188.82) -- (149.82,110.15) ;
\draw [line width=1.5]    (124.82,188.82) -- (191.48,150.48) ;
\draw [line width=1.5]    (149.82,110.15) -- (190.15,150.15) ;
\draw [line width=1.5]    (84.48,148.82) -- (124.82,188.82) ;
\draw  [fill={rgb, 255:red, 255; green, 255; blue, 255 }  ,fill opacity=1 ][line width=1.5]  (180,150.15) .. controls (180,144.54) and (184.54,140) .. (190.15,140) .. controls (195.76,140) and (200.3,144.54) .. (200.3,150.15) .. controls (200.3,155.76) and (195.76,160.3) .. (190.15,160.3) .. controls (184.54,160.3) and (180,155.76) .. (180,150.15) -- cycle ;
\draw  [fill={rgb, 255:red, 255; green, 255; blue, 255 }  ,fill opacity=1 ][line width=1.5]  (114.67,188.82) .. controls (114.67,183.21) and (119.21,178.67) .. (124.82,178.67) .. controls (130.42,178.67) and (134.97,183.21) .. (134.97,188.82) .. controls (134.97,194.42) and (130.42,198.97) .. (124.82,198.97) .. controls (119.21,198.97) and (114.67,194.42) .. (114.67,188.82) -- cycle ;
\draw  [fill={rgb, 255:red, 255; green, 255; blue, 255 }  ,fill opacity=1 ][line width=1.5]  (74.33,148.82) .. controls (74.33,143.21) and (78.88,138.67) .. (84.48,138.67) .. controls (90.09,138.67) and (94.63,143.21) .. (94.63,148.82) .. controls (94.63,154.42) and (90.09,158.97) .. (84.48,158.97) .. controls (78.88,158.97) and (74.33,154.42) .. (74.33,148.82) -- cycle ;
\draw  [fill={rgb, 255:red, 255; green, 255; blue, 255 }  ,fill opacity=1 ][line width=1.5]  (139.67,110.15) .. controls (139.67,104.54) and (144.21,100) .. (149.82,100) .. controls (155.42,100) and (159.97,104.54) .. (159.97,110.15) .. controls (159.97,115.76) and (155.42,120.3) .. (149.82,120.3) .. controls (144.21,120.3) and (139.67,115.76) .. (139.67,110.15) -- cycle ;

\draw (66.67,128) node [anchor=north west][inner sep=0.75pt]    {$a$};
\draw (140,185.73) node [anchor=north west][inner sep=0.75pt]    {$b$};
\draw (159.33,90) node [anchor=north west][inner sep=0.75pt]    {$c$};
\draw (201,133.07) node [anchor=north west][inner sep=0.75pt]    {$d$};

\end{tikzpicture}}
    \end{minipage}     
         & 
    \begin{minipage}{0.3\textwidth}
    \centering
        \scalebox{0.9}{
     \tikzset{every picture/.style={line width=0.75pt}} 

\begin{tikzpicture}[x=0.75pt,y=0.75pt,yscale=-1,xscale=1]

\draw [line width=1.5]    (124.82,188.82) -- (183.33,188.67) ;
\draw [line width=1.5]    (97.82,130.15) -- (124.82,188.82) ;
\draw  [fill={rgb, 255:red, 255; green, 255; blue, 255 }  ,fill opacity=1 ][line width=1.5]  (173.18,188.67) .. controls (173.18,183.06) and (177.73,178.52) .. (183.33,178.52) .. controls (188.94,178.52) and (193.48,183.06) .. (193.48,188.67) .. controls (193.48,194.27) and (188.94,198.82) .. (183.33,198.82) .. controls (177.73,198.82) and (173.18,194.27) .. (173.18,188.67) -- cycle ;
\draw  [fill={rgb, 255:red, 255; green, 255; blue, 255 }  ,fill opacity=1 ][line width=1.5]  (114.67,188.82) .. controls (114.67,183.21) and (119.21,178.67) .. (124.82,178.67) .. controls (130.42,178.67) and (134.97,183.21) .. (134.97,188.82) .. controls (134.97,194.42) and (130.42,198.97) .. (124.82,198.97) .. controls (119.21,198.97) and (114.67,194.42) .. (114.67,188.82) -- cycle ;
\draw  [fill={rgb, 255:red, 255; green, 255; blue, 255 }  ,fill opacity=1 ][line width=1.5]  (91.67,138.82) .. controls (91.67,133.21) and (96.21,128.67) .. (101.82,128.67) .. controls (107.42,128.67) and (111.97,133.21) .. (111.97,138.82) .. controls (111.97,144.42) and (107.42,148.97) .. (101.82,148.97) .. controls (96.21,148.97) and (91.67,144.42) .. (91.67,138.82) -- cycle ;

\draw (115,125.07) node [anchor=north west][inner sep=0.75pt]    {$a$};
\draw (129,163) node [anchor=north west][inner sep=0.75pt]    {$b$};
\draw (184,164) node [anchor=north west][inner sep=0.75pt]    {$c$};

\end{tikzpicture}}
    \end{minipage} 
    \end{tabular}
        \caption{Illustrations of the different intersection graph classes mentioned. On each column, the geometric intersection model is on top, and the corresponding graph on the bottom.
        The left-most column is for unit-disk graph, the middle one for interval graphs, and the right-most for chordal graphs. For the chordal intersection model, the base tree uses plain edges, $a$ is the wavy subtree, $b$ is the dashed one, and $c$ the bold one. (This last intersection model is simplistic, to allow a readable drawing. In general the subtrees are not paths, and for this specific graph, a simpler geometric representation exists.)}
    \label{fig:intersection-graphs}
\end{figure}

A natural generalization of interval graphs are chordal graphs. Chordal graphs are one of the most-well studied classes in graph theory, with deep structures, fast algorithms, and many important applications \cite{Golumbic80, BlairP93, VandenbergheA15}. 
They are intersection graphs of subtrees of a tree. In other words, given a tree $T$, every vertex of a chordal graph $G$ is associated to a subtree of $T$ and two nodes of $G$ are adjacent if their corresponding subtrees intersect (see Figure~\ref{fig:intersection-graphs}).

Our main results are recoloring algorithms for interval and chordal graphs. Our techniques also yield coloring algorithms that improve on the state of the art. 

\subsection{Our results}

Our main results are recoloring algorithms for interval and chordal graphs. 
Let us describe a bit more formally what a distributed recoloring algorithm is. We use the definition introduced in~\cite{bonamy2018distributed}. A \emph{valid recoloring schedule} from a coloring $\alpha$ to a coloring $\beta$ consists in a sequence of colorings that starts with $\alpha$ and ends in $\beta$ such that, at every step, the coloring is proper and the vertices whose colors are modified at a given step (called the \emph{recolored vertices}) form an independent set of the graph. A distributed recoloring algorithm is an algorithm such that every node computes its own schedule. In this paper, we compute the schedule in the LOCAL model. Each node can check the validity of the schedule by comparing its own with its neighbors: they check that at each step, the coloring is locally proper, and that if it changes its own color during a step, none of its neighbors does the same.

In this paper, we will study how many \emph{rounds} are used in the LOCAL model to produce a schedule of some \emph{length}.
We first focus on interval graphs and then extend our result to chordal graphs.
We first prove the following:

\begin{restatable}{theorem}{ThmIntervalle}
\label{thm:recol-interval}
Let $G$ be an interval graph and $\alpha,\beta$ be two proper $k$-colorings of $G$. It is possible to find a schedule to transform $\alpha$ into $\beta$  in the LOCAL model in $O(\poly(\Delta) \logsn)$ rounds using at most:
\begin{itemize}
    \item $c$ additional colors, with $c = \omega-k+4$, if $k \le \omega+2$, with a schedule of length $\poly(\Delta)$,
    \item $1$ additional color if $k \ge \omega+3$, with a schedule of length $\poly(\Delta)$,
    \item no additional color if $k \ge 2\omega$ with a schedule of exponential-in-$\Delta$ length.
    \item no additional color if $k \ge 4\omega$ with a schedule of length $O(\omega\Delta)$.
\end{itemize}
\end{restatable}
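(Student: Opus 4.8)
\medskip
\noindent\textbf{Proof plan.}
The plan is to reduce the whole statement to recoloring \emph{from an arbitrary proper $k$-coloring to one fixed canonical coloring} $\gamma$. Take $\gamma$ to be a greedy-type coloring of $G$ with colors in $\{1,\dots,\omega+1\}$, which can be computed within the announced budget $O(\poly(\Delta)\logsn)$ (indeed in $O(\omega\logsn)$ rounds, by the improved interval-coloring algorithm of this paper). If $\sigma_\alpha$ is a schedule from $\alpha$ to $\gamma$ and $\sigma_\beta$ a schedule from $\beta$ to $\gamma$, then $\sigma_\alpha$ followed by the reverse of $\sigma_\beta$ is a valid schedule from $\alpha$ to $\beta$: its length is the sum of the two lengths, and the set of extra colors it uses is the union of those used by $\sigma_\alpha$ and $\sigma_\beta$, which we will be free to make coincide. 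So it is enough to build, for a single proper $k$-coloring, a schedule to $\gamma$ with the claimed length and number of extra colors.

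To get a LOCAL algorithm out of a centralized recoloring recipe, the plan is to cut the path of maximal cliques of $G$ into consecutive blocks of length $\poly(\Delta)$, using cut positions selected by a Linial-type $O(1)$-coloring of an auxiliary bounded-degree structure (this is where the $\poly(\Delta)\logsn$ rounds are spent), arranged so that the blocks fall into $O(1)$ classes whose same-class members are pairwise non-adjacent. The classes are then processed one after another. Each block, enlarged by a bounded-width collar on each side whose colors act as fixed boundary conditions, is a bounded-size (in $\Delta$) interval graph on which a single node can simulate a centralized schedule; since adjacent blocks must agree on the collar, the canonical coloring $\gamma$ is fixed on all collars beforehand and used as the common interface.

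It remains to solve the centralized core: on a bounded-size interval graph $H$ with prescribed boundary colors, recolor an arbitrary proper coloring of its interior to $\gamma$. The plan is a left-to-right sweep of the clique-subpath of $H$, processing one vertex at a time in order of left endpoint and maintaining the invariant that the already-visited vertices are colored as in $\gamma$ (colors in $\{1,\dots,\omega+1\}$) while the rest keep their input colors. Before recoloring a vertex $v$ to $\gamma(v)$ we must free the color $\gamma(v)$ among its not-yet-visited neighbors; this is done by a local ``push'' (a rotation of colors along the clique-subpath). With at least $\omega+4$ colors available — which holds in the first two regimes, since $k+(\omega-k+4)=\omega+4$ and $k+1\ge\omega+4$ respectively — a spare color can absorb such pushes without cascading, so each step costs $\poly(\Delta)$ and the whole sweep stays within $\poly(\Delta)$. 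When no spare color is available but $k\ge2\omega$, the pushes have to recurse inside smaller sub-blocks, multiplying the length and giving the exponential-in-$\Delta$ bound. When $k\ge4\omega$ there are two disjoint palettes of size $2\omega$, and one can afford a coarser two-palette strategy — first reach a copy of $\gamma$ on a palette disjoint from $\{1,\dots,\omega+1\}$, then move to $\gamma$ one color class at a time in $\omega+1$ independent-set steps, each class being cleared in $O(\Delta)$ steps — for a total of $O(\omega)$ passes and $O(\omega\Delta)$ length.

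I expect the main obstacle to be the interaction of the two reductions together with the tightness of the color accounting. The centralized sweep needs fixed boundary colors, but these are produced by the neighboring blocks' sweeps, so the common interface coloring $\gamma$ must be fixed globally before any recoloring starts and be simultaneously compatible with the spare-color (resp.\ two-palette) argument in every block; and one must check that collar effects do not inflate the number of extra colors beyond $\omega-k+4$ (resp.\ $1$, resp.\ $0$). Pinning down exactly why $2\omega$ colors force recursion while $4\omega$ suffice to avoid it — i.e.\ controlling how far a push can cascade as a function of the palette surplus — is the technical heart of the argument.
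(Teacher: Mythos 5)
Your high-level skeleton matches the paper's: reduce to a canonical $(\omega+1)$-coloring $\gamma$, cut the clique path into $\poly(\Delta)$-size blocks via a ruling set computed in $O(\logsn)$ rounds, simulate a centralized schedule inside each block, and sweep the interior while keeping a spare color. But there is a genuine gap at the block interfaces. You write that ``the canonical coloring $\gamma$ is fixed on all collars beforehand and used as the common interface,'' and you flag this as an obstacle without resolving it. It cannot be resolved by fiat: the collars start with arbitrary colors from $\alpha$, and turning them into $\gamma$ is itself a recoloring problem --- in fact it is \emph{the} hard one. Centralized arguments would use Kempe changes here, but a Kempe component can span the whole graph, so it cannot be flipped locally. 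The paper's main technical device (Lemma~\ref{lem:mainkempe} and Corollary~\ref{coro:kempe}) is a way to \emph{cut} a Kempe chain at bounded distance using the one additional color, recoloring a set $S$ of consecutive vertices to its target colors with a schedule of length $3|S|$ while touching nothing at distance more than $3|S|$. This is exactly where the extra color is spent, and Lemma~\ref{lem:lower-bound} shows some such mechanism is unavoidable: with $k<2\omega$ and no extra color the interior is frozen and no sublinear algorithm exists. Your proposal never supplies this step, so the boundary conditions your per-block sweeps rely on are never legally established.

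Two smaller issues. First, the claim that with $\omega+4$ colors ``a spare color can absorb pushes without cascading'' during the left-to-right sweep is precisely the content of the buffer-sliding result of~\cite{BousquetB19}, which the paper imports as a black box rather than reproving; asserting it does not discharge it, and this is where the constant $4$ in $\omega-k+4$ actually comes from. Second, for the no-extra-color regimes the paper's mechanism is not recursion of pushes inside sub-blocks but \emph{freeing a color}: the set of vertices strictly between two ruling-set boxes, minus its border, is $(2\omega-2)$-degenerate (Lemma~\ref{lem:2omegadegen}), so when $k\ge 2\omega$ one can recolor it to avoid color $2\omega$ entirely (with an exponential schedule in general, or one of length $O(\omega|V'|)$ when $k\ge 4\omega$ via the improved degeneracy bound of~\cite{BousquetP16}); a genuinely unused color is then available and the $k\ge\omega+3$ machinery takes over. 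Your $k\ge 4\omega$ sketch also leaves unexplained how to ``first reach a copy of $\gamma$ on a disjoint palette'' --- again a recoloring problem in itself --- whereas the paper's $O(\omega\Delta)$ bound comes from the color-reduction phase followed by $O(\omega)$ independent-set swaps between the two halves of the palette.
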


\medskip

We also prove in Lemma~\ref{lem:lower-bound} that the complexity of second item of Theorem~\ref{thm:recol-interval} is optimal, in the sense that with less than $2 \omega$ colors,
the number of rounds (as well as a schedule) must be linear in the size of the graph.
Basically, when the number of colors is smaller than $2\omega$, if we do not have additional colors, we might need to recolor vertices from the border of a graph in order to be able to recolor vertices in the middle. On the contrary, when the number is at least $2\omega$, we can save a color using only local modifications, and then proceed as if we had one additional unused color. 
Unfortunately, in order to free this additional color (Theorem~\ref{Thm2Omega}) we use a schedule of size exponential  in $\Delta$.\footnote{It is likely that the $O(n^{d+1})$-recoloring algorithm in the centralized setting for $d$-degenerate graphs can be adapted in order to provide a polynomial schedule instead, but we did not do it to keep the proof as short as possible.} 
On the positive side, this color saving step (and actually the whole recoloring) can be performed with a short schedule when $k \ge 4\omega$.

For the first item, as argued above, at least one additional is needed.
The colors that we use in addition to that one come from a result of~\cite{BousquetB19} that we use (almost) as a black-box (see the next section). 
If the number of colors could be decreased in the context of~\cite{BousquetB19}, then it could also be decreased in our work.
We left as an open problem the question of deciding if the numbers of additional colors can be reduced to 1. 

One can naturally wonder what are the dependencies in $\Delta$ and $\omega$ on our complexity and schedule lengths, as we did not give them explicitly. 
We are using as a black-box the recoloring result on chordal graphs of~\cite{BousquetB19}, which gives a centralized recoloring algorithm in $O(\omega^4 \cdot \Delta \cdot n)$ steps, where only one vertex can be recolored at each step. 
The schedule we produce contains as subsequence a constant number of such black-box schedules, corresponding to subgraphs  
containing about $O(\omega^4 \cdot \Delta^2)$ vertices. 
In order to keep the proofs as simple as possible, we did not try to optimize this polynomial function. 
In particular, the recoloring procedure of~\cite{BousquetB19} can probably be adapted in the LOCAL model with a parallel schedule shorter than $O(\omega^4 \cdot \Delta \cdot n)$.

Since for interval graphs, the gap between $\omega$ and $\Delta$ can be arbitrarily large, it would be interesting to determine if the recoloring schedule (as well as the number of rounds) can be reduced to a function of $\omega$ only.

\medskip

Our second main result consists in extending this recoloring result to chordal graphs. Namely, we prove:

\begin{restatable}{theorem}{ThmChordal}
\label{thm:recol-chordal}
Let $G$ be a chordal graph and $\alpha,\beta$ be two proper $k$-colorings of $G$. It is possible to find a schedule of length $n^{O(\log \Delta)}$ to transform $\alpha$ into $\beta$ in $O(\omega^2 \Delta^2 \log n)$ rounds in the LOCAL model using at most:
\begin{itemize}
    \item $c$ additional colors, with $c = \omega-k+4$, if $k \le \omega+2$,
    \item $1$ additional color if $k \ge \omega+3$.
\end{itemize}
\end{restatable}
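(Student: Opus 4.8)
The plan is to reduce the chordal case to the interval case (Theorem~\ref{thm:recol-interval}) via a recursive tree-decomposition argument, paying a $\log \Delta$ factor in the schedule length and a polynomial-in-$\omega,\Delta$ factor plus a $\log n$ factor in the round complexity. Recall that a chordal graph $G$ is the intersection graph of subtrees of a host tree $T$; equivalently, $G$ admits a clique tree. First I would compute, in $O(\log n)$ rounds of LOCAL (using the standard network-decomposition / ruler-style machinery, adapted since the clique tree has small diameter balls once we group vertices by cliques), a recursive decomposition of $G$ obtained by repeatedly removing ``separator cliques'': choose in each component a clique $K$ whose removal splits the component into pieces each of at most half the size. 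Each piece, together with its attachment to $K$, behaves essentially like an interval graph with a bounded boundary, so we can apply the interval recoloring routine locally to the pieces. The recursion has depth $O(\log n)$, but — crucially for the schedule length — at each level the \emph{schedule} only multiplies by the cost of recoloring one piece, which by Theorem~\ref{thm:recol-interval} is $\poly(\Delta)$; composing $O(\log \Delta)$-many nontrivial ``merge'' steps per level (one needs $O(\log \Delta)$ rounds of freeing colors near each separator clique, since a separator clique has $\le \omega$ vertices and each has $\le \Delta$ neighbors) yields the stated $n^{O(\log \Delta)}$ bound.

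Concretely, the steps in order are: (1) build the clique tree of $G$ and, within it, a balanced hierarchical separator decomposition, spending $O(\omega^2\Delta^2\log n)$ rounds — the $\omega^2\Delta^2$ factor accounting for simulating one round of computation on the ``clique graph'' by $O(\omega\Delta)$ rounds in $G$, and the computation of balanced separators in the clique tree taking $O(\omega\Delta)$ such simulated rounds per recursion level over $O(\log n)$ levels; (2) at the bottom of the recursion, pieces are small enough (their boundary is a single clique of size $\le \omega$) that the residual graph is interval-like, so invoke Theorem~\ref{thm:recol-interval} with the stated number $c = \omega - k + 4$ (resp.\ $1$) of additional colors to recolor each piece while its boundary clique is held fixed; (3) to merge two pieces across a separator clique $K$, first use the extra colors to recolor, in $\poly(\Delta)$ steps, the pieces so that they agree with a common ``interface coloring'' of $K$ compatible with both $\alpha$ and $\beta$, exactly as one unifies colorings along a separator in the centralized chordal recoloring of~\cite{BousquetB19}, which we invoke essentially as a black box on a subgraph of size $O(\omega^4\Delta^2)$; (4) unwind the recursion, composing the schedules. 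Validity (properness at each step, recolored set independent) is inherited step-by-step: inside a piece it follows from Theorem~\ref{thm:recol-interval}, and across a separator we only ever recolor vertices on one side at a time while the clique $K$ stays fixed, so no conflict is created.

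The main obstacle I expect is controlling the schedule length through the recursion without it blowing up to $n^{\Theta(\log n)}$: a naive composition would multiply schedule lengths at every one of the $O(\log n)$ levels, and the point is that the ``interval'' subroutine returns a schedule of length only $\poly(\Delta)$, so the product over $O(\log n)$ levels is $\poly(\Delta)^{O(\log n)} = n^{O(\log \Delta)}$, matching the claim — but making this precise requires showing that the boundary of each recursive piece never grows beyond a single separator clique (so the interval subroutine really applies with a bounded number of frozen vertices), which is where the clique-tree structure (not merely chordality) is essential. A secondary technical point is the simulation overhead: one LOCAL round on the clique tree costs $O(\omega\Delta)$ rounds on $G$ because a clique has up to $\omega$ vertices each with up to $\Delta$ neighbors; combined with $O(\log n)$ recursion levels and an $O(\omega\Delta)$-round balanced-separator computation per level, this gives the $O(\omega^2\Delta^2\log n)$ round bound. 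Everything else — the number of additional colors — is inherited verbatim from Theorem~\ref{thm:recol-interval} applied to the interval-like pieces, together with the color-unification step of~\cite{BousquetB19} across separators, which uses no colors beyond those already available.
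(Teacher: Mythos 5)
There is a genuine gap at the heart of your reduction: the pieces produced by a balanced clique-separator decomposition of a chordal graph are themselves chordal, not interval. Removing a single separator clique $K$ corresponds to removing one node of the clique tree, and the remaining subtrees can still branch arbitrarily; so each piece is a general chordal graph with a single boundary clique, and Theorem~\ref{thm:recol-interval} does not apply to it. Having "boundary equal to one clique of size $\le\omega$" does not make a chordal piece interval-like, and your recursion never reaches a stage where the interval subroutine is legitimately invocable. The paper avoids this by using a rake-and-compress decomposition of the clique tree (Lemma~\ref{lem+decomp+chordal}, adapted from~\cite{KonradZ19}): at each of the $O(\log n)$ levels it peels off exactly the vertices living on degree-$\le 2$ portions of the clique tree, namely pending interval graphs (branches leading to leaves) and long separating interval graphs (long paths). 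These pieces \emph{are} interval graphs of diameter $O(D)$ with their border confined to one or two cliques, which is precisely what makes the interval machinery (Lemma~\ref{lem+interval+border}) applicable.

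A second gap is the merge step. You propose to recolor each side of a separator clique $K$ while "$K$ stays fixed", but the vertices of $K$ must themselves move from $\alpha$ to $\beta$ at some point, and at that moment every neighbor of $K$ in every incident piece must avoid the colors being introduced; agreeing beforehand on a "common interface coloring of $K$" does not resolve this. The paper's solution is an interleaved schedule: the schedule $\lambda_{i-1}$ for the inner graph $G_{i-1}$ is normalized so that every vertex recolored at step $j$ receives color $j \bmod k'$, and then $t=\poly(\Delta)$ steps recoloring only the newly added layer $V_i$ are inserted before each step of $\lambda_{i-1}$, so the new layer can always be moved off the color about to appear on its boundary. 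This interleaving is also exactly where the $\poly(\Delta)^{O(\log n)} = n^{O(\log\Delta)}$ schedule length comes from; your final product-over-levels computation is correct, but the "$O(\log\Delta)$ merge steps per level" accounting that precedes it does not correspond to anything in a working argument.
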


\medskip

Note that while the recoloring schedule of Theorem~\ref{thm:recol-interval} is polynomial, the one obtained in Theorem~\ref{thm:recol-chordal} is superpolynomial in $\Delta$ (but the number of rounds remains polynomial).  We left as an open problem the existence of polynomial schedule.
In terms of number of rounds, we also pay a logarithmic factor in comparison to the $O(\logsn)$ rounds used for interval graphs. It would be interesting to know if there exists a $(\omega+x)$-coloring algorithm for chordal graphs for a constant $x$, that only needs $O(\log^* n)$ rounds.
We will explain in the subsection that follows, that these differences in schedule length and running time are inherent to our approach, and going beyond those would require new techniques.

\medskip

While designing tools for recoloring, we also produce improved algorithms for the coloring problem. In a nutshell, it was known how to get $(\omega+2)$-colorings for interval and chordal graphs, and we improve this to $(\omega+1)$ colors.
More precisely, it was proved in~\cite{HalldorssonK20} that an $(\omega+2)$-coloring of interval graphs can be obtained in the LOCAL model in $O(\omega \logsn)$ rounds.\footnote{Actually, in \cite{HalldorssonK20} the result is stated as a $(1+\epsilon)$-approximation of the optimal $\omega$-coloring, with the condition that $\epsilon\geq 2/\omega$, which is equivalent to an $(\omega+2)$-coloring.} Our first result consists in improving the result of Halldorsson and Konrad~\cite{HalldorssonK20} by proving:

\begin{restatable}{theorem}{ThmColInterval}
\label{thm:col-interval}
Interval graphs can be colored with $(\omega+1)$-colors in $O(\omega \logsn)$ rounds in the LOCAL model.
\end{restatable}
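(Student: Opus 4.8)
The plan is to combine a known $O(\log^*\! n)$-round ruler/clustering technique for interval graphs with a careful greedy coloring that wastes only one color, rather than two. First I would recall the structure exploited by Halldorsson and Konrad~\cite{HalldorssonK20}: using the fact that interval graphs are (after computing a suitable representation locally, or simulating one) close to being ordered by left endpoints, one can in $O(\omega\logsn)$ rounds partition the vertex set into a collection of ``blocks'' or ``clusters'' $B_1, B_2, \dots$ laid out along the line, together with small separators between consecutive clusters, such that (i) each cluster has bounded ``width'' in terms of $\omega$, so a proper coloring inside a cluster can be computed locally with few rounds, and (ii) consecutive clusters interact only through a bounded interface. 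The $\omega\logsn$ factor comes from iteratively breaking ties / building a Cole--Vishkin-style ruling on the line and then doing $O(\omega)$ rounds of local cleanup per level.

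The key improvement to reach $\omega+1$ instead of $\omega+2$ is in how colors are reconciled across cluster boundaries. With $\omega+2$ colors one has two ``spare'' colors and can afford to independently $\omega$-color each cluster and then fix the $O(1)$ boundary cliques by rerouting onto the spares without any coordination. To save a color, I would instead process the clusters in a single left-to-right sweep in the LOCAL model: color cluster $B_1$ optimally with colors $\{1,\dots,\omega\}$; then when coloring $B_i$, treat the (at most $\omega-1$) already-colored vertices of the separator on its left boundary as a partial precoloring, and extend it greedily through $B_i$. Because an interval graph is chordal, a perfect elimination ordering aligned with the line lets a greedy extension of a partial coloring of a clique succeed with only one extra color: at each vertex the number of already-colored neighbors is at most $\omega-1$ from earlier plus the conflicts inside the current clique, which a single $(\omega+1)$-st color absorbs. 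Formally I would argue that the ``state'' passed between consecutive clusters is just the set of colors used on the current boundary clique, which has size $\le\omega$; feeding this into the next cluster's local coloring routine and using one reserved color to break the (bounded) chain of dependencies keeps everything proper with $\omega+1$ colors. Since the number of clusters can be $\Omega(n)$, the sweep itself cannot be done sequentially — instead I would use the standard trick of noting that the dependency chain of boundary states, after the $\logsn$-round ruling, has only $O(1)$ (or $\poly\log$) length within each ``super-block,'' or alternatively re-derive the boundary color sets from a globally consistent rule (e.g. parity of the block index combined with a canonical local coloring) so that no long-range communication is needed.

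The main obstacle I expect is precisely this last point: making the left-to-right reconciliation \emph{local}. A naive sweep needs information to propagate across all clusters, which is $\Theta(n)$ rounds, so the real content is showing that the choice of which colors get ``shifted'' at each boundary can be fixed by a rule depending only on $O(\logsn)$-radius information — e.g.\ by using the ruling from the Halldorsson--Konrad construction to 2-color (or $O(1)$-color) the clusters themselves, then coloring all clusters of one class simultaneously and ``inserting'' the clusters of the other class into the gaps using the one spare color. I would need to check that with only a single spare color this insertion never creates a conflict, which reduces to a small case analysis on a boundary clique of size $\le\omega$ whose two sides are each already properly colored with $\{1,\dots,\omega\}$: any such pair of precolorings of a clique differ by a permutation, and a single extra color plus a local re-permutation within the clique suffices to merge them, all computable from the clique itself. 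Once this local merging lemma is in place, the round complexity is just the $O(\omega\logsn)$ of the underlying clustering plus $O(\omega)$ rounds for the per-cluster local coloring and merging, giving the claimed bound.
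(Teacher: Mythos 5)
Your high-level architecture matches the paper's: compute an $O(\logsn)$-round decomposition into boxes along the line, select a sparse subfamily of boxes, color those independently with $\omega$ colors, and reconcile the colorings in the gaps using the single spare color. However, the step you reduce to ``a small case analysis on a boundary clique of size $\le\omega$ whose two sides are each already properly colored'' is exactly where the real difficulty lies, and as stated it fails. Two independently chosen $\omega$-colorings of the two ends of a gap differ by an arbitrary permutation of the $\omega$ colors, and a single extra color cannot absorb such a mismatch locally: in the $(\omega-1)$-th power of a path, any $\omega$ consecutive vertices form a clique, so an $(\omega+1)$-coloring of a short stretch between two precolored cliques is forced up to one ``free'' color per window, and the induced permutation can only drift by roughly one transposition per $\Theta(\omega)$ vertices. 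If the two precolored cliques are close and their colorings differ by a permutation requiring $\omega-1$ transpositions, no $(\omega+1)$-coloring of the gap is consistent with both ends. So neither a single-clique merge nor a naive greedy left-to-right extension (which only handles one precolored side) can work.

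The paper's fix is quantitative: it computes a $(3\omega,6\omega)$-ruling set on the \emph{path of boxes} (this is where the $O(\omega\logsn)$ rounds come from, via iterated Cole--Vishkin on the virtual path), so that any two selected boxes are separated by at least $3\omega$ intermediate boxes. It then writes the permutation between the two boundary colorings as a product of at most $\omega$ transpositions and realizes each transposition over a stretch of three boxes, using the $(\omega+1)$-st color as a temporary buffer to decouple the two occurrences of the swapped pair (Lemma~\ref{lem:switch}, composed in Lemma~\ref{lem:coloring-interval-border}). Your proposal is missing both the spacing requirement on the selected clusters and the transposition-by-transposition construction that exploits it; without them the ``insertion'' of the second class of clusters can genuinely be infeasible with only $\omega+1$ colors.
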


Note that $(\omega+1)$ is the best we can hope for in sublinear time in paths (which are interval graphs with $\omega=2$), since paths cannot be colored with $2$ colors in less than $\Omega(n)$ rounds~\cite{Linial92, BarenboimE13}. 
This can actually be generalized for any fixed $\omega$, by considering the $(\omega-1)$-th power of a path. The \emph{$k$-th power of a path} is a graph with vertex set $v_1,\ldots,v_n$ where two vertices $v_i,v_j$ are adjacent if and only if $|i-j| \le k$. Hence, each maximal clique corresponds to $\omega$ consecutive nodes. 
An $\omega$-coloring of such a graph would require the algorithm to put a same color every $\omega$ nodes in the path, which again requires $\Omega(n)$ communication rounds.

For chordal graphs, Konrad and Zamaraev~\cite{KonradZ19} proved that an $(\omega+2)$-coloring can be found in $O(\omega \log n)$ steps. Again, we can reduce the number of colors to $\omega+1$. 

\begin{restatable}{theorem}{ThmColChordal}
\label{thm:col-chordal}
Chordal graphs can be colored with $(\omega+1)$-colors in $O(\omega \log n)$ rounds in the LOCAL model.
\end{restatable}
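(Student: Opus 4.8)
The plan is to mirror the two-phase strategy behind Theorem~\ref{thm:col-interval}, replacing the real line by a clique tree of the chordal graph, and to use the local-recolouring machinery developed earlier in the paper to shave one of the two spare colours used by the $(\omega+2)$-colouring algorithm of Konrad and Zamaraev~\cite{KonradZ19}.

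\emph{Decomposition phase.} Recall that a chordal graph $G$ has a clique tree $T$: its nodes are the maximal cliques of $G$, and for every vertex $v$ the bags containing $v$ induce a subtree $T_v$. Running a rake-and-compress (tree-contraction) schedule on $T$, one obtains in $O(\log n)$ rounds a stratification of the bags into $O(\log n)$ levels such that, peeling levels from the top, each level induces a disjoint union of \emph{clusters} of $O(1)$ bags; a cluster has $O(\omega)$ vertices and $O(1)$ diameter in $G$, so a single node can gather it and compute an optimal $\omega$-colouring of the corresponding subgraph locally. Anchoring each vertex $v$ to the cluster containing the highest bag of $T_v$ turns this into an elimination-like order at the granularity of clusters: when a cluster is coloured, the already-coloured neighbours of its vertices lie in strictly higher levels and are confined to a single ancestor bag, hence they occupy a clique together with the cluster. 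This is the property that lets one colour cluster by cluster, and it is what gives the $(\omega+2)$-colouring of~\cite{KonradZ19} in $O(\omega\log n)$ rounds, the ``$+2$'' paying for recolouring the at most one boundary clique a cluster shares with its parent at each of the $O(\log n)$ merges.

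\emph{Colour-saving phase.} To get down to one spare colour, I would first $2$-colour the cluster-adjacency graph at each level so that the merges become directed: every cluster is glued into at most one parent, and distinct parents at a given level touch pairwise-disjoint boundary cliques. Along one glue the two palettes are both $\{1,\dots,\omega\}$ and the only conflict is inside the shared boundary clique, which has at most $\omega$ vertices; I can dissolve it by a Kempe-chain / local recolouring inside the union of the two clusters --- a chordal graph on $O(\omega)$ vertices --- that frees a single colour, exactly the local analogue of the $k\ge 2\omega$ step of Theorem~\ref{thm:recol-interval} (or, as a black box, the centralized recolouring of~\cite{BousquetB19} applied to this bounded subgraph). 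Since at a fixed level the boundary cliques are disjoint and the $O(\log n)$ levels can be handled one after another while recycling the single fresh colour $\omega+1$, the reconciliation adds only $O(\omega\log n)$ rounds and never needs a second spare colour. Altogether this yields an $(\omega+1)$-colouring in $O(\omega\log n)$ rounds.

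The main obstacle is the interaction between a single vertex and several levels: a vertex lies in many bags, these bags may be split across clusters and levels, so ``colour this cluster optimally'' is not independent of the clusters below it. The work is to show that anchoring to the top bag of $T_v$ yields a genuine elimination structure on clusters, that at a merge the colours forbidden from above together with the boundary clique still fit into $\omega$ colours --- so that one spare colour really is enough rather than two --- and that the rake-and-compress schedule can be tuned to output clusters of bounded size in $O(\log n)$ rounds. Everything else is the local-recolouring bookkeeping already set up for Theorems~\ref{thm:recol-interval} and~\ref{thm:recol-chordal}.
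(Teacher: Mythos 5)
There is a genuine gap, and it sits exactly where the real difficulty of this theorem lies. Your decomposition phase implicitly assumes that every cluster, when it is colored, sees its already-colored neighbours ``confined to a single ancestor bag, hence \ldots a clique together with the cluster.'' That is only true for the \emph{rake} part of rake-and-compress (pending subtrees attached at one bag). To terminate in $O(\log n)$ levels you must also \emph{compress} long paths of the clique tree, and a compressed component is a long interval-like subgraph whose border lies in \emph{two} cliques, one at each end, both already colored independently of each other. For such a component the conflict is not localized in one shared boundary clique of $O(\omega)$ vertices: you must extend two \emph{arbitrary} prescribed $\omega$-colorings of the two end-cliques to a coloring of the whole segment. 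This cannot in general be done with $\omega$ colors, and your proposed fix --- a Kempe-chain recoloring inside the union of two $O(\omega)$-vertex clusters --- does not work here either: Kempe components are not confined to a bounded neighbourhood (they can run arbitrarily far into the already-colored region), and the ``$k\ge 2\omega$'' color-saving step of Theorem~\ref{thm:recol-interval} that you invoke as an analogue requires $2\omega$ available colors, which you do not have.

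The paper resolves exactly this point by (i) forcing every \emph{separator} component of the decomposition to have diameter at least $D=15(\omega+1)$ (Lemma~\ref{lem+decomp+chordal}), and (ii) proving Lemma~\ref{lem:coloring-interval-border}: for any two $k$-colorings $\alpha,\beta$ of an interval segment spanning at least $3k$ boxes, there is a $(k{+}1)$-coloring agreeing with $\alpha$ on one end and $\beta$ on the other. The mechanism is to decompose the palette mismatch into at most $k$ transpositions and realize each one across three consecutive boxes, using the single extra color as a moving buffer (Lemma~\ref{lem:switch}); this is why the segment must have length $\Omega(\omega)$ in boxes and why the round complexity is $O(\omega\log n)$ rather than $O(\log n)$. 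Your outline contains no substitute for this interpolation lemma, and the sentence you flag as ``the work to be done'' (that the forbidden colors from above plus the boundary clique still fit into $\omega$ colors) is precisely the claim that is false in the two-sided case. The rake case of your argument is fine and matches the paper's treatment of pending interval graphs; the missing ingredient is the two-ended reconciliation.
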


Again, the dependency in $n$ is optimal for $\omega=2$ because (unoriented) trees cannot be 3-colored in $o(\log n)$ rounds \cite{Linial92, BarenboimE13}. We leave as an open question if $\Omega(\log n)$ communication rounds are required to produce an $(\omega+1)$-coloring for $\omega>2$.

\subsection{Our techniques}

Our algorithms are using graph decompositions. 
Basically, we first split the graph into components of controlled diameter, and then work on the different components, taking care of not creating conflicts at the borders.  

To get more into the details, let us start with coloring. For interval graphs, we use a variation of a decomposition from~\cite{HalldorssonK20}. It consists in proving that we can find some subsets of vertices that cut the interval graphs into parts whose diameter is large enough, but not too large, in $O(\logsn)$ rounds (with a multiplicative factor depending on the diameter of each part). See Figure~\ref{fig:interval-decomposition}.
Our contribution is in the second part of the algorithm. 
We show that we can start from an arbitrary coloring of the cuts, and extend this partial coloring into a tight $(\omega+1)$-coloring of $G$, whereas previous algorithms needed some extra slack in the number of colors. 

\begin{figure}[!h]
    \centering
    \scalebox{0.8}{
    \tikzset{every picture/.style={line width=0.75pt}} 

\begin{tikzpicture}[x=0.75pt,y=0.75pt,yscale=-1,xscale=1]

\draw [color={rgb, 255:red, 201; green, 201; blue, 201 }  ,draw opacity=1 ][fill={rgb, 255:red, 155; green, 155; blue, 155 }  ,fill opacity=1 ][line width=4.5]    (490,100) -- (490,220) ;
\draw [color={rgb, 255:red, 201; green, 201; blue, 201 }  ,draw opacity=1 ][fill={rgb, 255:red, 155; green, 155; blue, 155 }  ,fill opacity=1 ][line width=4.5]    (300,100) -- (300,220) ;
\draw [line width=2.25]    (60,150) -- (100,150) ;
\draw [line width=2.25]    (160,130) -- (220,130) ;
\draw [color={rgb, 255:red, 201; green, 201; blue, 201 }  ,draw opacity=1 ][fill={rgb, 255:red, 155; green, 155; blue, 155 }  ,fill opacity=1 ][line width=4.5]    (120,100) -- (120,220) ;
\draw [line width=2.25]    (90,170) -- (180,170) ;
\draw [line width=2.25]    (80,130) -- (130,130) ;
\draw [line width=2.25]    (160,190) -- (220,190) ;
\draw [line width=2.25]    (200,150) -- (237,150) -- (270,150) ;
\draw [line width=2.25]    (250,130) -- (320,130) ;
\draw [line width=2.25]    (250,190) -- (310,190) ;
\draw [line width=2.25]    (280,170) -- (370,170) ;
\draw [line width=2.25]    (405,150) -- (450,150) ;
\draw [line width=2.25]    (350,190) -- (420,190) ;
\draw [line width=2.25]    (340,130) -- (390,130) ;
\draw [line width=2.25]    (430,130) -- (520,130) ;
\draw [line width=2.25]    (470,170) -- (510,170) ;
\draw [line width=2.25]    (470,190) -- (530,190) ;

\end{tikzpicture}}
    \vspace{0.4cm}
    
    \scalebox{0.8}{
    \tikzset{every picture/.style={line width=0.75pt}} 

\begin{tikzpicture}[x=0.75pt,y=0.75pt,yscale=-1,xscale=1]

\draw [line width=1.5]    (60,150) -- (95,130) ;
\draw [line width=1.5]    (60,150) -- (120,170) ;
\draw [line width=1.5]    (95,130) -- (120,170) ;
\draw [line width=1.5]    (180,130) -- (120,170) ;
\draw [line width=1.5]    (180,190) -- (120,170) ;
\draw [line width=1.5]    (218.5,150) -- (180,190) ;
\draw [line width=1.5]    (218.5,150) -- (180,130) ;
\draw [line width=1.5]    (180,130) -- (180,190) ;
\draw [line width=1.5]    (270,130) -- (218.5,150) ;
\draw [line width=1.5]    (270,190) -- (218.5,150) ;
\draw [line width=1.5]    (270,130) -- (270,190) ;
\draw [line width=1.5]    (270,130) -- (310,170) ;
\draw [line width=1.5]    (310,170) -- (270,190) ;
\draw [line width=1.5]    (360,130) -- (310,170) ;
\draw [line width=1.5]    (360,190) -- (310,170) ;
\draw [line width=1.5]    (360,190) -- (395,150) ;
\draw [line width=1.5]    (360,190) -- (360,130) ;
\draw [line width=1.5]    (395,150) -- (450,130) ;
\draw [line width=1.5]    (470,190) -- (450,130) ;
\draw [line width=1.5]    (490,170) -- (450,130) ;
\draw [line width=1.5]    (470,190) -- (490,170) ;
\draw  [fill={rgb, 255:red, 255; green, 255; blue, 255 }  ,fill opacity=1 ][line width=1.5]  (50,150) .. controls (50,144.48) and (54.48,140) .. (60,140) .. controls (65.52,140) and (70,144.48) .. (70,150) .. controls (70,155.52) and (65.52,160) .. (60,160) .. controls (54.48,160) and (50,155.52) .. (50,150) -- cycle ;
\draw  [fill={rgb, 255:red, 155; green, 155; blue, 155 }  ,fill opacity=1 ][line width=1.5]  (85,130) .. controls (85,124.48) and (89.48,120) .. (95,120) .. controls (100.52,120) and (105,124.48) .. (105,130) .. controls (105,135.52) and (100.52,140) .. (95,140) .. controls (89.48,140) and (85,135.52) .. (85,130) -- cycle ;
\draw  [fill={rgb, 255:red, 155; green, 155; blue, 155 }  ,fill opacity=1 ][line width=1.5]  (110,170) .. controls (110,164.48) and (114.48,160) .. (120,160) .. controls (125.52,160) and (130,164.48) .. (130,170) .. controls (130,175.52) and (125.52,180) .. (120,180) .. controls (114.48,180) and (110,175.52) .. (110,170) -- cycle ;
\draw  [fill={rgb, 255:red, 255; green, 255; blue, 255 }  ,fill opacity=1 ][line width=1.5]  (170,130) .. controls (170,124.48) and (174.48,120) .. (180,120) .. controls (185.52,120) and (190,124.48) .. (190,130) .. controls (190,135.52) and (185.52,140) .. (180,140) .. controls (174.48,140) and (170,135.52) .. (170,130) -- cycle ;
\draw  [fill={rgb, 255:red, 255; green, 255; blue, 255 }  ,fill opacity=1 ][line width=1.5]  (170,190) .. controls (170,184.48) and (174.48,180) .. (180,180) .. controls (185.52,180) and (190,184.48) .. (190,190) .. controls (190,195.52) and (185.52,200) .. (180,200) .. controls (174.48,200) and (170,195.52) .. (170,190) -- cycle ;
\draw  [fill={rgb, 255:red, 255; green, 255; blue, 255 }  ,fill opacity=1 ][line width=1.5]  (208.5,150) .. controls (208.5,144.48) and (212.98,140) .. (218.5,140) .. controls (224.02,140) and (228.5,144.48) .. (228.5,150) .. controls (228.5,155.52) and (224.02,160) .. (218.5,160) .. controls (212.98,160) and (208.5,155.52) .. (208.5,150) -- cycle ;
\draw  [fill={rgb, 255:red, 155; green, 155; blue, 155 }  ,fill opacity=1 ][line width=1.5]  (260,130) .. controls (260,124.48) and (264.48,120) .. (270,120) .. controls (275.52,120) and (280,124.48) .. (280,130) .. controls (280,135.52) and (275.52,140) .. (270,140) .. controls (264.48,140) and (260,135.52) .. (260,130) -- cycle ;
\draw  [fill={rgb, 255:red, 155; green, 155; blue, 155 }  ,fill opacity=1 ][line width=1.5]  (260,190) .. controls (260,184.48) and (264.48,180) .. (270,180) .. controls (275.52,180) and (280,184.48) .. (280,190) .. controls (280,195.52) and (275.52,200) .. (270,200) .. controls (264.48,200) and (260,195.52) .. (260,190) -- cycle ;
\draw  [fill={rgb, 255:red, 155; green, 155; blue, 155 }  ,fill opacity=1 ][line width=1.5]  (300,170) .. controls (300,164.48) and (304.48,160) .. (310,160) .. controls (315.52,160) and (320,164.48) .. (320,170) .. controls (320,175.52) and (315.52,180) .. (310,180) .. controls (304.48,180) and (300,175.52) .. (300,170) -- cycle ;
\draw  [fill={rgb, 255:red, 255; green, 255; blue, 255 }  ,fill opacity=1 ][line width=1.5]  (350,130) .. controls (350,124.48) and (354.48,120) .. (360,120) .. controls (365.52,120) and (370,124.48) .. (370,130) .. controls (370,135.52) and (365.52,140) .. (360,140) .. controls (354.48,140) and (350,135.52) .. (350,130) -- cycle ;
\draw  [fill={rgb, 255:red, 255; green, 255; blue, 255 }  ,fill opacity=1 ][line width=1.5]  (350,190) .. controls (350,184.48) and (354.48,180) .. (360,180) .. controls (365.52,180) and (370,184.48) .. (370,190) .. controls (370,195.52) and (365.52,200) .. (360,200) .. controls (354.48,200) and (350,195.52) .. (350,190) -- cycle ;
\draw  [fill={rgb, 255:red, 255; green, 255; blue, 255 }  ,fill opacity=1 ][line width=1.5]  (385,150) .. controls (385,144.48) and (389.48,140) .. (395,140) .. controls (400.52,140) and (405,144.48) .. (405,150) .. controls (405,155.52) and (400.52,160) .. (395,160) .. controls (389.48,160) and (385,155.52) .. (385,150) -- cycle ;
\draw  [fill={rgb, 255:red, 155; green, 155; blue, 155 }  ,fill opacity=1 ][line width=1.5]  (440,130) .. controls (440,124.48) and (444.48,120) .. (450,120) .. controls (455.52,120) and (460,124.48) .. (460,130) .. controls (460,135.52) and (455.52,140) .. (450,140) .. controls (444.48,140) and (440,135.52) .. (440,130) -- cycle ;
\draw  [fill={rgb, 255:red, 155; green, 155; blue, 155 }  ,fill opacity=1 ][line width=1.5]  (480,170) .. controls (480,164.48) and (484.48,160) .. (490,160) .. controls (495.52,160) and (500,164.48) .. (500,170) .. controls (500,175.52) and (495.52,180) .. (490,180) .. controls (484.48,180) and (480,175.52) .. (480,170) -- cycle ;
\draw  [fill={rgb, 255:red, 155; green, 155; blue, 155 }  ,fill opacity=1 ][line width=1.5]  (460,190) .. controls (460,184.48) and (464.48,180) .. (470,180) .. controls (475.52,180) and (480,184.48) .. (480,190) .. controls (480,195.52) and (475.52,200) .. (470,200) .. controls (464.48,200) and (460,195.52) .. (460,190) -- cycle ;

\end{tikzpicture}}
    \caption{Illustration of the interval decomposition of \cite{HalldorssonK20}. On top the interval representation, and on the bottom, the graph. The decomposition consists in choosing cliques (the gray nodes) whose removal decomposes the graphs into subgraphs of controlled diameter (the white nodes). Note that the graph can then be seen as a path alternating between gray cliques and white subgraphs.
    What we do is slightly different, as our cuts are not cliques, but the intuition is the same.}
    \label{fig:interval-decomposition}
\end{figure}
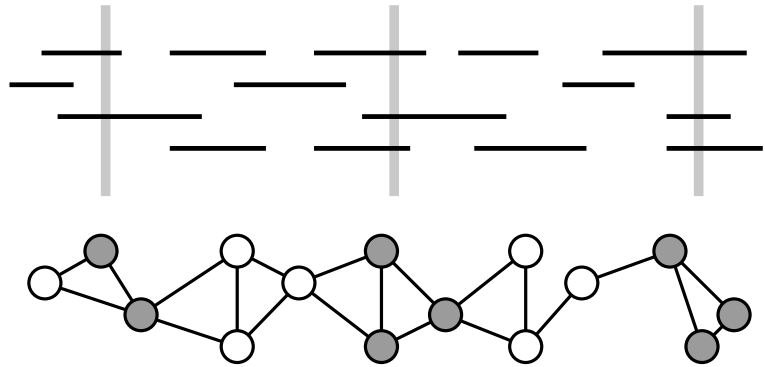

For chordal graph, \cite{KonradZ19} provides a recursive decomposition into interval graphs, that we adapt to our setting. The idea is to use a rake-and-compress strategy (in the spirit of \cite{MillerR89}) to decompose the graph in $O(\log n)$ layers, each layer being a union of interval graphs. 
A typical phase of rake-and-compress on a tree consists in removing the leaves (the rake part), and removing or contracting the long paths (the compress part). 
After $O(\log n)$ such phases, the tree is empty. 
In our algorithm, the tree structure is the underlying tree of the chordal graph, like in Figure~\ref{fig:intersection-graphs}.\footnote{Actually, there are two such trees, the one that correspond to the geometric representation, that we use in this introduction, and the clique tree, which has a more graph-theoretic definition, and that we use in the proofs. The two are essentially equivalent.}
We consider here as leaves the pending interval subgraphs. These are subgraphs made by taking the intervals that are on the branches of the tree leading to a leaf (see Fig.~\ref{fig:chordal-decomposition}). On the other hand, long paths correspond to long separating interval subgraphs, whose removal disconnects the graph, and that have large enough diameter.

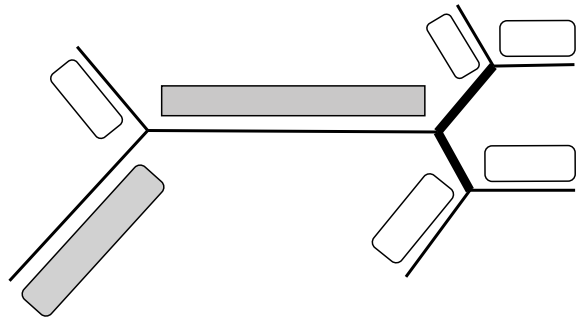
\begin{figure}[!h]
    \centering
    \scalebox{0.75}{
    \tikzset{every picture/.style={line width=0.75pt}} 

\begin{tikzpicture}[x=0.75pt,y=0.75pt,yscale=-1,xscale=1]

\draw [line width=1.5]    (195.57,119.93) -- (388.57,120.93) ;
\draw [line width=1.5]    (148.57,63.73) -- (195.57,119.93) ;
\draw [line width=1.5]    (103.57,220.73) -- (195.57,119.93) ;
\draw [line width=4.5]    (388.57,120.93) -- (425.57,76.73) ;
\draw [line width=1.5]    (425.57,76.73) -- (479.57,75.73) ;
\draw [line width=1.5]    (401.57,35.73) -- (425.57,76.73) ;
\draw [line width=4.5]    (388.57,120.93) -- (410,160) ;
\draw [line width=1.5]    (410,160) -- (398.28,175.99) -- (367.43,218.07) ;
\draw [line width=1.5]    (480,160) -- (410,160) ;
\draw  [fill={rgb, 255:red, 209; green, 209; blue, 209 }  ,fill opacity=1 ] (113.63,233.28) .. controls (111.63,231.47) and (111.47,228.37) .. (113.28,226.37) -- (186.87,144.75) .. controls (188.68,142.74) and (191.77,142.58) .. (193.78,144.39) -- (204.68,154.22) .. controls (206.69,156.03) and (206.85,159.13) .. (205.04,161.14) -- (131.45,242.75) .. controls (129.64,244.76) and (126.55,244.92) .. (124.54,243.11) -- cycle ;
\draw   (346.82,198.56) .. controls (344.77,196.9) and (344.46,193.89) .. (346.12,191.84) -- (379.45,150.75) .. controls (381.11,148.71) and (384.12,148.39) .. (386.17,150.05) -- (397.29,159.07) .. controls (399.33,160.73) and (399.65,163.74) .. (397.99,165.79) -- (364.65,206.88) .. controls (362.99,208.93) and (359.98,209.24) .. (357.94,207.58) -- cycle ;
\draw   (419.98,135.01) .. controls (419.97,132.38) and (422.1,130.23) .. (424.73,130.22) -- (475.23,130.02) .. controls (477.86,130.01) and (480.01,132.14) .. (480.02,134.77) -- (480.08,149.09) .. controls (480.09,151.73) and (477.96,153.87) .. (475.32,153.88) -- (424.83,154.09) .. controls (422.19,154.1) and (420.05,151.97) .. (420.04,149.33) -- cycle ;
\draw   (429.96,51.11) .. controls (429.95,48.47) and (432.08,46.33) .. (434.72,46.32) -- (475.13,46.15) .. controls (477.77,46.14) and (479.91,48.27) .. (479.92,50.91) -- (479.98,65.23) .. controls (479.99,67.86) and (477.86,70.01) .. (475.23,70.02) -- (434.81,70.18) .. controls (432.18,70.19) and (430.03,68.06) .. (430.02,65.43) -- cycle ;
\draw   (392.49,42.47) .. controls (394.21,41.43) and (396.45,41.98) .. (397.49,43.71) -- (415.64,73.78) .. controls (416.68,75.51) and (416.12,77.75) .. (414.4,78.79) -- (405.03,84.44) .. controls (403.31,85.48) and (401.07,84.92) .. (400.03,83.2) -- (381.88,53.12) .. controls (380.84,51.4) and (381.4,49.16) .. (383.12,48.12) -- cycle ;
\draw   (142.61,73.6) .. controls (144.45,72.1) and (147.15,72.39) .. (148.64,74.23) -- (177.72,110.13) .. controls (179.21,111.97) and (178.92,114.68) .. (177.08,116.17) -- (167.08,124.27) .. controls (165.23,125.76) and (162.53,125.48) .. (161.04,123.63) -- (131.97,87.73) .. controls (130.48,85.89) and (130.76,83.19) .. (132.6,81.7) -- cycle ;
\draw  [fill={rgb, 255:red, 201; green, 200; blue, 200 }  ,fill opacity=1 ] (204.84,90) -- (380,90) -- (380,110) -- (204.84,110) -- cycle ;

\end{tikzpicture}}
    \caption{Illustration of one step of the decomposition of chordal graphs into interval graphs (of controlled diameter). The tree of plain edges is the underlying tree. The rectangles with rounded corners correspond to the pending interval graphs. The rectangle with spiky corners correspond to a separating interval graph. The grey rectangles are the ones that are too long and will be further decomposed. The bold edges correspond to part of the graphs that are not pending, nor long and separating. These part are the ones that are kept for the next phases, all the other vertices are removed.}
    \label{fig:chordal-decomposition}
\end{figure}

Let $V_i$ be the set of nodes of the interval graph removed at phase $i$.
The coloring algorithm then consists in coloring recursively $V_k,\ldots,V_1$, in this order. That is, we start by coloring the vertices that have been removed last.
Assume that we have a coloring of $V_{i+1},\ldots,V_k$. 
The key point is that when we consider a new interval subgraph, by construction, only its borders can be already colored, and therefore we are back to the situation we had for interval graphs. We then get the same number of colors. The logarithmic-in-$n$ complexity comes from the $O(\log n)$ layers of the rake-and-compress, that we must process sequentially.
\medskip

Now, let us describe our recoloring techniques.
For interval graphs, Theorem~\ref{thm:recol-interval} is obtained using two tools coming from the centralized setting. First, we use Kempe chains that have been used in several graph coloring proofs in the last decades. Given a graph $G$ and a coloring $c$ of~$G$, an \emph{$(a,b)$-component} is a connected component of $G$ restricted to the vertices colored $a$ and~$b$. 
A \emph{Kempe component} is an $(a,b)$-component for some pair of colors $a,b$. One can remark that, given a proper coloring and an $(a,b)$-component, permuting the colors $a$ and $b$ still leaves a proper coloring. In recoloring, this permutation can be performed by using one extra color as buffer. 
In a distributed setting, using directly those transformations can be perilous, as components can be as large as the diameter of the graph. We adapt Kempe components for distributed algorithms by using an additional color to cut these Kempe components and then permute locally the colors on these components\footnote{Kempe chains have already been used in the distributed setting, see e.g.~\cite{panconesi1995local}}. This will allow us to color independently some parts of the graph with the target coloring. 

Our second main tool is a recoloring technique introduced in~\cite{BousquetB19} for recoloring interval graphs in the centralized setting. We show how to adapt their technique to the distributed setting. The result of~\cite{BousquetB19} essentially ensures that if a large enough (that is, of size $\poly(\Delta)$) number of consecutive vertices $X$ are colored in a desirable way, then we can recolor all the vertices around $X$ with the target coloring by simply recoloring vertices locally. The idea of the proof consists in sliding little by little the set of vertices $X$ colored with the desirable coloring from left to right in such a way that when a vertex leaves the set, it has its target color.

For chordal graphs (Theorem~\ref{thm:recol-chordal}), we use Theorem~\ref{thm:recol-interval} as a black-box, as well as an adaptation of the distributed partition of chordal graphs into interval pieces from~\cite{KonradZ19}. 
We then prove that if we have a recoloring schedule of $G[\cup_{j \ge i+1}V_{j}]$ then we can adapt it into a recoloring schedule for $G[\cup_{j \ge i}V_{j}]$ using Theorem~\ref{thm:recol-interval} and classical recoloring tools.

Notice that the schedule for chordal graphs is large in comparison with the ones for interval graphs (order of $n^{O(\Delta)}$ versus $\poly (\Delta \cdot \logsn)$). This comes from the fact that in the extension of the recoloring schedule of $G[\cup_{j \ge i+1}V_{j}]$ to $G[\cup_{j \ge i}V_{j}]$, we need to leave a polynomial amount of recoloring steps to recolor vertices of $V_i$ between consecutive recolorings of vertices of $G[\cup_{j \ge i+1}V_{j}]$.

The schedule having a $\log n$ factor instead of a $\logsn$ factor originates once again from the decomposition of~\cite{KonradZ19} that uses a logarithmic number of layers. 

\subsection{Organization of the paper}

We start with the related work in Section~\ref{sec:related-work}, and some preliminaries in  Section~\ref{sec:preliminaries}.
Then in Section~\ref{sec:decompo-coloring-intervals}, we explain how to decompose interval graphs and how to modify colorings in such graphs. 
As a corollary, we get our coloring result for interval graphs. Section~\ref{sec:recoloring-intervals} is devoted to the recoloring of interval graphs, based on the decomposition. Finally, Section~\ref{sec:chordal} tackles chordal graphs, describing our decomposition, coloring and recoloring results. The tools introduced in details in the first sections are used as building blocks for the last section.

\section{Related work}
\label{sec:related-work}

\paragraph*{Distributed coloring}
Distributed coloring in bounded-degree and general  graphs has been extensively studied. We refer to the monograph \cite{BarenboimE13} for a general overview of the domain. 
In the LOCAL model, the classic coloring problem is $(\Delta+1)$-coloring, where $\Delta$ is the maximum degree of the graph. In this paper, we are interested in coloring and recoloring with a nearly optimal number of colors (which can be much lower than $\Delta$). There are ways to cope with such small color palette while preserving some locality: studying (multiplicative) approximation of the optimal coloring  \cite{Barenboim12,BarenboimEG18} and/or restricting to special graph classes, for example planar graphs \cite{GoldbergPS88, AboulkerBBE19,ChechikM19} or bounded arboricity graphs \cite{BarenboimE10, GhaffariL17}.

The line of work that is the closest to ours is about coloring interval and chordal graphs. It started with~\cite{HalldorssonK14} who proved an 8-approximation of the optimal coloring, in time $O(\logsn)$ in interval graphs. 
This result was then improved to $(1+\epsilon)$-approximation in time $O\left(\frac{1}{\epsilon}\logsn \right)$ in~\cite{HalldorssonK20}. The $\epsilon$ of this theorem can be as small as $2/\omega$, which means that \cite{HalldorssonK20} obtains an $(\omega+2)$-coloring in time $O(\omega \logsn)$.
Coloring in chordal graphs was considered in \cite{KonradZ19}. The authors prove a $(1+\epsilon)$-approximation in time $O\left(\frac{1}{\epsilon}\log n \right)$. Again, the bounds on $\epsilon$ allow to derive an $(\omega+2)$-coloring, and here the running time is $O(\omega \log n)$ rounds.

\paragraph*{Distributed Reconfiguration}

The introduction of recoloring, \emph{i.e.} reconfiguration of colorings, in the distributed setting is due to~\cite{bonamy2018distributed}. They, for instance, provide algorithms to recolor trees and subcubic graphs with one extra color, and an impossibility result for 3-colored toroidal grids with only one extra color. In particular, they find a constant size schedule with $O(\log n)$ communication rounds on trees if an extra color is allowed. Trees are chordal graphs with $\omega=2$, hence our results relate directly to that, and we use a similar \textit{Rake-and-Compress} approach. Then \cite{censor2020distributed} considered the distributed reconfiguration of maximal independent sets (MIS). 
Before these papers, some results of \cite{panconesi1995local} can be seen as recoloring, as they describe a way to find a $\Delta$-coloring from a given $(\Delta+1)$-coloring by using "augmenting paths" that actually are Kempe changes. 

\paragraph*{Centralized graph recoloring}
In the centralized setting, graph recoloring received a considerable attention in the last decades. 

In this overview, we will focus on single-vertex reconfiguration, that is, the model where exactly one vertex is recolored at each step. In that model, it is known that every $k$-coloring can be transformed into any other as long as $k$ is at least the degeneracy $d$ of $G$ plus two~\cite{dyer2006randomly,Cereceda}. However, the number of recoloring steps is \textit{a priori} exponential in $n$. Cereceda conjectured in~\cite{Cereceda} that, when $k \ge d+2$, there exists a quadratic transformation between any pair of $k$-colorings of any $d$-degenerate graph. Bousquet and Heinrich~\cite{BousquetH19+} proved that there always exist $O(n^{d+1})$ transformation between any pair of $k$-colorings when $k \ge d+2$. 

Since chordal graphs are perfect graphs, it is well known that the degeneracy is related to the cliques number by:  $d=\omega-1$.
For interval and chordal graphs, Bonamy et al.~\cite{Bonamy0LPP14} proved that there exists a quadratic transformation between any pair of vertices when $k \ge \omega+1$ and they provide a quadratic lower bound when $k=\omega+1$. This existence of a quadratic transformation has been extended to bounded treewidth graphs~\cite{BonamyB18}. 
Recently Bartier and Bousquet proved in~\cite{BousquetB19} that, when $k \ge \omega+3$, there always exists a linear transformation between any two colorings of a chordal graph. Moreover, in contrast to most of the centralized recoloring algorithms, this algorithm is ``local'' and will be used as one of the blocks of our proof.

\paragraph*{Graph decompositions}

The first phase of our algorithms consists in decomposing the graph into components of small diameter with some additional properties. 
These decompositions are close to what is known as network decompositions, which are partitions of the nodes of the graph into few classes, such that every class has all its connected components of small diameter (see \cite{Ghaffari20} for an overview of this topic, and in particular the recent advances).
For geometric graphs, such as interval, unit-disks graphs, it is known that a network decomposition with a constant number of classes and constant diameter can be obtained in $O(\logsn)$ rounds \cite{KuhnMW05, SchneiderW10}, which implies that all the classic problems such as $(\Delta+1)$-coloring or maximal matching can be solved in $O(\logsn)$ rounds. For trees, \cite{bonamy2018distributed} establishes a network decomposition with constant size components in time $O(\log n)$, which is enough for recoloring.
For our purpose, which is to (re)color interval and chordal graphs with few colors, standard network decompositions are not powerful enough, and we need to build more constrained decompositions. Our decompositions are inspired by the ones of \cite{HalldorssonK20} and~\cite{KonradZ19}. 

\section{Basic properties of interval and chordal graphs}
\label{sec:preliminaries}

\emph{Interval graphs} are intersection graphs of intervals of the real line. 
\emph{Proper interval graphs} are the intersection graphs of a set of intervals of size one. 
Equivalently, they correspond to interval graphs where no interval is included in the other.
It is easy to check that, starting from an interval graph, and keeping only the intervals that are maximal for inclusion, we get a proper interval graph.

An interval graph can also be characterized as an intersection graph of subpaths of a path, and we can extend this definition to define chordal graphs. 
A graph is \emph{chordal} if it is an intersection graph of subtrees of a tree. Equivalently, it is the class of graphs for which all the cycles of length at least $4$ has a chord.

Before we list some properties of interval, proper intervals and chordal graphs, let us define the notion of degeneracy.

\begin{definition}
\label{def:degenerate}
A graph $G$ is \emph{$d$-degenerate} if there exists an ordering $v_1,\ldots,v_n$ of $V$ such that for every $i \le n-1$, the number of neighbors of $v_i$ in $\{ v_{i+1},\ldots,v_n \}$ is at most $d$.
\end{definition}

Let us remind a few facts about interval graphs.

\begin{observation}
\label{prop:interval-graphs} The following holds.
\begin{enumerate}
    \item\label{rem1} Any chordal graph can be colored with $\omega$ colors and is $(\omega-1)$-degenerate \cite{Maffray03}.
    \item The max degree $\Delta$ can be as large as $n$, even if $\omega$ is small (consider $n-1$ disjoint small intervals contained in a big one).
    \item In an interval graph, the set of intervals that are minimal for inclusion corresponds to a proper interval. The same holds for "maximal for inclusion" \cite{HalldorssonK20}.
    \item\label{rem2} In any proper interval graph, $\Delta \leq 2\omega-2$ (since every neighboring interval should cross one of the extremities of the vertex interval.)
\end{enumerate}
\end{observation}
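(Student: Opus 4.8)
The plan is to treat the four items separately, since each is either a classical structural fact or an explicit example. For item~\ref{rem1}, I would invoke the classical characterization of chordal graphs by a perfect elimination ordering $v_1,\dots,v_n$: in such an ordering, for every $i$ the vertex $v_i$ together with its neighbours in $\{v_{i+1},\dots,v_n\}$ induces a clique, so $v_i$ has at most $\omega-1$ later neighbours. By Definition~\ref{def:degenerate} this is exactly $(\omega-1)$-degeneracy, and colouring greedily in the reverse order $v_n,v_{n-1},\dots,v_1$ (so that when $v_i$ is processed at most $\omega-1$ of its neighbours are already coloured) yields a proper $\omega$-colouring. Item~2 only asks for a witness, and the one stated works: take the interval $[0,n]$ together with the $n-1$ pairwise disjoint intervals $[i,i+\tfrac12]$ for $i=1,\dots,n-1$; the big interval is adjacent to all the others so $\Delta\ge n-1$, while no three of these intervals pairwise intersect, so $\omega=2$.

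For item~3, I would argue straight from the definition of a proper interval representation (no interval strictly contained in another). Restrict the given representation to the inclusion-minimal intervals. If two of them, $I$ and $J$, satisfied $J\subsetneq I$, then $I$ would fail to be inclusion-minimal, a contradiction; hence in this restricted family no interval is strictly inside another, i.e.\ it is a proper interval representation, and the induced subgraph is a proper interval graph. The case of the inclusion-maximal intervals is symmetric: $J\subsetneq I$ with both maximal contradicts maximality of $J$. For item~\ref{rem2}, fix a vertex $v$ with interval $I_v=[a,b]$ in a proper representation. Any neighbour $u$ has $I_u$ intersecting $I_v$ with neither interval containing the other, so a short case analysis on the relative positions of the four endpoints forces $I_u$ to contain exactly one of $a$ and $b$. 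The neighbours whose interval contains $a$, together with $v$, all pass through the common point $a$ and therefore form a clique, so there are at most $\omega-1$ of them; likewise at most $\omega-1$ neighbours contain $b$. Adding the two bounds gives $\deg(v)\le 2\omega-2$.

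The only step that needs any genuine care is the endpoint case analysis in item~\ref{rem2}: one must check that an interval meeting $I_v$ without containing or being contained in it covers precisely one endpoint of $I_v$ (if it covered both, it would contain $I_v$; if it covered neither, it would be contained in $I_v$ or disjoint from it). This is routine, and everything else is either a cited classical fact (items~\ref{rem1} and~3) or the verification of an example (item~2), so I do not expect a real obstacle here.
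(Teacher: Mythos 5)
Your proposal is correct and matches the paper's intent exactly: the paper states this as an observation with only citations and parenthetical hints (the perfect elimination ordering for item~\ref{rem1}, the star-of-intervals example for item~2, and the endpoint-covering argument for item~\ref{rem2}), and your write-up is precisely the fleshed-out version of those hints. No gaps; the only point needing care, the endpoint case analysis for item~\ref{rem2}, is handled correctly.
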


\paragraph*{Clique trees, clique paths and borders}

An essential tool to study chordal graphs is the notion of clique tree.

\begin{definition}{(See \emph{e.g.} \cite{BlairP93}.)}
Let $G$ be a chordal graph. A \emph{clique tree} of~$G$ is a tree $T$ together with a function that associates to every vertex a connected subtree of $T$. 
Two vertices $x$ and $y$ are connected in $G$ if and only if the subtrees of $x$ and $y$ intersect. For each node $u\in T$, the subset of nodes in $G$ whose subtree contains $u$ is called the \emph{bag} of $u$. Note that each bag of a node of $T$ forms a clique in $G$, so each bag contains at most $\omega$ vertices (and there always exists a clique tree on at most $n$ nodes).

For interval graphs, the tree $T$ is a path, and it is called the \emph{clique path} of~$G$ (see Figure~\ref{fig:clique-path} for an illustration).
\end{definition}

We introduce two more notions.
Let $G$ be an interval graph. A set of vertices $X$ is said to be \emph{consecutive} if it consists in the union of the vertices contained in consecutive cliques of the clique path of $G$. The \emph{border} of $X$ is the subset of $X$ connected to at least one vertex which is not in $X$. In other words, it is the set of vertices of the first and last clique of the clique path containing $X$ that have a neighbor not in $X$. One can easily notice that all the vertices of the border of $X$ can be partitioned into two cliques (the ones that belong to the first and the last clique of the set of bags). 

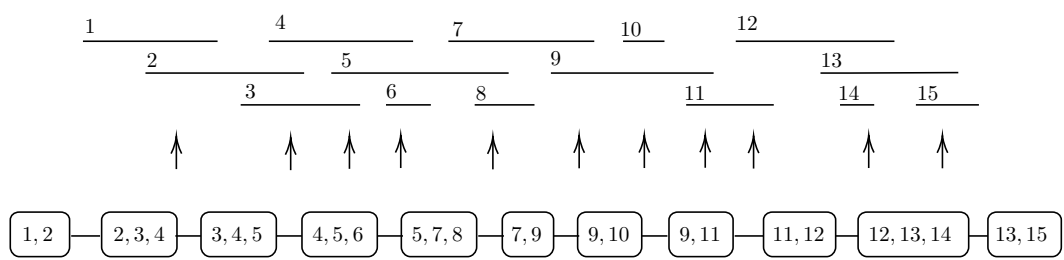
\begin{figure}[!h]
    \centering
    \scalebox{0.8}{
    \tikzset{every picture/.style={line width=0.75pt}} 

\begin{tikzpicture}[x=0.75pt,y=0.75pt,yscale=-1,xscale=1]

\draw    (21,80) -- (105,80) ;
\draw    (60,100) -- (159,100) ;
\draw     (137,80) -- (227,80) ;
\draw    (249,80) -- (340.04,80) ;
\draw    (358,80) -- (383.84,80) ;
\draw     (428.33,80) -- (527.33,80) ;
\draw    (175.94,100) -- (286.61,100) ;
\draw    (312.98,100) -- (414.51,100) ;
\draw    (481.18,100) -- (499.18,100) -- (567.18,99.71) ;
\draw    (119.41,120) -- (193.94,120) ;
\draw    (210.08,120) -- (238.08,120) ;
\draw    (265.41,120) -- (302.74,120) ;
\draw    (397.38,120) -- (452.04,120) ;
\draw    (493.38,120) -- (514.71,120) ;
\draw    (540.71,120) -- (580.04,120) ;

\draw (122,135) node [anchor=north west][inner sep=0.75pt]   [align=left] {$ $};
\draw (122,150.4) node [anchor=north west][inner sep=0.75pt]    {$$};
\draw (20,64) node [anchor=north west][inner sep=0.75pt]   [align=left] {};
\draw (20.67,65) node [anchor=north west][inner sep=0.75pt]    {$1$};
\draw (59.33,86) node [anchor=north west][inner sep=0.75pt]    {$2$};
\draw (120.67,106) node [anchor=north west][inner sep=0.75pt]    {$3$};
\draw (139.33,62.4) node [anchor=north west][inner sep=0.75pt]    {$4$};
\draw (180.67,86) node [anchor=north west][inner sep=0.75pt]    {$5$};
\draw (208.67,106) node [anchor=north west][inner sep=0.75pt]    {$6$};
\draw (250,67) node [anchor=north west][inner sep=0.75pt]    {$7$};
\draw (264.67,107) node [anchor=north west][inner sep=0.75pt]    {$8$};
\draw (311.33,85) node [anchor=north west][inner sep=0.75pt]    {$9$};
\draw (354.67,67) node [anchor=north west][inner sep=0.75pt]    {$10$};
\draw (394.67,107) node [anchor=north west][inner sep=0.75pt]    {$11$};
\draw (427.33,63) node [anchor=north west][inner sep=0.75pt]    {$12$};
\draw (479.83,86) node [anchor=north west][inner sep=0.75pt]    {$13$};
\draw (490.67,107) node [anchor=north west][inner sep=0.75pt]    {$14$};
\draw (540,107) node [anchor=north west][inner sep=0.75pt]    {$15$};

\draw    (79.28,160.38) -- (79.28,141.71) ;
\draw [shift={(79.28,139.71)}, rotate = 450] [color={rgb, 255:red, 0; green, 0; blue, 0 }  ][line width=0.75]    (10.93,-3.29) .. controls (6.95,-1.4) and (3.31,-0.3) .. (0,0) .. controls (3.31,0.3) and (6.95,1.4) .. (10.93,3.29)   ;
\draw    (150.61,161.04) -- (150.61,142.38) ;
\draw [shift={(150.61,140.38)}, rotate = 450] [color={rgb, 255:red, 0; green, 0; blue, 0 }  ][line width=0.75]    (10.93,-3.29) .. controls (6.95,-1.4) and (3.31,-0.3) .. (0,0) .. controls (3.31,0.3) and (6.95,1.4) .. (10.93,3.29)   ;
\draw    (187.28,160.21) -- (187.28,141.54) ;
\draw [shift={(187.28,139.54)}, rotate = 450] [color={rgb, 255:red, 0; green, 0; blue, 0 }  ][line width=0.75]    (10.93,-3.29) .. controls (6.95,-1.4) and (3.31,-0.3) .. (0,0) .. controls (3.31,0.3) and (6.95,1.4) .. (10.93,3.29)   ;
\draw    (219.28,159.71) -- (219.28,141.04) ;
\draw [shift={(219.28,139.04)}, rotate = 450] [color={rgb, 255:red, 0; green, 0; blue, 0 }  ][line width=0.75]    (10.93,-3.29) .. controls (6.95,-1.4) and (3.31,-0.3) .. (0,0) .. controls (3.31,0.3) and (6.95,1.4) .. (10.93,3.29)   ;
\draw    (276.78,160.54) -- (276.78,141.88) ;
\draw [shift={(276.78,139.88)}, rotate = 450] [color={rgb, 255:red, 0; green, 0; blue, 0 }  ][line width=0.75]    (10.93,-3.29) .. controls (6.95,-1.4) and (3.31,-0.3) .. (0,0) .. controls (3.31,0.3) and (6.95,1.4) .. (10.93,3.29)   ;
\draw    (330.61,159.88) -- (330.61,141.21) ;
\draw [shift={(330.61,139.21)}, rotate = 450] [color={rgb, 255:red, 0; green, 0; blue, 0 }  ][line width=0.75]    (10.93,-3.29) .. controls (6.95,-1.4) and (3.31,-0.3) .. (0,0) .. controls (3.31,0.3) and (6.95,1.4) .. (10.93,3.29)   ;
\draw    (371.28,159.71) -- (371.28,141.04) ;
\draw [shift={(371.28,139.04)}, rotate = 450] [color={rgb, 255:red, 0; green, 0; blue, 0 }  ][line width=0.75]    (10.93,-3.29) .. controls (6.95,-1.4) and (3.31,-0.3) .. (0,0) .. controls (3.31,0.3) and (6.95,1.4) .. (10.93,3.29)   ;
\draw    (409.28,159.54) -- (409.28,150.95) -- (409.28,140.88) ;
\draw [shift={(409.28,138.88)}, rotate = 450] [color={rgb, 255:red, 0; green, 0; blue, 0 }  ][line width=0.75]    (10.93,-3.29) .. controls (6.95,-1.4) and (3.31,-0.3) .. (0,0) .. controls (3.31,0.3) and (6.95,1.4) .. (10.93,3.29)   ;
\draw    (439.44,160.04) -- (439.44,141.38) ;
\draw [shift={(439.44,139.38)}, rotate = 450] [color={rgb, 255:red, 0; green, 0; blue, 0 }  ][line width=0.75]    (10.93,-3.29) .. controls (6.95,-1.4) and (3.31,-0.3) .. (0,0) .. controls (3.31,0.3) and (6.95,1.4) .. (10.93,3.29)   ;
\draw    (511.28,159.71) -- (511.28,150.45) -- (511.28,141.04) ;
\draw [shift={(511.28,139.04)}, rotate = 450] [color={rgb, 255:red, 0; green, 0; blue, 0 }  ][line width=0.75]    (10.93,-3.29) .. controls (6.95,-1.4) and (3.31,-0.3) .. (0,0) .. controls (3.31,0.3) and (6.95,1.4) .. (10.93,3.29)   ;
\draw    (557.28,159.88) -- (557.28,141.21) ;
\draw [shift={(557.28,139.21)}, rotate = 450] [color={rgb, 255:red, 0; green, 0; blue, 0 }  ][line width=0.75]    (10.93,-3.29) .. controls (6.95,-1.4) and (3.31,-0.3) .. (0,0) .. controls (3.31,0.3) and (6.95,1.4) .. (10.93,3.29)   ;

\end{tikzpicture}}
    \vspace{0.5cm}
    
    \scalebox{0.80}{
    \tikzset{every picture/.style={line width=0.75pt}} 

\begin{tikzpicture}[x=0.75pt,y=0.75pt,yscale=-1,xscale=1]

\draw   (3,100) .. controls (3,97) and (5.51,95) .. (8.6,95) -- (34.4,95) .. controls (37.49,95) and (40,97) .. (40,100) -- (40,117) .. controls (40,120) and (37.49,123) .. (34.4,123) -- (8.6,123) .. controls (5.51,123) and (3,120) .. (3,117) -- cycle ;
\draw   (60,100) .. controls (60,97) and (62.51,95) .. (65.6,95) -- (101.4,95) .. controls (104.49,95) and (107,97) .. (107,100) -- (107,117) .. controls (107,120) and (104.49,123) .. (101.4,123) -- (65.6,123) .. controls (62.51,123) and (60,120) .. (60,117) -- cycle ;
\draw   (122,100) .. controls (122,97) and (124.51,95) .. (127.6,95) -- (163.4,95) .. controls (166.49,95) and (169,97) .. (169,100) -- (169,117) .. controls (169,120) and (166.49,123) .. (163.4,123) -- (127.6,123) .. controls (124.51,123) and (122,120) .. (122,117) -- cycle ;
\draw   (185,100) .. controls (185,97) and (187.51,95) .. (190.6,95) -- (226.4,95) .. controls (229.49,95) and (232,97) .. (232,100) -- (232,117) .. controls (232,120) and (229.49,123) .. (226.4,123) -- (190.6,123) .. controls (187.51,123) and (185,120) .. (185,117) -- cycle ;
\draw   (247,100) .. controls (247,97) and (249.51,95) .. (252.6,95) -- (288.4,95) .. controls (291.49,95) and (294,97) .. (294,100) -- (294,117) .. controls (294,120) and (291.49,123) .. (288.4,123) -- (252.6,123) .. controls (249.51,123) and (247,120) .. (247,117) -- cycle ;
\draw   (310,100) .. controls (310,97) and (312.51,95) .. (315.6,95) -- (336.4,95) .. controls (339.49,95) and (342,97) .. (342,100) -- (342,117) .. controls (342,120) and (339.49,123) .. (336.4,123) -- (315.6,123) .. controls (312.51,123) and (310,120) .. (310,117) -- cycle ;
\draw   (357,100) .. controls (357,97) and (359.51,95) .. (362.6,95) -- (391.4,95) .. controls (394.49,95) and (397,97) .. (397,100) -- (397,117) .. controls (397,120) and (394.49,123) .. (391.4,123) -- (362.6,123) .. controls (359.51,123) and (357,120) .. (357,117) -- cycle ;
\draw   (414,100) .. controls (414,97) and (416.51,95) .. (419.6,95) -- (448.4,95) .. controls (451.49,95) and (454,97) .. (454,100) -- (454,117) .. controls (454,120) and (451.49,123) .. (448.4,123) -- (419.6,123) .. controls (416.51,123) and (414,120) .. (414,117) -- cycle ;
\draw   (473,100) .. controls (473,97) and (475.51,95) .. (478.6,95) -- (512.4,95) .. controls (515.49,95) and (518,97) .. (518,100) -- (518,117) .. controls (518,120) and (515.49,123) .. (512.4,123) -- (478.6,123) .. controls (475.51,123) and (473,120) .. (473,117) -- cycle ;
\draw   (532,100) .. controls (532,97) and (534.51,95) .. (537.6,95) -- (595.4,95) .. controls (598.49,95) and (601,97) .. (601,100) -- (601,117) .. controls (601,120) and (598.49,123) .. (595.4,123) -- (537.6,123) .. controls (534.51,123) and (532,120) .. (532,117) -- cycle ;
\draw   (613,100) .. controls (613,97) and (615.51,95) .. (618.6,95) -- (653.2,95) .. controls (656.29,95) and (658.8,97) .. (658.8,100) -- (658.8,117) .. controls (658.8,120) and (656.29,123) .. (653.2,123) -- (618.6,123) .. controls (615.51,123) and (613,120) .. (613,117) -- cycle ;
\draw    (41,110) -- (59,110) ;
\draw    (106,110) -- (122,110) ;
\draw    (169,110) -- (185,110) ;
\draw    (232,110) -- (248,110) ;
\draw    (295,110) -- (311,110) ;
\draw    (342,110) -- (358,110) ;
\draw    (397,110) -- (413,110) ;
\draw    (455,110) -- (473,110) ;
\draw    (518,110) -- (532,110) ;
\draw    (601,110) -- (613,110) ;

\draw (9,103) node [anchor=north west][inner sep=0.75pt]    {$1,2$};
\draw (66,103) node [anchor=north west][inner sep=0.75pt]    {$2,3,4$};
\draw (127,103) node [anchor=north west][inner sep=0.75pt]    {$3,4,5$};
\draw (190,103) node [anchor=north west][inner sep=0.75pt]    {$4,5,6$};
\draw (252,103) node [anchor=north west][inner sep=0.75pt]    {$5,7,8$};
\draw (314,103) node [anchor=north west][inner sep=0.75pt]    {$7,9$};
\draw (362,103) node [anchor=north west][inner sep=0.75pt]    {$9,10$};
\draw (419,103) node [anchor=north west][inner sep=0.75pt]    {$9,11$};
\draw (477,103) node [anchor=north west][inner sep=0.75pt]    {$11,12$};
\draw (537,103) node [anchor=north west][inner sep=0.75pt]    {$12,13,14$};
\draw (616,103) node [anchor=north west][inner sep=0.75pt]    {$13,15$};

\end{tikzpicture}}
    \caption{
    The first picture represents a set of intervals.
    The arrows correspond to points of the axis where there is a maximal clique. 
    The second picture describes the clique path of the graph, whose nodes are the maximal cliques.}
    \label{fig:clique-path}
\end{figure}

We will need the following remarks about interval graphs:
\begin{remark}\label{rmk:separatorinterval}
Let $G$ be an interval graph given with an interval representation of $G$. Let $x$ be a vertex of $G$. Then:
\begin{itemize}
    \item $N(x)$ contain all the vertices $v$ whose interval intersects the interval of $x$.
    \item The removal of $N(x) \cup \{ x \}$ separates the vertices with interval at the left of $x$ with the vertices with interval  at the right of $x$.\footnote{Note that there might be more components since, for instance, the "left graph" might not be connected.} A direct consequence is that, for each node $x$ that is not contained in a bag of a node of one extremity of the clique path, its box separates the graph.
    \item For every $r \ge 2$, consider the set  $Y=\cup_{i\le r}N^i(x)$. The subset of vertices of $Y$ that are incident to any vertex of $V \setminus Y$ is composed of at most two cliques $A,B$. Moreover, there is a clique path of $G[Y]$ where $A$ is the first clique and $B$ is the last clique.
\end{itemize}
\end{remark}

\section{Decomposing and coloring interval graphs}
\label{sec:decompo-coloring-intervals}

In the sequential setting, it is easy to compute an interval representation of the graph, and such a representation is often useful in the design of algorithms. 
For example, one can compute a maximum independent set in a greedy manner, scanning the interval by the increasing right-most endpoint.
In our setting, the nodes do not have access to such a representation, and it is not possible to compute one locally. 
We will build a weaker local version of the interval representation to facilitate the design of coloring and recoloring algorithms. This idea originates from \cite{HalldorssonK20, HalldorssonK14}.

We now introduce the notion of \emph{boxes}.
Remember that an $(a, b)$-ruling-set is a set of nodes~$S$, such that for any $u,v \in S$ $u$ and $v$ are at distance at least $a$, and any node is at distance at most $b$ from a node of $S$.

\begin{definition}
Given a $(4, 5)$-ruling-set $S$, for every node $v$ of $S$, we define its \emph{box} as its closed neighborhood in the graph.

\end{definition}

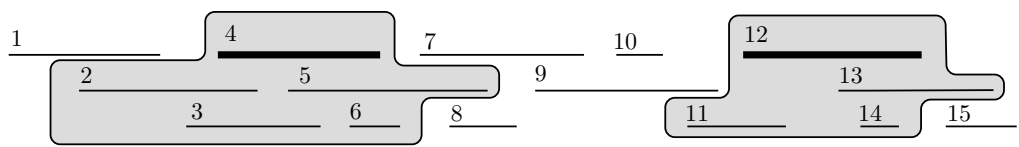
\begin{figure}[!h]
    \centering
    \scalebox{0.9}{
    \tikzset{every picture/.style={line width=0.75pt}} 

\begin{tikzpicture}[x=0.75pt,y=0.75pt,yscale=-1,xscale=1]

\draw  [rounded corners, fill={rgb, 255:red, 220; green, 220; blue, 220 }  ,fill opacity=1 ] (420.51,58) -- (489.84,58) -- (540.51,58) -- (541.18,91) -- (573,91) -- (573,105) -- (527,105) -- (527,126) -- (385.14,126) -- (385.14,104) -- (420.51,104) -- cycle ;
\draw  [rounded corners, fill={rgb, 255:red, 220; green, 220; blue, 220 }  ,fill opacity=1 ] (130,56) -- (197.87,56) -- (235,56) -- (235,86) -- (293,86) -- (293,104) -- (250,104) -- (250,130) -- (44.18,130) -- (44.18,83) -- (130,83) -- cycle ;
\draw    (21,80) -- (105,80) ;
\draw    (60,100) -- (159,100) ;
\draw [line width=3]    (137,80) -- (227,80) ;
\draw    (249,80) -- (340.04,80) ;
\draw    (358,80) -- (383.84,80) ;
\draw [line width=3]    (428.33,80) -- (527.33,80) ;
\draw    (175.94,100) -- (286.61,100) ;
\draw    (312.98,100) -- (414.51,100) ;
\draw    (481.18,100) -- (499.18,100) -- (567.18,99.71) ;
\draw    (119.41,120) -- (193.94,120) ;
\draw    (210.08,120) -- (238.08,120) ;
\draw    (265.41,120) -- (302.74,120) ;
\draw    (397.38,120) -- (452.04,120) ;
\draw    (493.38,120) -- (514.71,120) ;
\draw    (540.71,120) -- (580.04,120) ;

\draw (122,135) node [anchor=north west][inner sep=0.75pt]   [align=left] {$ $};
\draw (122,150.4) node [anchor=north west][inner sep=0.75pt]    {$$};
\draw (20,64) node [anchor=north west][inner sep=0.75pt]   [align=left] {};
\draw (20.67,65) node [anchor=north west][inner sep=0.75pt]    {$1$};
\draw (59.33,86) node [anchor=north west][inner sep=0.75pt]    {$2$};
\draw (120.67,106) node [anchor=north west][inner sep=0.75pt]    {$3$};
\draw (139.33,62.4) node [anchor=north west][inner sep=0.75pt]    {$4$};
\draw (180.67,86) node [anchor=north west][inner sep=0.75pt]    {$5$};
\draw (208.67,106) node [anchor=north west][inner sep=0.75pt]    {$6$};
\draw (250,67) node [anchor=north west][inner sep=0.75pt]    {$7$};
\draw (264.67,107) node [anchor=north west][inner sep=0.75pt]    {$8$};
\draw (311.33,85) node [anchor=north west][inner sep=0.75pt]    {$9$};
\draw (354.67,67) node [anchor=north west][inner sep=0.75pt]    {$10$};
\draw (394.67,107) node [anchor=north west][inner sep=0.75pt]    {$11$};
\draw (427.33,63) node [anchor=north west][inner sep=0.75pt]    {$12$};
\draw (479.83,86) node [anchor=north west][inner sep=0.75pt]    {$13$};
\draw (490.67,107) node [anchor=north west][inner sep=0.75pt]    {$14$};
\draw (540,107) node [anchor=north west][inner sep=0.75pt]    {$15$};

\end{tikzpicture}}
    \caption{An example of a box decomposition. The boxes are the gray areas. The other areas are interboxes. The nodes of the ruling sets are the bold intervals. The intervals 6, 8, 10 and 14, are included into other interval thus they are not considered for the computation of the ruling set.}
    \label{fig:boxes}
\end{figure}

\begin{restatable}{proposition}{propbasicbox}
\label{prop:basicbox}
For a $(4, 5)$-ruling-set $S$ in an interval graph $G$, the following holds:
\begin{enumerate}
\item The boxes of the nodes in $S$ are disjoint, and two nodes from two different boxes cannot be adjacent.
\item The removal of all the boxes leaves a set of connected components, that we call \emph{interboxes}. We have a path alternating boxes and interboxes (by defining adjacency between boxes and interboxes by the existence of an edge linking the two). We call the \emph{virtual path} the path on vertex set $S$ whose adjacency is given by the sequence of boxes defined above.
\item The interboxes have diameter at most 11.
\end{enumerate}
\end{restatable}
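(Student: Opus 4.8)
The plan is to prove the three claims in order, working directly with the interval representation and the defining properties of a $(4,5)$-ruling set.

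\textbf{Claim 1 (disjointness and non-adjacency of boxes).} Let $u,v \in S$ with $u \neq v$. By definition of a $(4,5)$-ruling set, $\mathrm{dist}_G(u,v) \ge 4$. The box of $u$ is $N[u]$, so every vertex of the box of $u$ is at distance at most $1$ from $u$; similarly for $v$. If a vertex $w$ belonged to both boxes, then $\mathrm{dist}_G(u,v) \le \mathrm{dist}_G(u,w) + \mathrm{dist}_G(w,v) \le 1 + 1 = 2 < 4$, a contradiction, so the boxes are disjoint. If two vertices $x \in N[u]$ and $y \in N[v]$ were adjacent, then $\mathrm{dist}_G(u,v) \le \mathrm{dist}_G(u,x) + 1 + \mathrm{dist}_G(y,v) \le 1+1+1 = 3 < 4$, again a contradiction. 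So no edge links two distinct boxes.

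\textbf{Claim 2 (path structure).} First I would observe that the union of all boxes is a set of vertices whose removal leaves the interboxes, defined as the connected components of what remains; this is a definition, so nothing to prove there. The substance is the ``path alternating boxes and interboxes'' claim. Here I would use the interval representation: order the nodes of $S$ by, say, the left endpoint of their intervals (or equivalently by their position along the clique path). Using Remark~\ref{rmk:separatorinterval}, the closed neighborhood $N[v]$ of a node $v \in S$ that is not at an extremity is a separator of $G$, splitting it into a ``left part'' and a ``right part''. Hence removing all boxes decomposes $G$ into pieces that lie ``between'' consecutive boxes in this left-to-right order, plus possibly one piece to the left of the first box and one to the right of the last. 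Each such piece, together with the two boxes flanking it, forms the consecutive segment of the clique path, so adjacency (via the stated convention: an edge linking the two) occurs only between an interbox and the box immediately to its left or right. This gives the alternating path on boxes and interboxes, and collapsing each box to its center $v \in S$ gives the virtual path on vertex set $S$. I should be slightly careful about the edge case of a box at an extremity of the clique path (it has only one neighboring interbox) and about the possibility that $S$ has only one element, but these are straightforward.

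\textbf{Claim 3 (interbox diameter).} This is the main obstacle and where the constant $5$ in the ruling set matters. Let $C$ be an interbox and take $x \in C$. Since $S$ is a $(4,5)$-ruling set, there is some $s \in S$ with $\mathrm{dist}_G(x,s) \le 5$. Because $x \notin N[s]$ (as $x$ is in an interbox, not a box), we have $\mathrm{dist}_G(x,s) \ge 2$, so $x$ has a neighbor on a shortest $x$--$s$ path, and in fact every vertex of $C$ is within distance $4$ of $N[s]$: indeed a shortest path from $x$ to $s$ of length $\le 5$ first reaches $N[s]$ after $\le 4$ steps, and the portion of that path before hitting $N[s]$ stays inside $C$ (it cannot enter another box by Claim 1, and it cannot re-enter a box and come back out). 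Thus every vertex of $C$ is within distance $4$ of the (at most one or two) boundary cliques of $C$ adjacent to the flanking boxes; more carefully, every $x \in C$ is within distance $5$ of its nearest $s\in S$, and the first vertex on the path toward $s$ that lies in $N[s]$ is a vertex of the border clique of the flanking box. So all vertices of $C$ lie within distance $5$ of one of the (at most two) border cliques, hence within distance $5+1=6$ of a single vertex on each side; combining across the component and using that the border on each side is a clique, any two vertices of $C$ are joined by a path through a border clique of length at most $5 + 1 + 5 = 11$. I would write this bound out carefully, tracking exactly which vertices the $(4,5)$-property is applied to, since the precise constant $11$ depends on whether one counts the hop through the border clique and on handling vertices closest to the left versus the right boundary; the only real work is this bookkeeping, and the key structural input — that shortest paths from an interbox vertex to its ruling-set representative cannot wander through other boxes — follows immediately from Claim 1.
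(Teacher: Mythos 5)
Your arguments for the first two claims are correct and essentially identical to the paper's: Claim~1 is the triangle inequality applied to the distance-$\ge 4$ condition, and Claim~2 follows from the separator property of closed neighborhoods in interval graphs (Remark~\ref{rmk:separatorinterval}). The problem is Claim~3, where your ``combining'' step has a genuine gap. You correctly show that every vertex $x$ of an interbox $C$ lies within distance $\le 5$ of its nearest ruling-set vertex, and that the initial portion of that path stays in $C$. But $C$ is flanked by \emph{two} boxes, say $N[s_1]$ on the left and $N[s_2]$ on the right, and different vertices of $C$ may route to different sides. Your bound $5+1+5=11$ only covers the case where $x$ and $y$ both route through the \emph{same} border; if $x$ is within $5$ of $s_1$ and $y$ is within $5$ of $s_2$, concatenating your two paths requires traversing from $s_1$ to $s_2$ (or between the two borders of $C$), which you never bound. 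Patching this purely metrically (e.g., using connectivity of $C$ to find an edge between the ``$s_1$-side'' and the ``$s_2$-side'') yields something like $10+1+10=21$, not $11$. A secondary unjustified step is the assertion that ``the border on each side is a clique'': the set of interbox vertices adjacent to a flanking box need not be a clique (two short intervals can both touch a long interval of the box without touching each other), so even the same-side routing needs care.

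The missing idea — and the one structural input you did not identify — is the interval covering property. The paper first bounds $d(s_1,s_2)\le 11$ (every interbox vertex is within $5$ of $s_1$ or of $s_2$, and connectivity of $C$ supplies a bridging edge, giving $5+1+5$), then fixes a single shortest $s_1$--$s_2$ path $P$ and observes that the union of the intervals of $P$ covers the whole stretch of the line between the two boxes; since every vertex of $C$ has its interval strictly between the boxes, it must intersect, hence be adjacent to, some vertex of the inner part of $P$. Any two interbox vertices are then joined via $P$ in at most $1+9+1=11$ steps. Your stated ``key structural input'' (shortest paths to the representative cannot wander through other boxes) is true but insufficient: it controls each vertex's distance to a nearby box, not the distance across the interbox. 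Without the single spanning path and the covering argument, the constant $11$ (and indeed any proof of the cross case at the claimed constant) does not follow.
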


\begin{proof}
\begin{itemize}
    \item As the nodes of the $(4,5)$-ruling-set are at distance at least 4, their neighborhoods do not intersect.
    \item By Remark~\ref{rmk:separatorinterval}, each box is a separator of the graph (except for vertices incident to the vertices with the smallest right extremity or with the largest left extremity, that is vertices of the first and last bag by Remark~\ref{rmk:separatorinterval}). So, since boxes are disjoint by the first point, for any $a,b,c \in S$, there exists one of them, w.l.o.g. $a$, such that any $bc$-path contains a vertex in the neighborhood of $a$. 
    It permits to define a \emph{virtual path} where the neighbors of $a \in M$ are the (at most two) vertices $x$ of $M$ such that there is a path from $a$ to a neighbor $y$ of $x$ which does not intersect $N(z)$ for $z \in M \setminus \{a,x\}$.

    \item Consider an interbox between two boxes centered at nodes $s_1$ and $s_2$ of the ruling set. By definition of the $(4,5)$-ruling set, the shortest path between $s_1$ and $s_2$  has length at most 11. The part of this path that is in the interbox has length at most 9. Now, because of the structure of interval graphs, every node of the interbox must be adjacent to at least one of the nodes of this inner part of the path. Therefore, all the nodes of the interbox are at distance at most 11 from each other. 
\end{itemize}
\end{proof}

Now from a distributed point of view, we say that the nodes compute a box decomposition if they compute a $(4, 5)$-ruling-set $S$, and for every node, either it is in a box, and then it knows the two adjacent boxes, and the structure of the graph between these boxes, or it is not in a box, and knows between which two boxes it is, and the structure of the graph between these boxes. 

Let $S$ be a $(4,5)$-ruling set of $G$. If we replace every box a by a single vertex and every interbox by an edge, the resulting graph is a path called the \emph{ruling path}. Two vertices of $S$ are said to be consecutive if they are adjacent in the ruling path, and at distance at most $r$ if they are at distance at most $r$ in the ruling path.

\begin{restatable}{lemma}{LemComputeBoxDecomposition}
\label{lem:compute-box-decomposition}
A box decomposition can be found in $O(\logsn)$ rounds in the LOCAL model.
\end{restatable}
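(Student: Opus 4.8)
The plan is to reduce the computation of a box decomposition to the computation of a $(4,5)$-ruling set, and then invoke a standard fast distributed algorithm for ruling sets in bounded-degree-like settings. First I would observe that by Observation~\ref{prop:interval-graphs}, item~\ref{rem2}, the proper interval graph $G'$ obtained from $G$ by keeping only the intervals that are maximal for inclusion has maximum degree at most $2\omega-2$; more importantly, what we really need is that ruling sets can be computed quickly. Since $(2,1)$-ruling sets (equivalently, MIS) — and more generally $(a,b)$-ruling sets for constants $a,b$ — can be computed in $O(\log^* n)$ rounds on graphs of bounded growth, and interval graphs (indeed all graphs here, via the maximal-for-inclusion trick, or directly via the known ruling-set machinery cited as \cite{SchneiderW10, KuhnMW05}) fall into this regime, a $(4,5)$-ruling set $S$ of $G$ can be obtained in $O(\log^* n)$ rounds.

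Next I would argue that once $S$ is known, the remaining information required by the definition of a box decomposition is gathered in $O(1)$ rounds. Indeed, by Proposition~\ref{prop:basicbox}, the boxes are the closed neighborhoods of nodes of $S$, they are pairwise non-adjacent, and the interboxes have diameter at most $11$. Hence every node lies either in a box (distance $\le 1$ from some $s \in S$) or in an interbox (within distance $\le 11$ of the two bounding boxes). So after computing $S$, every node inspects its ball of radius $O(1)$ — say radius $13$, enough to see both neighboring elements of $S$ and the whole interbox it belongs to, together with the box structure on each side — and from this local view it learns which box it is in (or which two boxes it lies between) and the structure of the graph between those two boxes. This is exactly what the distributed notion of ``computing a box decomposition'' demands, and it costs only $O(1)$ additional rounds.

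The main obstacle, and the only place where care is needed, is the first step: justifying that a $(4,5)$-ruling set is computable in $O(\log^* n)$ rounds. The subtlety is that interval graphs can have unbounded maximum degree (Observation~\ref{prop:interval-graphs}, item~2), so one cannot directly cite the classical $O(\log^* n)$ ruling-set/MIS algorithm for bounded-degree graphs. The fix is to pass to the proper interval graph of maximal-for-inclusion intervals (Observation~\ref{prop:interval-graphs}, item~3), which has degree $\le 2\omega-2 \le 2\Delta$; however even this may not be a constant, so in fact the cleanest route is to invoke the known result that geometric intersection graphs such as interval graphs admit $O(\log^* n)$-round ruling-set computations because of their bounded-independence / bounded-growth structure \cite{SchneiderW10}. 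Since a node of $S$ and its box are definable from the adjacency alone, the ruling set output on this auxiliary structure transfers back to $G$. I would spell out this reduction carefully, as it is the substantive content; the rest is bookkeeping of constant radii.

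\begin{proof}
By Observation~\ref{prop:interval-graphs}, interval graphs have bounded independence (more precisely, they are of bounded growth: every ball of radius $r$ centered at a vertex $v$ is covered by $O(r)$ cliques, each of size at most $\omega$, and an independent set in such a ball has size $O(r)$ in terms of the clique-path structure, independently of $\Delta$). For such graph classes it is known that an $(a,b)$-ruling set for any constants $a \le b+1$ can be computed in $O(\log^* n)$ rounds in the LOCAL model~\cite{SchneiderW10, KuhnMW05}. Applying this with $a=4$, $b=5$ yields a $(4,5)$-ruling set $S$ of $G$ in $O(\log^* n)$ rounds.

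It remains to argue that the additional data required by the definition of a box decomposition can be collected in $O(1)$ further rounds once $S$ is known. By Proposition~\ref{prop:basicbox}, the box of $s \in S$ is $N[s]$, distinct boxes are non-adjacent, the graph obtained by contracting boxes and interboxes is a path (the virtual path), and every interbox has diameter at most $11$. Consequently, every vertex $v$ of $G$ satisfies exactly one of the following: either $v \in N[s]$ for a (unique) $s \in S$, in which case $v$ lies in the box of $s$; or $v$ lies in an interbox, in which case $v$ is within distance at most $11$ of each of the two boxes bounding that interbox, hence within distance at most $12$ of the two corresponding elements of $S$. In either case, the ball of radius $13$ around $v$ in $G$ contains the (at most two) relevant elements of $S$, the entire interbox(es) incident to $v$, and a portion of each adjacent box sufficient to reconstruct, for $v$, which box it belongs to (or which two boxes it lies between) together with the induced subgraph of $G$ between those two boxes. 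Thus after computing $S$, each vertex performs one communication phase of radius $13$ to assemble this information. The total round complexity is $O(\log^* n) + O(1) = O(\log^* n)$.
\end{proof}
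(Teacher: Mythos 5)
Your overall architecture (compute a $(4,5)$-ruling set, then a constant-radius lookup to recover the box structure) matches the paper's, and the second half of your argument is fine. The gap is in the first half: the claim that general interval graphs have bounded independence / bounded growth, which you use to invoke the $O(\logsn)$ ruling-set machinery of \cite{SchneiderW10, KuhnMW05} directly on $G$, is false. In an interval graph the closed neighborhood of a single vertex can already contain an independent set of size $n-1$: take one long interval and $n-1$ pairwise disjoint short intervals inside it --- exactly the example in item~2 of Observation~\ref{prop:interval-graphs}. So the independence number of a radius-$1$ ball is not bounded by any function of $r$, $\omega$ or $\Delta$, and the bounded-independence framework does not apply to $G$ itself. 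Your parenthetical justification (``an independent set in such a ball has size $O(r)$ \dots independently of $\Delta$'') is precisely the statement that fails.

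The fix is the one you mention in your preamble and then discard: restrict to the intervals that are maximal for inclusion. That induced subgraph is a \emph{proper} interval graph, and proper interval graphs do have linearly bounded independence --- any independent interval meeting a fixed interval $[a,b]$ without being contained in it must contain the point $a$ or the point $b$, so each ball of radius $r$ has independence number $O(r)$, irrespective of $\omega$ and $\Delta$. You rejected this route on the grounds that the degree bound $2\omega-2$ ``may not be a constant'', but \cite{SchneiderW10} requires bounded independence, not bounded degree, so that objection is moot. The paper's proof then proceeds in two extra steps that your writeup omits: an MIS of the proper interval subgraph, computed in $O(\logsn)$ rounds via \cite{SchneiderW10}, is a $(2,2)$-ruling set of the whole graph $G$ (every interval is within distance $1$ of a maximal one); ordering the chosen intervals along the line yields a virtual path on which a second MIS computation (another $O(\logsn)$ rounds) spaces them out into the desired $(4,5)$-ruling set. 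Some version of these steps is needed; as written, the proof does not go through.
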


\begin{proof}
Once a $(4, 5)$-ruling-set $S$ is computed, every node can just look at constant distance to find out the box structure around itself.

In order to compute the $(4, 5)$-ruling-set $S$, we basically follow the line of reasoning of \cite{HalldorssonK20}. 
First, the nodes can, in constant time, decide whether their intervals are maximal for inclusion (the node just needs to check if its neighborhood is included in one of its neighbors' neighborhood). Let us denote by $W$ these vertices. Then, because the subgraph $G'$ of $G$ induced by $W$ is a proper interval graph, one can use the algorithm of \cite{SchneiderW10} and get a maximal independent set $M$ of $G'$ in time $O(\logsn)$.\footnote{Paper \cite{SchneiderW10} gives an algorithm working in graph classes that have bounded independence, and proper intervals is such a class.} 
This maximal independent set is a $(2, 2)$-ruling set of the original graph $G$. 
Indeed, the set $M$ is by definition a $(2,1)$-ruling-set of $G'$, and every node of $G$ is at distance at most one from a node of $G'$.

Notice that none of those intervals overlap, since $M$ is an independent set. Hence, we can order them according to their increasing extremities of the corresponding intervals, creating a new path (two intervals are connected if they follow directly one another). Those nodes can simulate this path in a distributed manner. 
We simply compute an MIS in this path. The nodes of this new MIS $S$ are at distance at least $4$ in the original graph $G$, and any node is at distance at most $5$ from a node of $S$.
\end{proof}

Let $M$ be a box decomposition. As in the proof of Lemma~\ref{lem:compute-box-decomposition}, there is a natural virtual path between these boxes. 
Let $B,B'$ be two boxes of a box decomposition. We say that the boxes $B,B'$ are at distance $r$, if they are at distance $r$ in the virtual path of the box decomposition.
The subgraph between $B$ and $B'$ is the subgraph of $G$ composed of the boxes $B,B'$, all the boxes $B$ and $B'$ (in the virtual path) and all the interboxes between $B$ and $B'$. Note that if the boxes $B$ and $B'$ are at distance $r$, then the subgraph between $B$ and $B'$ can be computed in $O(r)$ rounds in the LOCAL model.

We will now prove a technical lemma that will be helpful in several places of the paper. 

\begin{restatable}{lemma}{LemSwitch}
\label{lem:switch}
Let $A,B$ be two boxes of two nodes at distance $3$ and $H$ be the subgraph between $A$ and $B$.
Let $\alpha$ be a proper $k$-coloring  of $H$, and let $x$ and $y$ be two arbitrary colors. 
Let $\beta$ be the $k$-coloring of $H$ made by permuting $x$ and $y$ in $\alpha$.
There exists a $(k+1)$-coloring of $H$, that corresponds to $\alpha$ on $A$ and $\beta$ on $B$.
\end{restatable}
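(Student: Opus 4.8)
The plan is to reduce the statement to a careful surgery on the Kempe components of $\alpha$ for the two colours $x$ and $y$. First I would fix a clique path $Q_1,\dots,Q_m$ of the interval graph $H$. By the structure of the box decomposition (Proposition~\ref{prop:basicbox}) together with Remark~\ref{rmk:separatorinterval}, and since $A$ and $B$ are the two extreme boxes of $H$ at distance $3$ in the virtual path, $A$ occupies a prefix $Q_1,\dots,Q_a$ of the clique path and $B$ a suffix $Q_b,\dots,Q_m$, with $a<b$, the two intermediate boxes $B_1,B_2$ and the interboxes around them occupying the cliques strictly between $Q_a$ and $Q_b$. Two facts about this layout will be used repeatedly: (i) every clique $Q_\ell$ with $a<\ell<b$ separates $A$ from $B$ in $H$; and (ii) no vertex of $H$ is adjacent both to a vertex of $A$ and to a vertex of $B$ — indeed a vertex adjacent to some vertex of $A$ lies in $A$ or in the interbox just after $A$ (there are no edges between distinct boxes, Proposition~\ref{prop:basicbox}), so its interval cannot reach past $B_1$, hence not into the cliques occupied by $B$; symmetrically on the other side.

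Since $\beta$ is obtained from $\alpha$ by exchanging $x$ and $y$, the two colourings differ only on the vertices coloured $x$ or $y$ by $\alpha$, so I would work with the $(x,y)$-components of $\alpha$ in $H$. For a component $C$ meeting $A$ but not $B$, keep the colouring $\alpha$ on $C$; for a component meeting $B$ but not $A$, exchange $x$ and $y$ on $C$ (so it agrees with $\beta$); for a component meeting neither, keep $\alpha$ (the choice is irrelevant). Any such choice is proper inside the $(x,y)$-subgraph of $H$.

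The core case is a component $C$ meeting both $A$ and $B$; here the spare colour is needed. I would recolour to colour $k+1$ every vertex of $C$ that is coloured $x$ by $\alpha$ and lies in some clique $Q_\ell$ with $a<\ell<b$; call this set $X_C$. By (i)--(ii) such vertices lie neither in $A$ nor in $B$. The key claim is that in $C\setminus X_C$ no vertex reachable from $C\cap A$ coincides with, or is adjacent to, a vertex of $C\cap B$; granting it, I would colour each connected piece of $C\setminus X_C$ by $\alpha$ if it meets $A$ and by $\beta$ otherwise (the claim guarantees no piece meets both). Over all such components $C$, the recoloured set $\bigcup_C X_C$ is contained in the colour class of $x$ under $\alpha$, hence is independent, so giving colour $k+1$ to all of it is safe; moreover every neighbour of such a vertex is coloured in $\{x,y\}$ or keeps an $\alpha$-colour outside $\{x,y,k+1\}$, so no monochromatic edge is created. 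Finally, every vertex not coloured $x$ or $y$ by $\alpha$ keeps its colour, which equals its $\beta$-colour; a routine check shows the resulting colouring $\gamma$ is proper, uses at most $k+1$ colours, and satisfies $\gamma|_A=\alpha|_A$ and $\gamma|_B=\beta|_B$.

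The main obstacle is the disconnection claim, and this is where the interval structure is essential. Inside a Kempe component the vertices of a fixed colour are pairwise non-adjacent, hence have pairwise disjoint intervals, and two consecutive maximal cliques $Q_\ell,Q_{\ell+1}$ of the clique path can only be bridged by a vertex whose interval meets both, i.e.\ a vertex of $Q_\ell\cap Q_{\ell+1}$. After deleting $X_C$, the surviving vertices of $C$ in the intermediate cliques $Q_{a+1},\dots,Q_{b-1}$ are coloured $y$; list the distinct ones from left to right. Between two consecutive survivors, either some intermediate clique now contains no vertex of $C$ at all (so $C\setminus X_C$ is disconnected across it), or their intervals abut at a seam $Q_e\mid Q_{e+1}$, and then the only candidate to cross the seam would have to lie in $Q_e\cap Q_{e+1}$ and be coloured $x$ — but that vertex is in $X_C$, hence deleted. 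Since a vertex of $C\cap A$ has its interval inside $Q_1,\dots,Q_a$ and a vertex of $C\cap B$ inside $Q_b,\dots,Q_m$, and fact (ii) rules out the degenerate case of a single surviving $y$-vertex of $C$ reaching from $A$'s part of the clique path to $B$'s, it follows that no path of $C\setminus X_C$ joins $C\cap A$ to $C\cap B$, which is the claim.
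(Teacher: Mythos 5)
Your proof is correct, but it takes a genuinely different route from the paper's. The paper does not touch Kempe components here: it runs a three-stage ``wave'' that uses the spare colour as a buffer, recolouring $x\to k+1$ on everything beyond $A$ and its adjacent interbox, then $y\to x$ beyond the next box and interbox, then $k+1\to y$ beyond the following ones; each stage recolours an independent set with a colour absent from its neighbourhood because each interbox separates the graph, and the net effect on $B$ is exactly the $(x,y)$-swap (some middle vertices simply keep colour $k+1$). This is why the statement asks for distance $3$: one intermediate region per stage. You instead analyse the $(x,y)$-Kempe components, swap those hanging off $B$ only, and cut the components spanning from $A$ to $B$ by sending their $x$-coloured vertices of the intermediate cliques to colour $k+1$; the burden then falls entirely on the disconnection claim. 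That claim is true and your sketch contains the needed ingredients (separator cliques, disjointness of same-coloured intervals, no common neighbour of $A$ and $B$), but the ``consecutive survivors / seam'' case analysis is the delicate part: one should also check that some gap or seam lies strictly between the cliques of $A$ and those of $B$ --- a cleaner way is to take, on a hypothetical $A$--$B$ path in $C\setminus X_C$, the last vertex meeting a clique of index at most $a$, and derive a contradiction with your fact (ii) or with the deletion of intermediate $x$-vertices. Also, the assertion that $X_C$ avoids $A$ and $B$ really follows from the definition of the indices $a$ and $b$ (vertices of $A$ appear only in cliques of index at most $a$), not from (i)--(ii). What each approach buys: the paper's is shorter, needs no Kempe machinery, and directly yields a $3$-step recolouring schedule; yours spends the extra colour on fewer vertices and is closer in spirit to the distributed Kempe-cutting of Section~\ref{sec:kempe}, at the price of a more delicate connectivity argument.
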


\begin{proof}
{Let $B_0=A$, $B_1$, $B_2$ and $B_3=B$ be the boxes between $A$ and $B$.}
We denote by $X_{i,i+1}$ the interbox between the boxes $B_i$ and $B_{i+1}$.
We start from the coloring $\alpha$, and modify it several times, until we reach a coloring satisfying the condition of the lemma. 

First, consider a coloring $\alpha_1$ that coincides with $\alpha$ on the nodes in $A\cup X_{0,1}$ and on the nodes of $H$ that do not have color $x$ in $\alpha$. The nodes that are outside $A\cup X_{0,1}$ and have color $x$ in $\alpha$ are given color $k+1$. Note that we have just recolored an independent set with a new color, therefore $\alpha_1$ is a proper coloring.

Second, consider $\alpha_2$ that coincides with $\alpha_1
$ on the nodes in $A\cup X_{0,1}\cup B_1\cup X_{1,2}$ and on the nodes of $H$ that do not have color $y$ in $\alpha_1$. The nodes that are outside $A\cup X_{0,1}\cup B_1\cup X_{1,2}$ and have color $y$ are given color $x$. Again, the coloring computed is proper, because we have recolored an independent set with a color that was not used in this part of the graph. Indeed, all nodes of color $x$ in $\alpha_1$ are in $A\cup X_{0,1}$, and by definition of the boxes, $X_{1,2}$ forms a cut between those nodes and the newly colored nodes.

Finally, consider $\alpha_3$ that coincides with $\alpha_2$ on the nodes in $A\cup X_{0,1}\cup B_1\cup X_{1,2}\cup B_2\cup X_{2,3}$ and on the nodes of $H$ that do not have color $k+1$ in $\alpha_2$. The nodes that are outside $A\cup X_{0,1}\cup B_1\cup X_{1,2}\cup B_2\cup X_{2,3}$ and have color $k+1$ are given color $y$. Those nodes form an independent set, and do not have color $y$ in their neighborhood. Indeed, all nodes of color $y$ with $\alpha_2$ are in $A\cup X_{0,1}\cup B_1\cup X_{1,2}$, and by definition of the boxes, $X_{2,3}$ forms a cut between those nodes and the newly colored nodes.

The coloring $\alpha_3$ satisfies the conditions of the lemma. 
\end{proof}

\begin{restatable}{lemma}{LemABcoloring}
\label{lem:coloring-interval-border}
Consider the subgraph $H$ induced by the nodes of two boxes $A$ and $B$, separated by at least $3k$ other boxes, and all the nodes in between.
Consider also two arbitrary proper $k$-coloring $\alpha$ and $\beta$ of $H$. 
Then there exists a $(k+1)$-coloring of $H$, that corresponds to $\alpha$ on $A$ and $\beta$ on $B$.
\end{restatable}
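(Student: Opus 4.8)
The plan is to reduce the general case of Lemma~\ref{lem:coloring-interval-border} to repeated applications of Lemma~\ref{lem:switch}, by expressing the change of coloring from $\alpha$ to $\beta$ along the virtual path as a product of transpositions of colors, each realized over a stretch of three consecutive boxes. First I would recall the standard fact that any permutation of the color set $\{1,\dots,k\}$ can be written as a product of at most $k-1$ transpositions, and note that in fact the relevant object is not a single global permutation but the comparison of $\alpha$ and $\beta$: I want to interpolate between them. Since both $\alpha$ and $\beta$ are proper $k$-colorings of $H$, there is nothing forcing $\beta$ to be a permutation of $\alpha$, so the right setup is slightly different — I would instead first transform $\alpha$ on the far side into $\beta$ by a sequence of local modifications, one per ``slab'' of three boxes, and control that each modification only touches a bounded part of the path and that the interfaces remain consistent.

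Concretely, here is the approach I would carry out. Partition the $3k$ (or more) intermediate boxes between $A$ and $B$ into $k-1$ consecutive blocks, each consisting of at least three boxes, say with boundary boxes $A = C_0, C_1, \dots, C_{k-1} = B$ where $C_i$ and $C_{i+1}$ are at distance $\ge 3$ in the virtual path. I would then build a sequence of $(k+1)$-colorings $\gamma_0, \gamma_1, \dots, \gamma_{k-1}$ of $H$ such that $\gamma_i$ agrees with $\alpha$ on everything ``to the left of'' $C_i$ and with a suitably shifted coloring on everything to the right, with the key invariant that $\gamma_i$ restricted to $C_i$ is proper and matches the target pattern progressively. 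The engine for passing from $\gamma_i$ to $\gamma_{i+1}$ is Lemma~\ref{lem:switch} applied to the subgraph between $C_i$ and $C_{i+1}$: that lemma lets us realize, within three boxes and using only color $k+1$ as a buffer, a transformation that leaves the coloring untouched on $C_i$ (the box $A$ of the lemma) and applies a transposition of two colors $x,y$ on $C_{i+1}$ (the box $B$ of the lemma). By choosing the transpositions so that their composition carries the restriction $\alpha|_B$ to $\beta|_B$ — and more globally carries $\alpha$ on the right portion to $\beta$ — after $k-1$ steps we land on a coloring that equals $\alpha$ on $A$ and $\beta$ on $B$.

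The subtlety I expect to be the main obstacle is making the bookkeeping honest: Lemma~\ref{lem:switch} as stated does a \emph{global} permutation of $x$ and $y$ to the right of the three-box window, not a localized one, so I must track what happens on the entire right part of $H$ through the whole sequence, not just on the two boundary boxes. The clean way to do this is to think of it as: we apply to the entire suffix of $H$ (everything from $C_i$ onward, or rather from $C_{i+1}$ onward) a transposition $\tau_i$ of colors, and the three-box gadget of Lemma~\ref{lem:switch} is exactly the mechanism that ``commits'' this transposition while freezing the prefix. Then the composite transformation applied to the far box $B$ is $\tau_{k-1}\circ\cdots\circ\tau_1$ (with $\tau$'s chosen appropriately), and since $\alpha$ and $\beta$ are both proper colorings of the same graph $H$ they need not differ by a permutation globally — but on the single clique-structured window near $B$ we only need to fix the restriction, and any bijection of a finite set extends to a permutation of $\{1,\dots,k\}$, so we choose the $\tau_i$'s to compose to any fixed permutation sending the multiset of colors used by $\alpha$ near $B$ to those used by $\beta$. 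I would finish by checking properness at each interface (automatic from Lemma~\ref{lem:switch}, since each gadget produces a proper coloring and the boxes act as separators), and by verifying the distance budget: $k-1$ blocks of $3$ boxes each require $3(k-1) < 3k$ intermediate boxes, matching the hypothesis, so everything fits.
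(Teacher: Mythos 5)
Your proposal is correct and follows essentially the same route as the paper: reduce the discrepancy between $\alpha$ and $\beta$ near $B$ to a permutation of colors on a clique (so that the partial bijection $\alpha(v)\mapsto\beta(v)$ extends to a permutation of $\{1,\dots,k\}$), factor that permutation into at most $k$ transpositions, and realize each transposition with one application of Lemma~\ref{lem:switch} over a window of three boxes, freezing the prefix and permuting the suffix each time. The only difference is cosmetic — the paper makes the clique reduction explicit by naming the cliques $A_c$ and $B_c$ at the inner endpoints of the boxes, whereas you gesture at ``the clique-structured window near $B$'' — but the bookkeeping and the distance budget are the same.
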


\begin{proof}
Let us make a preliminary observation. 
The notion of boxes we have defined is handy in the distributed setting, but for reasoning about coloring, cliques are easier to manipulate. 
Indeed, in a clique, going from one coloring to a second can be seen as a permutation of colors, because each color is used only once. On the other hand, in a box, several intervals can have the same color.
To keep the idea of a color permutation, we will consider $A_c$ (resp. $B_c$), which is the clique of the intervals intersecting the right-most (resp. left-most) endpoint of the center of $A$ (resp. $B$). If we can build a coloring as described in the lemma for $A_c$ and $B_c$, then we can directly extend it to $A$ and $B$.

Now the idea is to transform $\alpha$ into $\beta$ by using $k$ times Lemma \ref{lem:switch}. The permutation of colors from $\alpha$ to $\beta$ in $B_c$ can be done by a sequence of at most $k$ switches of pairs of colors ($x_1$ with $y_1$, $x_2$ with $y_2$, etc.).

Let $B_1$, $B_2$, etc. be the boxes between $A$ and $B$ (with $A_c=B_0$ and $B_c=B_{3k+1}$). We will again denote by $X_{i,i+1}$ the nodes between the box $B_i$ and $B_{i+1}$. We will create iteratively a sequence of coloring as follows:
\begin{enumerate}
    \item $\gamma_1$ is the coloring of $B_3\bigcup\limits_{j\ge3}(X_{j,j+1}\cup B_{j+1})$ where all nodes have the same color than in $\alpha$ except that color $x_1$ and  $y_1$ have been exchanged.
    \item We apply Lemma \ref{lem:switch} on $A$ and $B_3\bigcup\limits_{j\ge3}(X_{j,j+1}\cup B_{j+1})$ to get $\alpha_1$. (Note that $B_3\bigcup\limits_{j\ge3}(X_{j,j+1}\cup B_{j+1})$ is not a box, but the construction of Lemma~\ref{lem:switch} works just the same. Also note that the last box is $B_c$ and not $B$.)
    \item $\forall i\ge2, \gamma_i$ is the coloring of $B_{3i}\bigcup\limits_{j\ge3i}(X_{j,j+1}\cup B_{j+1})$ where all nodes have the same color as in $\alpha_{i-1}$ besides the nodes of color $x_{i}$ that gets color $y_i$ and reversely.
    \item We apply Lemma \ref{lem:switch} on $A\bigcup\limits_{j\le3(i-1)}(X_{j,j+1}\cup B_{j+1})$ and $B_3\bigcup\limits_{j\ge3}(X_{j,j+1}\cup B_{j+1})$ to get $\alpha_{i+1}$.
\end{enumerate}
At the end, we get that the colors of nodes on $A$ have never changed, and that we have applied the sequence of switches through the path meaning that $\alpha_{i+1}$ corresponds to $\beta$ on $B$.
\end{proof}

Given this tool, we easily get the following theorem.

\ThmColInterval*

\begin{proof}
One first computes the box decomposition in time $O(\logsn)$, then, given the paths of boxes, one can iterate an MIS algorithm for paths (\emph{e.g.} Cole-Vishkin algorithm \cite{ColeV86}) to compute a $(3\omega, 6\omega)$-ruling set $S$ of boxes. 
This uses $O(\omega \logsn)$ rounds. 
Then the nodes of the boxes of $S$ computes an $\omega$-coloring of their own box. Finally, we use the lemma above to fill the gaps, which also takes $O(\omega)$ rounds.
\end{proof}

\section{Recoloring interval graphs}
\label{sec:recoloring-intervals}

The goal of this section is to prove the following theorem.

\ThmIntervalle*

\medskip

The result is tight in terms of number of colors for the two last items, because of the following lemma:

\begin{lemma}
\label{lem:lower-bound}
In interval graphs of clique number $\omega$, if no additional color is allowed, then finding a recoloring from a $c$-coloring to a $c'$-coloring with $c,c'<2\omega$ requires $\Omega(n/\omega)$ rounds in the LOCAL model and a schedule of length $\Omega(n/\omega)$.
\end{lemma}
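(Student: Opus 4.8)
The plan is to exhibit one hard family of instances. Take $G:=P_n^{\omega-1}$, the $(\omega-1)$-th power of the $n$-vertex path; this is an interval graph with $\omega(G)=\omega$ and $d_G(v_i,v_j)=\lceil|i-j|/(\omega-1)\rceil$. Use the colour set $\{0,1,\dots,2\omega-2\}$, so $k:=2\omega-1<2\omega$, and set $\alpha(v_i):=(i-1)\bmod k$ and $\beta(v_i):=i\bmod k$; both are proper $k$-colourings since adjacent indices differ by at most $\omega-1<k$, and $\beta=\sigma\circ\alpha$ for the colour permutation $\sigma(x)=x+1$. First I would observe that a valid schedule does exist: under $\alpha$ the only vertices whose closed neighbourhood does not already use all $k$ colours — the only non-\emph{frozen} vertices — are the $2(\omega-1)$ vertices near the two ends of the path, and starting from those one can ``melt'' the rigid rainbow pattern one position at a time, pushing corrected colours inward, reaching $\beta$ in $O(n)$ steps. (For the lower bound only the existence of some finite schedule is needed; smaller values of $c,c'$ can be handled the same way on sparser interval graphs.)

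The core is the schedule-length bound, via an \emph{un-freezing chain} argument. Fix any valid schedule $\alpha=c_0,c_1,\dots,c_L=\beta$. Since $\alpha$ and $\beta$ disagree at every vertex, every $v_i$ is recoloured at least once, and I let $g(i)\ge1$ be the first step at which it is. If $\omega\le i\le n-\omega+1$ then $N[v_i]$ is rainbow-coloured by $\alpha$, so $v_i$ is frozen in $c_0$; hence its being recoloured at step $g(i)$ forces some neighbour $v_j$ to have been recoloured strictly earlier, i.e.\ $g(j)<g(i)$. Iterating produces a walk $v_i=u_0\sim u_1\sim\cdots\sim u_r$ with $g(u_0)>g(u_1)>\cdots>g(u_r)\ge1$, which can only stop at a vertex that is not frozen under $\alpha$, i.e.\ at one of the $2(\omega-1)$ end-vertices, call this set $\partial$. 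Thus $r\ge d_G(v_i,\partial)$ and $g(i)\ge r+1$. Plugging $i=\lceil n/2\rceil$ gives $L\ge g(v_{\lceil n/2\rceil})\ge d_G(v_{\lceil n/2\rceil},\partial)\ge n/(2\omega)-O(1)=\Omega(n/\omega)$; in particular this is linear in $n$ for fixed $\omega$.

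For the round bound I would push the same rigidity through a standard locality/indistinguishability argument: the output of a $T$-round algorithm at a vertex depends only on its radius-$T$ ball, and for a vertex near index $n/2$ this ball, when $T=o(n/\omega)$, is just a generic rainbow segment not reaching $\partial$, so the algorithm must commit to the same behaviour there as at a vertex whose true distance to $\partial$ — and hence, by the chain argument, whose forced value of $g$ — is different; the identifiers are aligned between the two instances by the usual order-invariance reduction. The step I expect to be the main obstacle is making the rigidity quantitative, i.e.\ ruling out ``parallel shortcuts'': a single step may legitimately recolour an arbitrarily large independent set, and one must argue it still cannot advance the ``melting frontier'' by more than $O(\omega)$ positions. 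The chain argument handles this cleanly for the length bound — it never counts how many vertices move per step, only that each internal vertex's first move is causally traceable back to the boundary — and the corresponding care, together with the identifier bookkeeping, is exactly what the round bound requires.
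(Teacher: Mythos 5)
Your construction and argument are essentially the paper's own proof: the same power-of-a-path instance with the colouring $i \bmod (2\omega-1)$, the observation that every interior vertex is frozen because its closed neighbourhood is rainbow, and the conclusion that recolouring must propagate inward from the $O(\omega)$ non-frozen end vertices; your un-freezing-chain argument is in fact a more careful formalization of the paper's one-line justification of the $\Omega(n/\omega)$ schedule length. The round lower bound is sketched at the same level of detail as in the paper (which likewise only asserts that a middle node ``needs to know its distance to an extremity''), so the proposal matches the published argument.
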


\begin{proof}
We will consider the $\omega$-th power of a path.
We can build an interval representation of such a graph, by representing every vertex at position $i$ by the interval $[i+1/4, i+k+3/4]$.
If we consider a power of a path of clique number $\omega$, and color its $i$-th vertex with color $i \mod 2\omega-1$, all the vertices but the first and the last $\omega$ ones are \emph{frozen} (i.e. we cannot change their color without changing the color of some vertex in its neighborhood before). 
Thus, a recoloring can only happen little by little starting from the extremities of the interval graph. 
Thus, a recoloring schedule has length $\Omega(n)$ and the nodes in the middle of the power path cannot stop before $\Omega(n)$ rounds, as their slot in the schedule will depend on the length of the power path. Moreover, a node in the middle needs to know its distance to an extremity, which takes $\Omega(n)$ rounds in the LOCAL model.
\end{proof}

\subsection{Outline of the proof}

The rest of the section is devoted to prove Theorem~\ref{thm:recol-interval}. In this subsection, we will first explain the shape of the proof, and then we detail the proof in the next sections. Let $\alpha,\beta$ be two $k$-colorings.
\medskip

\noindent
\textbf{Step 1:} Compute a $(4,5)$-ruling set $S$ in $O(\logsn)$ rounds.
\medskip

\noindent
\textbf{Step 2:} In graph recoloring, instead of transforming $\alpha$ into $\beta$, a classical method consists in proving that both $\alpha$ and $\beta$ can be recolored into a so-called canonical coloring $\gamma$ with desirable properties (see e.g.~\cite{Bonamy0LPP14,BousquetB19}...). If transformations $\mathcal{S}_1,\mathcal{S}_2$ from respectively $\alpha$ and $\beta$ to $\gamma$ exist, then $\alpha$ can be transformed into $\beta$ via the transformation $\mathcal{S}_1\mathcal{S}_2^{-1}$. \\
Theorem~\ref{thm:col-interval}
 ensures that an $(\omega+1)$-coloring of $G$ can be found in $O(\omega \logsn)$ rounds. 
 Let us denote by $\gamma$ such a coloring. The goal of the proof will consist in proving that we can find a transformation from $\alpha$ to $\gamma$.
\medskip

\noindent
\textbf{Step 3:} (Section~\ref{sec:reducecolors}). \\
We show that if $k \ge 2\omega$, then we can reduce the number of colors without any additional color. We first compute an independent set $S'$ of $S$ at constant distance. The main idea of this step consists in proving that all the vertices in the interboxes between two consecutive vertices of $S'$ but the vertices of the boxes of $S'$ can be recolored with a color smaller than $2\omega$ without recoloring the boxes of $S'$ (which guarantees that we can perform this recoloring simultaneously everywhere in the graph). By repeating this operation twice, we then prove that all the vertices are recolored with a color smaller than $2\omega$.

When the number is at least $4\omega$, we prove a stronger result, since we directly provide a transformation from $\alpha$ to $\beta$ in $O(\omega\Delta)$ rounds. 
\medskip

\noindent
\textbf{Step 4:} (Section~\ref{sec:kempe}) \\
At this point, after applying Step 3 if $k \ge 2\omega$, we can assume that we have one additional color. Indeed, in the two first cases of Theorem~\ref{thm:recol-interval}, we are allowed to use at least one extra color, and when $k \ge 2\omega$, we have freed at least one color at Step 3.

The goal of Step 4 consists in coloring some boxes with their colors in $\gamma$. To do that, we will perform some Kempe changes, cut at some large enough distance, using one additional color. 
By repeating this process several times, we will prove that a box plus all its neighbors at distance $\Omega(\poly \Delta)$ can be colored with the target coloring.

Using this technique, we can prove that some portions of diameter $\Omega(\poly(\Delta))$ of the interval graph which are at distance $f(\Delta)$ from each other are colored with the target coloring $\gamma$. 
\medskip

\noindent
\textbf{Step 5:} (Section~\ref{sec:buffer}) \\
By Step 4, we can assume that some portions of diameter $\Omega(\poly(\Delta))$ of the interval graph which are at distance $f(\Delta)$ from each other are colored with the target coloring $\gamma$.
We can now use as a black-box a result of Bartier and Bousquet~\cite{BousquetB19} that ensures that we can recolor all the vertices between a consecutive set of boxes with the target coloring without recoloring the border of these sets as long as a sufficiently large number of consecutive vertices of the initial coloring are colored "nicely". Since many consecutive vertices of the initial coloring are colored with the target coloring and such a coloring is "nice", we can use the result of~\cite{BousquetB19} as a blackbox.

After Step 5, all the vertices have received their final color, which concludes the proof.

\subsection{Reducing the number of colors to $2\omega-1$}\label{sec:reducecolors}

Let $G$ be an interval graph and $S$ be a $(4,5)$-ruling set of $G$. Let $x,y \in S$. The graph $G_{x,y}$ between $x$ and $y$ is the graph containing the vertices of the boxes of $x$ and $y$ and all the vertices that are between $x$ and $y$.
Note that since every vertex is at distance at most $5$ from a vertex of the ruling set $S$, two consecutive vertices of a ruling set are at distance at most $10$. 

 Observe that the border of $G_{x,y}$ is a subset of the boxes of $x$ and $y$. 
Moreover, this border contains at most $2\omega$ nodes since otherwise at least $\omega$ neighbors of $x$ or $\omega$ neighbors of $y$ would have neighbors outside $G_{x,y}$, a contradiction with the fact that $G$ has clique number $\omega$. We can also observe that if the graph $G_{x,y}$ between $x$ and $y$ can be found in $t$ rounds, then we can determine the border of $G_{x,y}$ in $t+1$ rounds.

Every interval graph is $\omega$-colorable. Indeed, every interval graph admits a simplicial vertex, that is a vertex whose neighborhood is a clique (and note that this clique must have size at most~$\omega-1$). By peeling interval graphs by removing iteratively simplicial vertices, we can easily obtain an $\omega$-coloring. However, to obtain such an ordering, we need to start at the border of the graph. Indeed, if we consider the $k$-power of a path (which is $k$-degenerate), there are only two degree $k$ vertices which are at the ``borders'' of the graph. If we consider an induced subgraph of bounded diameter in the core of that graph, all the vertices have degree $2k$.

Let $G$ be a graph and $Z$ be a subset of $V$. Remember that $Z$ is \emph{$d$-degenerate} if there exists an ordering $v_1,\ldots,v_r$ of $Z$ such that $N(v_i) \cap ((V \setminus Z) \cup \{ v_{i+1},\ldots,v_r \})$ (that is the neighborhood of $v_i$ where we remove any $v_j$ with $j<i$) is at most $d$ (Definition~\ref{def:degenerate}). In other words, it is possible to find a degeneracy ordering of $V$ starting with the vertices of $Z$ such that, when they are deleted, the vertices of $Z$ have degree at most $d$.
Even if an induced subgraph of an interval graph is not $d$-degenerate, we can prove the following:

\begin{restatable}{lemma}{Lem2omegadegen}
\label{lem:2omegadegen}
Let $G$ be an interval graph. Let $x,y$ be two vertices of a $(4,5)$-ruling set $S$. Let $V'$ be the subset of vertices between $x$ and $y$, $G'$ be the induced subgraph of $G$ by $V'$,  and $X$ be the border of $G'$. The set $V' \setminus X$ is $(2\omega-2)$-degenerate.
\end{restatable}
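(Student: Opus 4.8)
The plan is to exhibit an explicit degeneracy ordering of $V'$ that starts with the vertices of $X$ and then peels off the remaining vertices $V' \setminus X$ one by one, always keeping the ``back-degree'' (the number of neighbors among the not-yet-deleted vertices, together with $X$ itself counted as already present) at most $2\omega-2$. First I would fix an interval representation of $G'$ and recall from Remark~\ref{rmk:separatorinterval} and the discussion of borders that $X$ splits into two cliques $A$ and $B$, where $A$ lives in the first clique of the clique path of $G'$ and $B$ in the last. The key structural fact I want to use is that $G'$ is an interval graph, hence $\omega$-colorable and $(\omega-1)$-degenerate as an abstract graph (Observation~\ref{prop:interval-graphs}\ref{rem1}); but more importantly it has a clique-path structure, so every vertex $v \in V' \setminus X$ is contained in some set of consecutive cliques that is ``interior'', meaning the cliques strictly between the first and last clique of the path.

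The main step is the peeling argument. I would order $V' \setminus X$ and process its vertices by choosing, at each stage, a simplicial-type vertex of the graph induced by $X \cup (\text{remaining part of } V'\setminus X)$ that lies ``innermost'': concretely, pick the vertex $v$ whose interval has its left endpoint as far to the right as possible among remaining vertices whose right endpoint is not the global rightmost (and symmetrically handle the rightmost block), or more cleanly, work from both ends of the clique path inward. When such a $v$ is removed, its surviving neighbors are exactly the intervals still present that cross one of the two endpoints of $v$'s interval; each such crossing point lies in some clique of the clique path of the remaining graph, which has at most $\omega$ vertices, and $v$ itself occupies one slot of each, so $v$ has at most $2(\omega-1) = 2\omega-2$ surviving neighbors. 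The only subtlety is that we must ensure the vertices of $X$ are all placed before the vertices of $V'\setminus X$ in the ordering — this is automatic since we never delete a vertex of $X$ during this phase, so when we delete $v \in V'\setminus X$ the vertices of $X$ still count as ``present'' and are included in the back-degree bound, which is exactly what the degeneracy-of-a-subset definition (quoted just before the lemma) requires.

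The step I expect to be the main obstacle is verifying that one can always find an ``innermost removable'' vertex whose surviving neighborhood genuinely crosses only two clique-points, i.e. that the greedy choice never gets stuck with a vertex that still sees three or more distinct cliques of survivors. The resolution is to insist on removing, at each step, a vertex that is simplicial in the current graph $G'$ minus the already-removed interior vertices — such a vertex exists because that graph is still an interval graph (an induced subgraph of an interval graph), hence chordal, hence has a simplicial vertex; and we additionally require it to be one not in $X$, which is possible as long as $V'\setminus X$ is nonempty because $X$ consists of two cliques sitting at the two ends of the clique path, so at least one simplicial vertex of the current graph lies strictly inside or is an interior-clique vertex not belonging to $X$ (if both ``end'' simplicial vertices happened to lie in $X$, peel toward the interior until an interior simplicial vertex appears). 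A simplicial vertex has its whole neighborhood forming a clique, so its back-degree is at most $\omega-1 \le 2\omega-2$; and even when no vertex is simplicial after restricting away $X$, the crossing-point argument above still caps the back-degree at $2\omega-2$. Putting these together yields the desired ordering and proves that $V'\setminus X$ is $(2\omega-2)$-degenerate.
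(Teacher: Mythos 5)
There is a genuine gap. Your overall skeleton is the same as the paper's (peel the vertices of $V'\setminus X$ one at a time, keeping $X$ counted as always present, and bound the back-degree of the peeled vertex $v$ by $2(\omega-1)$ because each of the two endpoints of $v$'s interval lies in a clique of size at most $\omega$). But that bound is only valid if \emph{every} surviving neighbor of $v$ actually crosses one of the two endpoints of $v$'s interval, i.e.\ if no surviving neighbor is an interval strictly nested inside $v$'s. This is exactly the danger flagged in Observation~\ref{prop:interval-graphs}: a single long interval containing $n-1$ pairwise disjoint short intervals has $\omega=2$ but degree $n-1$, so if your ordering ever deletes the long interval while the short ones are still present, its back-degree is unbounded. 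Neither of your proposed selection rules excludes this (``work from both ends of the clique path inward'' would delete such a long interval early, since it starts leftmost), and your fallback via simplicial vertices fails on two counts: in the $(\omega-1)$-th power of a path the only simplicial vertices are in the two extreme cliques, which here lie in $X$, so there may be \emph{no} simplicial vertex in $V'\setminus X$; and you cannot ``peel toward the interior'' through $X$, because by the definition of degeneracy of a subset the vertices of $X$ must always be counted in the back-degree. Your closing sentence (``even when no vertex is simplicial\dots the crossing-point argument above still caps the back-degree'') simply asserts the missing claim.

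The missing idea is the selection rule: at each stage pick $v\in V'\setminus X$ of \emph{minimum degree} in the current graph, breaking ties by \emph{minimum interval length}. Then no surviving neighbor $w$ can have its interval contained in $v$'s: such a $w$ would satisfy $N(w)\subseteq N(v)$, so either $w\in V'\setminus X$ and $w$ would have been a strictly better choice than $v$ (no larger degree and shorter), or $w\in X$ and then $v$ would also have a neighbor outside $V'$, forcing $v\in X$, a contradiction. With nesting ruled out, every neighbor of $v$ crosses one of $v$'s two endpoints, each endpoint is crossed by at most $\omega$ intervals including $v$ itself, and the $2\omega-2$ bound follows; deleting $v$ and repeating gives the ordering.
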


\begin{proof}
Consider an interval representation of $G'$ where the vertices with leftmost extremities are the vertices of $N(x) \cap X$, and vertices with the rightmost extremities are the vertices of $N(y) \cap X$. Let $v$ be a vertex in $V' \setminus X$ of minimum degree; if several choices are possible, let us choose $v$ of minimum length. We claim that $v$ has degree at most $2\omega-2$. Indeed, every vertex intersecting the interval of $v$ intersects the left or the right endpoint of $v$ since 
if a vertex $w$ is included in the interval of $v$, either it is in $V'$ and we have a contradiction or it is in $X$ but then $v$ also is in $X$ since $N(v)$ contains $N(w)$.
So all the neighbors of $v$ contain the left extremity of $v$ or the right extremity of $v$ in their interval. Since at most $\omega$ vertices pass through each of point of the line (all the vertices containing a point being a clique) and $v$ is one of them, $v$ has at most $2\omega-2$ neighbors.

We can now delete $v$ and repeat until $V' \setminus X$ is empty to obtain the desired degeneracy ordering.
\end{proof}

Note that the function on $\omega$ cannot be improved since it is tight for an $(\omega-1)$-power path.%

By abuse of notation, we say that two vertices $x,y$ of a ruling set $S$ are \emph{at distance $r$} if they are at distance $r$ in the ruling path.

One can then obtain the following corollary which is a simple adaptation of a classical statement of graph recoloring (see e.g.~\cite{Cereceda,dyer2006randomly,BousquetP16}):

\begin{restatable}{corollary}{LemRecolbetweenruling2}
\label{lem:recolbetweenruling2}
Let $G$ be an interval graph and $k \ge 2\omega$. Let $x,y$ be two vertices of a $(4,5)$-ruling set $S$ at distance $r$ in the graph. Let $G'$ be the graph induced by the vertices between the boxes of $x$ and $y$ and $X$ be its border. 
\begin{itemize}
    \item There exists a recoloring schedule of $V'=V(G')\setminus X$ of length at most $2^{|V'|}-1$ such that, at the end of the schedule, no vertex of $V'$ has a color larger than $2\omega-1$.
    \item Moreover, if $k \ge 4\omega$, there exists a recoloring schedule of $V'=V(G')\setminus X$ of length at most $\omega \cdot |V'|$ such that, at the end of the schedule, no vertex of $V'$ has a color larger than $2\omega-1$.
\end{itemize}
\end{restatable}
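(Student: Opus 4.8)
The plan is to keep the border $X$ (and everything outside $G'$) frozen and to recolour the vertices of $W:=V'=V(G')\setminus X$ one at a time until each of them uses a colour $\le 2\omega-1$; since a single‑vertex recolouring is always a legal recolouring step, it suffices to bound the number of such steps. Writing $m:=|W|$ and $d:=2\omega-2$, I would start from the observation that every vertex of $W$ has all its neighbours inside $V(G')=W\cup X$ (that is exactly what being outside the border means), so Lemma~\ref{lem:2omegadegen} provides an ordering $w_1,\dots,w_m$ of $W$ with $|N(w_i)\cap(X\cup\{w_{i+1},\dots,w_m\})|\le d$ for every $i$; in particular $\deg_G(w_1)\le d$. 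I would also fix, as a target, the canonical colouring $\gamma$ of $W$ obtained by scanning $w_m,w_{m-1},\dots,w_1$ and giving $w_i$ the smallest colour distinct from the (frozen) colours of $N(w_i)\cap X$ and from $\gamma(w_\ell)$ for $w_\ell\in N(w_i)\cap\{w_{i+1},\dots,w_m\}$; since at most $d$ colours are forbidden at each step, $\gamma$ uses only colours in $\{1,\dots,d+1\}=\{1,\dots,2\omega-1\}$. (Reaching \emph{any} colouring of $W$ with colours $\le 2\omega-1$ would already be enough; targeting $\gamma$ is just convenient.)

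For the first item ($k\ge 2\omega$) I would run the textbook reconfiguration argument for degenerate graphs (cf.~\cite{Cereceda,dyer2006randomly}) with $X$ frozen: recurse on $m$, delete $w_1$ (which has degree $\le d$ in $G$), recursively obtain a schedule $\Sigma'$ of length $\le 2^{m-1}-1$ recolouring $W\setminus\{w_1\}$ -- with $X$ frozen and $w_1$ absent -- into its canonical colouring, then replay $\Sigma'$ in $G$ while inserting, before each step that recolours a neighbour of $w_1$ to a colour $a$, one step recolouring $w_1$ to some colour avoiding the current colours of $N(w_1)$ together with $a$ -- such a colour exists because $|N(w_1)|\le d$ and $k\ge d+2$. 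A final step moves $w_1$ to the smallest colour avoiding its at most $d$ neighbours, which is $\le d+1=2\omega-1$. This keeps every intermediate colouring proper and gives $f(m)\le 2f(m-1)+1$, hence a schedule of length $\le 2^m-1=2^{|V'|}-1$.

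For the second item ($k\ge 4\omega$) the factor‑$2$ loss per level is no longer affordable, so instead I would invoke the classical fast‑reconfiguration result for sparse graphs: a $d$‑degenerate graph on $n$ vertices with at least $2d+2$ colours admits, between any two colourings, a recolouring sequence of length $O(d\cdot n)$ (cf.~\cite{BousquetP16}). Here $k\ge 4\omega=2(2\omega-2)+4>2d+2$, so applying this to $G[W\cup X]$ with the vertices of $X$ treated as permanently precoloured -- which is legitimate precisely because $W$ is $d$‑degenerate \emph{relative to} $X$ by Lemma~\ref{lem:2omegadegen}, so the precoloured vertices behave exactly like frozen vertices in that proof -- with $n=|V'|$, $d=2\omega-2$ and target $\gamma$, yields a schedule that recolours only vertices of $V'$, ends with all of $W$ coloured $\le 2\omega-1$, and has length $O(\omega\cdot|V'|)$.

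The first item is routine once Lemma~\ref{lem:2omegadegen} is in hand; the real obstacle is the linear bound in the second item. The recursion above pays a factor $2$ at each of $|V'|$ levels, and the naive greedy sweeps also break down: in an interval graph a vertex one would like to push out of the way may have unboundedly many neighbours still sitting at arbitrary colours (e.g.\ many short intervals nested inside a long one), so with only $O(\omega)$ colours there need not be a free colour to move it to. Obtaining a linear‑length schedule therefore depends on the global, amortised analysis underlying the fast‑reconfiguration theorem, together with the routine but essential check that freezing the border $X$ costs nothing -- which is exactly what the ``relative degeneracy'' formulation of Lemma~\ref{lem:2omegadegen} is for.
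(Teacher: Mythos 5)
Your proposal is correct and follows essentially the same route as the paper: both items rest on the relative $(2\omega-2)$-degeneracy of $V'\setminus X$ from Lemma~\ref{lem:2omegadegen}, the first via the standard single-vertex-extraction recursion giving the $2f(m-1)+1$ recurrence (the paper phrases it as a backward induction adding $v_i$ back, which is the mirror image of your deletion of $w_1$), and the second by invoking Theorem~1 of~\cite{BousquetP16} as a black box with the border treated as frozen precoloured vertices. Your write-up is, if anything, more explicit than the paper's about why a free colour exists at each inserted step and why freezing $X$ is harmless.
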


\begin{proof}
Let us denote by $v_1,\ldots,v_n$ an ordering of $V'=V(G') \setminus X$ such that $v_i$ has degree at most $2\omega-2$ in $G\setminus\{v_1,\ldots,v_{i-1}\}$, which exists since $V'$ is $2\omega-2$-degenerate in $G$ by Lemma~\ref{lem:2omegadegen}.
Let $G_i'$ be the subgraph of $G'$ induced by $\{ v_i,\ldots,v_n \} \cup X$. Note that any coloring of $G[V \setminus V']$ can be extended to $V'$ using only colors in $\{1,\ldots,2\omega-1\}$ since the degeneracy of $V'$ is  $2\omega-2$.

Let us prove the first statement using backward induction on $i$. In $G_n'$ the vertex $v_n$ is only adjacent to vertices in $X$, and $|X| \leq 2\omega-2$, hence in one step, we can recolor it to a color smaller than $2\omega$ and the conclusion follows. Assume now by induction that the conclusion holds for $G_{i+1}'$. We can extend the recoloring procedure of $G_{i+1}'$ to $G_i'$ by simply recoloring $v_i$ each time one of its neighbors has to be recolored with its current color. Since the total number of colors is at least $2\omega$, each time a neighbor of $v_i$ has to be recolored with its actual color, we can first recolor $v_i$ with a different color since its neighborhood has size at most $2\omega-2$ in $G_i'$ by definition of $v_i$. Moreover, the final color of $v_i$ might be $2\omega$, we might need one last step to recolor it with a color smaller than $2\omega$. Hence, the number of recoloring steps for $v_i$ is bounded above by the sum of the numbers of recoloring steps of the nodes $v_{i+1},\ldots, v_n$. By backward induction on $i$, we can notice that this number is at most $2^{|V(G_i')|}-1$.

The proof of the second point consists in applying the same technique, except that we have more than one choice when we recolor the vertices. We can then "look further" to choose better the next vertex to recolor. A complete analysis giving this bound can be found in Theorem~1 of~\cite{BousquetP16}.
\end{proof}

\begin{restatable}{theorem}{Thm2Omega}
\label{Thm2Omega}
Let $G$ be an interval graph, $S$ be a $(4,5)$-ruling set of $G$ and $k \ge 2\omega$ (resp. $4\omega$). 
We can reduce the number of colors from $k$ to $2\omega-1$ with a schedule of length $2^{\mathcal{O}(\Delta)}$ (resp. $O(\omega\Delta)$)
in $O(\logsn)$ rounds in the LOCAL model.
\end{restatable}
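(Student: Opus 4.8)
The plan is to combine Corollary~\ref{lem:recolbetweenruling2} with a careful choice of an independent sub-selection of the ruling set, so that the local recolorings provided by the corollary can be applied in parallel all over the graph without interfering at the boxes where they meet. First I would fix a $(4,5)$-ruling set $S$ (already computed in $O(\logsn)$ rounds by Lemma~\ref{lem:compute-box-decomposition}) and select $S' \subseteq S$ an independent set in the ruling path that is, say, a $(2,2)$-ruling set of the ruling path; this is again computable in $O(\logsn)$ rounds by running an MIS on the ruling path. Between two consecutive vertices $x,y$ of $S'$, the graph $G_{x,y}$ has its border contained in the boxes of $x$ and $y$, so applying the first (resp.\ second) item of Corollary~\ref{lem:recolbetweenruling2} to $V'=V(G_{x,y})\setminus X$ recolors everything strictly between the two boxes of $S'$ to colors $\le 2\omega-1$, \emph{without ever recoloring the boxes of $x$ and $y$}. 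Since the border $X$ is untouched, two such schedules for adjacent pairs $\{x,y\}$ and $\{y,z\}$ do not conflict, and because distinct pairs share at most a common $S'$-box that stays fixed, all these schedules can be executed synchronously in lockstep: at step $t$ each pair performs its $t$-th recoloring move, and the union of all recolored vertices at step $t$ is still an independent set (each piece recolors an independent set inside itself, and pieces only meet at untouched border cliques).

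After this first global round, every vertex except possibly those lying in the boxes of $S'$ has a color $\le 2\omega-1$. To finish I would repeat the argument once more with a \emph{shifted} independent sub-selection $S''$ of $S$: choose $S''$ so that every box of $S'$ now lies strictly between two consecutive boxes of $S''$ (concretely, pick $S''$ so that the old $S'$-centers are interior vertices of the new pieces — this is possible since $S$ has enough density and we only need $S''$ to be independent in the ruling path and to separate the $S'$-boxes from each other). Applying Corollary~\ref{lem:recolbetweenruling2} to each piece of this second decomposition recolors all remaining high-colored vertices (which now sit in the interiors of the new pieces) down to colors $\le 2\omega-1$, again in parallel and without touching the new borders. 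Two global rounds therefore suffice to bring the whole graph into a coloring using only $\{1,\dots,2\omega-1\}$.

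For the complexity bookkeeping: the set-up (both ruling-set computations and learning the relevant subgraph structure) is $O(\logsn)$ rounds, and once each node knows its piece, the recoloring schedule is produced with no further communication — the node just simulates the deterministic schedule of Corollary~\ref{lem:recolbetweenruling2} for its piece. Each piece spans $O(1)$ boxes, hence has $|V'| = O(\Delta)$ vertices (boxes are closed neighborhoods, interboxes have diameter $\le 11$), so the schedule length is $2^{O(|V'|)} = 2^{O(\Delta)}$ in the general case and $\omega\cdot|V'| = O(\omega\Delta)$ when $k \ge 4\omega$; concatenating the two global rounds only doubles this. That gives the claimed bounds.

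The main obstacle I anticipate is the parallel-composition argument: one must check that running the per-piece schedules simultaneously really does keep the global coloring proper at every intermediate step and keeps the globally recolored set independent. This hinges on the fact that the border $X$ of each piece $G_{x,y}$ is genuinely frozen throughout the schedule of Corollary~\ref{lem:recolbetweenruling2} (so adjacent pieces never disagree on shared vertices), and on the observation that two non-adjacent pieces of the decomposition are at distance $\ge 2$ in $G$ (their boxes are disjoint and the box of any $S'$-vertex separates the graph), so their internal recolorings cannot create a monochromatic edge between pieces either. A secondary subtlety is arranging the second sub-selection $S''$ so that it is simultaneously independent in the ruling path and places every $S'$-box in the interior of a piece; spelling out the exact stride needed on the ruling path is the only slightly delicate combinatorial point, but it is routine given that $S$ is a ruling set of constant parameters.
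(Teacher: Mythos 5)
Your overall strategy is the paper's: two offset rounds of parallel applications of Corollary~\ref{lem:recolbetweenruling2} to pieces delimited by sparse sub-selections of the ruling set, followed by the observation that each piece has $O(\Delta)$ vertices. But the parallel composition in your first phase --- the very point you flag as the main obstacle --- does not go through as written. If $x,y,z$ are three consecutive vertices of $S'$, the pieces $G_{x,y}$ and $G_{y,z}$ both contain the \emph{entire} box of $y$, while Corollary~\ref{lem:recolbetweenruling2} only freezes the \emph{border} of each piece, i.e.\ the subset of $y$'s box having a neighbor outside that piece. A vertex of $N[y]$ whose interval is short enough to meet neither adjacent interbox lies in the border of neither piece, hence belongs to $V'$ for both $G_{x,y}$ and $G_{y,z}$ and is recolored by both schedules concurrently. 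So your claim that ``distinct pairs share at most a common $S'$-box that stays fixed'' is false: the shared box does not stay fixed, the two schedules can give that vertex conflicting instructions at the same step, and each schedule's intermediate-properness argument assumes it alone controls the vertex. The same overlap recurs in your second phase at the $S''$-boxes.

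The paper avoids this by never letting two pieces of the same phase touch: it takes a maximal independent set $I$ at distance $6$ in the ruling path, labels each $S$-vertex by its distance to $I$, recolors in phase one the pieces spanning from label-$2$ to label-$2$ around each label-$0$ vertex (pairwise non-adjacent because a label-$3$ vertex separates any two of them), and in phase two the pieces between consecutive label-$1$ vertices, whose borders (label-$1$ boxes) were already recolored in phase one. Your argument can be repaired either in that way, or by strengthening the corollary so that the entire boxes of $x$ and $y$ are frozen rather than just the border (the degeneracy argument of Lemma~\ref{lem:2omegadegen} still works when $X$ is the whole pair of boxes); but as stated the proposal has a genuine hole. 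The complexity bookkeeping ($O(\Delta)$ vertices per piece, hence $2^{O(\Delta)}$ resp.\ $O(\omega\Delta)$ schedule length, and $O(\logsn)$ rounds) is fine.
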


\begin{proof}
For every node $v$ in $S$, we denote by $B_v$ the box of $v$. By Proposition~\ref{prop:basicbox}, the distance between two consecutive boxes is at most $8$ and the diameter of a box is at most $2$ (since all the vertices of the box are incident to the corresponding vertex of the ruling set). A maximal independent $I$ at distance $6$ in the ruling path can then be computed in $O(\logsn)$ rounds.

Each vertex of $S$ gets, as a label, its distance to $I$ in the ruling path. This labeling can be done in $O(1)$ rounds, as $S$ nodes are at distance at most 6 to $I$ in the ruling path. We can notice that between two labelled $0$ nodes, there exists (at least one) labelled $3$ node (since $I$ is an independent set of $S$ at distance $6$ in the ruling path). 

Now we perform the following recoloring sequence:
\begin{enumerate}
    \item For every $0$-labelled vertex, we denote by $x$ and $y$ its two closest labelled $2$ vertices of $S$. Let $G'$ be the subgraph between $x$ and $y$. We recolor all the vertices of $G'$ but its border with a color smaller than $2\omega$ using Lemma~\ref{lem:recolbetweenruling2}. We perform this recoloring simultaneously for all the $0$-labelled vertices.
    \item Let $G''$ be the graph between two consecutive $1$-labelled vertices which do not contain $0$ vertices. We recolor all the vertices of $G''$ minus its border with a color smaller than $2\omega$ by Lemma~\ref{lem:recolbetweenruling2}. 
\end{enumerate}
Note that, after these two sequences, all the vertices are recolored. Indeed, the borders of $G'$ and $G''$ are disjoint since the neighborhood of two consecutive vertices of the ruling set are disjoint.

Let us now determine the length of the schedule. To do so, let us bound the number of vertices in each graph $G'$ and $G''$. Since two consecutive vertices of the ruling set are at distance at most $10$, the diameter of $G'$ is at most $50$ (since the labeled $0$ node is at distance $2$ from the two nodes labelled $2$ in the ruling path). Since a connected interval graph always contains a dominating induced path whose length is the diameter of the graph, $G'$ contains at most $50 \Delta$ vertices.

Let us now count the number of vertices between two consecutive $1$. We claim that no label is larger that $5$ since otherwise $I$ would not be maximal. So the number of vertices of $G''$ is at most $100 \Delta$. The length of the schedule is then a consequence of Lemma~\ref{lem:recolbetweenruling2}.
\end{proof}

\begin{corollary}
Let $G$ be an interval graph and $k \ge 4\omega$. Let $\alpha,\beta$ be two $k$-colorings of $G$. Then one can transform $\alpha$ into $\beta$ with a schedule of length $O(\omega\Delta)$ in $O(\logsn)$ rounds in the LOCAL model.
\end{corollary}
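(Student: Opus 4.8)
The plan is to combine the color-reduction step (Theorem~\ref{Thm2Omega}) with the existing free-color machinery, in the spirit of Step 2 of the outline.

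First I would invoke Theorem~\ref{Thm2Omega} with the hypothesis $k \ge 4\omega$: this produces, in $O(\logsn)$ rounds, a schedule $\mathcal{S}_\alpha$ of length $O(\omega\Delta)$ that transforms $\alpha$ into a coloring $\alpha'$ using only colors in $\{1,\dots,2\omega-1\}$. Running the same procedure with $\beta$ as the input coloring yields a schedule $\mathcal{S}_\beta$ of length $O(\omega\Delta)$ transforming $\beta$ into a coloring $\beta'$ using only colors in $\{1,\dots,2\omega-1\}$. Since $k \ge 4\omega > 2\omega-1$, at least $2\omega+1$ colors are now unused by both $\alpha'$ and $\beta'$; in particular, color $k$ (or any fixed color not in $\{1,\dots,2\omega-1\}$) plays the role of a genuine spare color that does not appear in either endpoint.

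Next I would transform $\alpha'$ into $\beta'$. Here I would re-use the standard canonical-target strategy already used in the paper: by Theorem~\ref{thm:col-interval} compute, in $O(\omega\logsn)$ rounds, a fixed $(\omega+1)$-coloring $\gamma$ of $G$, and then transform both $\alpha'$ and $\beta'$ into $\gamma$ using Steps 4 and 5 (Sections~\ref{sec:kempe} and~\ref{sec:buffer}), which only need one spare color --- available since $\alpha'$ and $\beta'$ use colors $\le 2\omega-1$ while $k \ge 4\omega$, so colors $2\omega,\dots,k$ are free. Reversing the second half gives a transformation from $\alpha'$ to $\beta'$. However, the stated schedule length is only $O(\omega\Delta)$, so I cannot afford the $\poly(\Delta)$ schedules of Steps 4--5. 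Instead, the cleaner route is to observe that once both colorings use at most $2\omega-1$ colors out of $k \ge 4\omega$, every box of the ruling set has at least $2\omega$ spare colors in its neighborhood, so we can locally and simultaneously recolor the interboxes (and then, in a second and third parallel round corresponding to a constant-distance independent set of boxes, the boxes themselves) directly to their target $\beta'$-colors using the degeneracy argument of Lemma~\ref{lem:recolbetweenruling2} — but now with the target coloring prescribed rather than just ``some color $< 2\omega$''. This is a constant number of parallel phases, each of schedule length $O(\omega\Delta)$ by the same degeneracy/look-ahead bound, hence total length $O(\omega\Delta)$ and $O(\logsn)$ rounds.

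Concatenating $\mathcal{S}_\alpha$, the $\alpha' \to \beta'$ schedule, and the reverse of $\mathcal{S}_\beta$ gives a valid recoloring schedule from $\alpha$ to $\beta$ of total length $O(\omega\Delta)$, computed in $O(\logsn)$ rounds, as claimed. The main obstacle I anticipate is making the middle step ($\alpha' \to \beta'$) respect the $O(\omega\Delta)$ bound: one must check that when recoloring an interbox directly to its target $\beta'$-coloring, the degeneracy ordering of Lemma~\ref{lem:2omegadegen} still applies and that having $k \ge 4\omega$ colors (rather than just $2\omega$) leaves enough slack to route each vertex to its final color without ever conflicting with the fixed borders — this is exactly the regime where the $\omega\cdot|V'|$ (rather than exponential) bound of Corollary~\ref{lem:recolbetweenruling2} kicks in, so the argument goes through, but verifying the border-compatibility across the three parallel phases is the delicate bookkeeping.
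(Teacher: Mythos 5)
Your overall architecture is exactly the paper's: apply Theorem~\ref{Thm2Omega} to both $\alpha$ and $\beta$ to obtain $(2\omega-1)$-colorings $\alpha'$ and $\beta'$ with schedules of length $O(\omega\Delta)$ computed in $O(\logsn)$ rounds, build a transformation $\alpha'\to\beta'$, and concatenate with the reverse of $\mathcal{S}_\beta$. You also correctly reject the detour through the canonical coloring $\gamma$ on the grounds of schedule length. The problem is your middle step. Recoloring an interbox ``directly to its target $\beta'$-colors'' while the adjacent boxes are frozen in $\alpha'$ is not well-posed: for an interior vertex $v$ adjacent to a frozen border vertex $u$ one may have $\beta'(v)=\alpha'(u)$, so the prescribed target is not even a proper extension of the frozen boundary, and no recoloring schedule can reach it in that phase. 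Neither Lemma~\ref{lem:2omegadegen} nor Corollary~\ref{lem:recolbetweenruling2} helps here --- the degeneracy argument only guarantees reaching \emph{some} coloring with colors below $2\omega$, and the Bousquet--Perarnau linear bound transforms between two colorings of the same graph rather than to a target that conflicts with a frozen boundary. The ``delicate bookkeeping'' you flag is thus a genuine obstruction to the approach as stated, not a verification detail.

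The gap has a one-line repair that removes all locality concerns, and it is what the paper does. Since $\alpha'$ and $\beta'$ both use only colors in $\{1,\dots,2\omega-1\}$ and $k\ge 4\omega$, the colors $2\omega+1,\dots,4\omega$ are unused by both. For $i=1,\dots,2\omega$ in turn, recolor the entire color class $\beta'^{-1}(i)$ to color $2\omega+i$: each class is an independent set, and the new color clashes with no neighbor (neighbors carry either a color $\le 2\omega-1$ or a previously shifted color $2\omega+j$ with $j\ne i$). After these $2\omega$ steps every vertex $v$ has color $2\omega+\beta'(v)$; another $2\omega$ steps shift each class back down to $\beta'(v)$. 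This is a valid $4\omega$-step schedule from $\alpha'$ to $\beta'$ requiring no additional communication, so the total length stays $O(\omega\Delta)$ and the round complexity stays $O(\logsn)$.
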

\begin{proof}
By Theorem~\ref{Thm2Omega}, we can transform $\alpha$ (resp. $\beta$) into a $2\omega$-coloring $\alpha'$ (resp. $\beta'$) of $G$ with a schedule of length $O(\omega\Delta)$ in $O(\logsn)$ rounds. 
Let us prove that we can transform $\alpha'$ into $\beta'$ in $4\omega$ steps. At step $i \le \omega$, we recolor all the vertices colored with $i$ in $\beta'$ with color $2\omega+i$. Such a set is indeed an independent set since it is a color class in $\beta'$. Note that since $\alpha'$ is a $2\omega$-coloring, no vertex is colored with color $2\omega+i$. After $2\omega$ steps, all the vertices are colored with colors in $\{2\omega+1,\ldots,4\omega\}$. Then, at steps $2\omega+1\le j \le 4\omega$, we color all the vertices colored $j$ with color $j-2\omega$, which completes the proof.
\end{proof}

\subsection{Coloring consecutive cliques with the target coloring}\label{sec:kempe}

\begin{definition}
A \emph{Kempe component} of a coloring $\alpha$ is a bichromatic connected component. 
In other words, $W$ is a Kempe component if there exists two colors $a,b$ such that $W$ is a connected component of the graph restricted to the vertices colored $a$ or $b$ in $\alpha$. A \emph{Kempe change} consists in permuting the colors $a$ and $b$ of a Kempe component. One can easily notice that the resulting coloring is still proper. Kempe changes are a classical tool to design centralized (re)coloring algorithms.

\end{definition} 

The goal of this part is to show that we can color consecutive cliques of the clique path  with the target coloring. The proof will be based on the important lemma below (Lemma~\ref{lem:mainkempe}).

Let $\mathcal{C}$ be a clique path of $G$ (given with an orientation)  such that 
each bag contains exactly one vertex that is not in the previous one (i.e. if $b_i$ and $b_{i+1}$ are two consecutive bags, $|b_{i+1}\setminus b_i|=1$) 
\footnote{Note that such a path decomposition always exist and is called a nice path decomposition. Indeed, if given two consecutive nodes several vertices appear, then we can add intermediate bags to make them appear one after another.} and $X$ be a subset of vertices. 
For each $x\in X$, let $f(x)\in\mathcal{C}$ be the first clique where $x$ appears. We say that the vertices of $X$ are \emph{consecutive with respect to $\mathcal{C}$} if $\cup f(x)$ is connected.
Let $x,y$ be two vertices of a $(4,5)$-ruling set and $X$ be the vertices between $x$ and $y$ without its border.
Note that if $X$ is a connected set of diameter $r$, then any vertex of $X$ can check in the LOCAL model if $X$ is a set of consecutive vertices in $r+1$ rounds.

Given a coloring $\alpha$ and a subset $X$ of vertices, we denote by $\alpha_S$ the coloring of $\alpha$ restricted to the vertices of $S$.

In order to prove the next lemma, we will need the following claim:

\begin{lemma}
Let $G$ be an interval graph, $\alpha,\beta$ be two colorings of $G$, and $S$ be a subset of consecutive vertices. Then, starting from $\alpha$, we can obtain a coloring $\alpha'$ such that $\alpha'_S=\beta_S$, with at most $|S|$ Kempe changes.
\end{lemma}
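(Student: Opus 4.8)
The plan is to prove this by induction on $|S|$, recolouring the vertices of $S$ one at a time along the clique path and spending at most one Kempe change per vertex; the case $|S|=0$ is trivial. Fix the nice clique path $\mathcal{C}=(b_1,\dots,b_N)$ for which $S$ is consecutive, and let $s$ be the vertex of $S$ that appears latest when $\mathcal{C}$ is scanned from left to right, so that the cliques of $s$ sit at the right end of the block of cliques meeting $S$. Put $S'=S\setminus\{s\}$; it is again consecutive, so by the induction hypothesis there is a sequence of at most $|S'|=|S|-1$ Kempe changes taking $\alpha$ to a colouring $\gamma$ with $\gamma_{S'}=\beta_{S'}$. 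It then suffices to recolour $s$ to $\beta(s)$ with a single Kempe change that leaves $\gamma$ unchanged on $S'$.

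If $\gamma(s)=\beta(s)$ there is nothing to do, so assume $\gamma(s)=a\neq b=\beta(s)$ and let $W$ be the $(a,b)$-Kempe component of $s$ in $\gamma$: swapping $a$ and $b$ on $W$ gives $s$ the colour $b$ and keeps the colouring proper, so the only thing to check is $W\cap S'=\emptyset$. Here I would use the standard structural fact for interval graphs that a bichromatic connected subgraph occupies a contiguous block of cliques of $\mathcal{C}$, and that colours alternate along $(a,b)$-paths. This already disposes of two kinds of vertices: any $x\in S'$ adjacent to $s$ has $\gamma(x)=\beta(x)$, and since both $\gamma$ and $\beta$ are proper this common value lies in neither $\{a\}$ nor $\{b\}$, so $x\notin W$; and tracing $W$ outward from the cliques of $s$, the first bichromatic vertex reached on either side must have colour $b$, the next must have colour $a$ and therefore cannot meet the cliques of $s$, and so on, which strongly confines how far $W$ can spread and in what colour it can touch something.

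The genuinely delicate case — and the step I expect to be the main obstacle — is a ``long'' vertex $v\notin S$ that is introduced before the first clique of $S$ and spans the whole clique range of $S$: such a $v$ can be adjacent both to $s$ and to a far away $x\in S'$, joining them inside $W$ through $v$. Since $s$ was chosen to appear last in $S$, any such $v$ lies in the first clique of $s$, hence is adjacent to \emph{every} vertex of $S$, and there are at most $\omega-1$ of them. My plan to deal with them is to perform, just before the swap fixing $s$, one extra Kempe change recolouring each such $v$ to a colour absent from its current neighbourhood (a single-vertex Kempe change, the relevant bichromatic component of $v$ being $\{v\}$ once a suitable colour is chosen), which severs $W$ from $S'$ and makes the $(a,b)$-swap on $s$ safe. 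The point that needs the most care is then the bookkeeping: one must show that these preparatory cuts never push the total above $|S|$. This is exactly where the consecutiveness of $S$ is used — when $S$ happens to be a union of consecutive cliques every such long $v$ already belongs to $S$ and no cut is needed, so the naive argument goes through verbatim; in the general case one has to charge each cut against a clique of $S$ whose newly introduced vertex requires no recolouring, and making this charging scheme airtight is the real content of the proof.
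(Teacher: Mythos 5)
Your top-level plan is the same as the paper's: order the vertices of $S$ along the clique path, fix them one at a time, and spend one Kempe change on the $(\gamma(s),\beta(s))$-component of the current vertex $s$; the case of an $x\in S'$ adjacent to $s$ is handled exactly as in the paper (its colour $\beta(x)$ can be neither $\gamma(s)$ nor $\beta(s)$ by properness of the two colourings). Where the two part ways is precisely the case you single out as "the main obstacle": a vertex $v\notin S$ whose interval spans the whole clique range of $S$ and which can carry the Kempe component from $s$ to a non-neighbour $x\in S'$. The paper does not perform any extra recolouring there; it closes that case by asserting that a connecting path inside the Kempe component between two vertices of $S$ must itself lie in $S$ (so the component would contain an already-recoloured \emph{neighbour} of $s$ in $S$, which the properness argument forbids). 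You are right that this assertion is exactly what needs justification for the long vertices you describe, but your proposal does not supply that justification either -- it replaces it with a different mechanism that you do not carry through.

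Concretely, the gap is in the accounting and in the feasibility of your "preparatory cuts". First, the budget: at the step treating $s$ there can be up to $\omega-1$ long vertices in the first clique of $s$ whose colour lies in $\{\gamma(s),\beta(s)\}$, and a priori such cuts may be needed at every one of the $|S|$ steps, so the naive count is $|S|+(\omega-1)|S|$ Kempe changes, not $|S|$. The charging scheme you invoke ("charge each cut against a clique of $S$ whose newly introduced vertex requires no recolouring") presupposes that there are enough steps at which no recolouring is needed, and nothing in the hypotheses guarantees this; you acknowledge yourself that making it airtight is "the real content of the proof", which means the proof is not given. Second, the cut itself: recolouring $v$ by a single-vertex Kempe change requires a colour absent from $N(v)$, and the lemma assumes nothing about the size of the palette relative to $\deg(v)$ (in an interval graph $\Delta$ can be much larger than $\omega$), so such a colour need not exist. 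To fix the proposal you should either (a) prove the structural claim the paper relies on -- that the Kempe component of $s$ cannot reach a non-neighbour $x\in S'$ without containing a vertex of $S$ adjacent to $s$, which is then excluded by properness -- or (b) substantiate your doubt with a configuration where it fails and then give a genuinely different argument that stays within $|S|$ changes. As written, the proposal identifies the right difficulty but does not resolve it.
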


\begin{proof}
Let $s_1,\ldots,s_\ell$ be an ordering of the intervals of $S$ by increasing starting point. 
We prove the claim by induction on the elements of  $S$, in that order.
For $s_1$, if it does not have color $\beta(s_1)$, we can simply apply the Kempe change on the $(\alpha(s_1),\beta(s_1))$-component containing $s_1$.
Now assume that all the vertices of $S$ up to $s_i$ have been colored with their target color in $\beta$. Let $s_{i+1}$ be the next vertex in $S$. For simplicity, let us still denote the current coloring by $\alpha$. If $\alpha(s_{i+1})=\beta(s_{i+1})$, we do not have to change this color and the conclusion follows. Now assume that $\alpha(s_{i+1})\ne \beta(s_{i+1})$. If we apply a Kempe change on the $(\alpha(s_{i+1}),\beta(s_{i+1}))$-component containing $s_{i+1}$, then $s_{i+1}$ would receive its target color.
We prove, by applying this change, that none of the vertices $s_1,\ldots,s_i$ is recolored (and then these nodes are still colored with their right color). 
If the Kempe chain containing $s_{i+1}$ contains $s_j$ with $j\le i$, then it means the Kempe component contains a vertex $s$ of $S$ adjacent to $s_{i+1}$. Indeed, either $s_j$ is a neighbor of $s_{i+1}$, or there is a path from $s_j$ to $s_i$. All the vertices of this path must belong to $S$ by definition of consecutive vertices. The current color of $s$, which is already colored with its target color, cannot be $\beta(s_{i+1})$ since otherwise the target coloring is not proper and cannot be $\alpha(s_{i+1})$ since otherwise the current coloring is not proper, a contradiction.
\end{proof}

Now, as we know that any subpath $S$ can be recolored with $|S|$ Kempe changes, we can perform those changes on a larger path. Here, we prove how to simulate them locally with an extra color:

\begin{restatable}{lemma}{LemMainkempe}
\label{lem:mainkempe}
Let $G$ be an interval graph with a clique path $\mathcal P$ and let $B$ be its border. Let $S$ be a subset of consecutive vertices, $\alpha$ be a $k-1$ coloring of $G$, and $\beta_S$ be a $k-1$ coloring of $G[S]$. Assume that the distance between $B$ and $S$ is $D \geq 3 |S|$. Then there exists a $k$-coloring $\beta$ of $G$ and a recoloring schedule from $\alpha$ to $\beta$ such that: 
\begin{itemize}
    \item $\beta$ is an extension of $\beta_S$; 
    \item the vertices of $B$ are not recolored in the recoloring sequence (and in particular $\beta_B=\alpha_B$); 
    \item the schedule has length at most $3 |S|$.
\end{itemize}
\end{restatable}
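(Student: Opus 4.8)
The plan is to reduce to the lemma just above and then convert the Kempe changes it produces into a distributed schedule, using the spare colour $k$ as a buffer. Applying that lemma to $\alpha$ and $\beta_S$ (its proof uses only the target colours of the vertices of $S$, all of which lie in $\{1,\dots,k-1\}$), we obtain a sequence of $m\le |S|$ Kempe changes carrying $\alpha$ to a colouring $\alpha'$ of $G$ with $\alpha'_S=\beta_S$; moreover, by the argument in that lemma, once a vertex of $S$ has received its target colour it is never recoloured again. The output colouring $\beta$ will simply be the last colouring of the schedule we build: it will agree with $\alpha'$ on $S$, hence extend $\beta_S$, though it may differ from $\alpha'$ far from $S$, which does no harm.

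First I would realise one Kempe transposition locally. Say the current colouring is $\gamma$ and we must swap two colours $a,b$ on the bichromatic component $W$ of $\gamma$ that contains a prescribed vertex $s_0\in S$. Since $W$ might reach $B$, I would not recolour all of $W$; instead I would fix a window made of the cliques of $\mathcal{P}$ within a bounded distance of the cliques spanned by $S$, and push a recolouring wave outward from $S$ through this window using colour $k$, just as in the proof of Lemma~\ref{lem:switch}: recolour to $k$ the $a$-coloured vertices of $W$ in the window, then recolour to $a$ the $b$-coloured vertices of $W$ in the window shrunk by one clique, then recolour the resulting $k$-coloured vertices to $b$ in the window shrunk by one clique more. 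Each of these three recolourings is of an independent set and is locally proper, the verification being that of Lemma~\ref{lem:switch}: shrinking the window by one clique between consecutive steps leaves, at its frontier, a clique that separates the vertices still carrying an ``old'' colour from those being changed. So one transposition costs exactly three steps, ends with colour $k$ unused again, gives $s_0$ its target colour, leaves the already-fixed vertices of $S$ untouched (they lie well inside the window, so the component argument of the lemma above applies verbatim), and recolours nothing outside the window.

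The second ingredient is a budget argument for the windows. I would take the window of the $j$-th transposition to be the cliques within distance $3j$ of those spanned by $S$; since the window grows by only three cliques per transposition and there are at most $m\le |S|$ of them, every recoloured vertex stays within distance $O(|S|)$ of $S$, and --- once one checks that each three-step wave settles strictly inside its window --- strictly closer to $S$ than the border $B$, which is at distance $D\ge 3|S|$. Hence $B$ is never recoloured, so $\beta_B=\alpha_B$, and concatenating the $m$ three-step blocks yields a schedule from $\alpha$ to $\beta$ of length $3m\le 3|S|$ with $\beta$ an extension of $\beta_S$, as wanted.

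I expect the delicate point to be making the interaction of these two mechanisms precise: one must verify that each localised transposition stays proper across the \emph{moving} frontier of its window --- even though that frontier is inside the graph and not a real border, so intervals can poke a little way out of it --- and that it leaves no stray copy of colour $k$ that would corrupt the next transposition. Both rest on the same interval-graph separation property underlying Lemma~\ref{lem:switch}, but pinning down the nested windows and confirming that three cliques of slack per transposition always suffice (so that $D\ge 3|S|$ is the right hypothesis) is where the real work lies.
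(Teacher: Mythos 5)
Your high-level plan is the same as the paper's: first produce a sequence of at most $|S|$ Kempe changes fixing $\beta_S$ on $S$ (the preceding lemma), then simulate each Kempe change locally with the spare colour, paying three recolouring steps and three units of distance per change. The three-step wave itself (recolour the $a$-vertices of the component to $k$, then the $b$-vertices to $a$ one clique further in, then the remaining $k$-vertices to $b$ one clique further still) is exactly the paper's mechanism. The place where your proposal genuinely breaks is the direction in which you nest the windows. You take the window of the $j$-th transposition to be the cliques within distance $3j$ of $S$, i.e.\ \emph{growing} windows. But each wave necessarily leaves residue on the outer two cliques of its window: the $a$-vertices of $W$ there are stuck at colour $k$ (they may have $b$-coloured $W$-neighbours just outside the window), and some $b$-vertices there are never swapped. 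Your claim that a transposition ``ends with colour $k$ unused again'' is therefore false. With growing windows this residue sits \emph{strictly inside} the next window, so step~1 of transposition $j+1$ can recolour to $k$ a vertex adjacent to a leftover $k$-vertex (indeed the leftover $k$-vertices are originally-$a$ vertices of $W$, and the still-$b$ vertices of $W$ next to them are prime candidates to lie in the next component $W'$), producing an improper colouring; moreover $W'$ itself is computed in a colouring that is corrupted on the dirty shell. I don't see how to repair this with a single buffer colour while keeping the windows growing.

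The paper resolves exactly this issue by nesting the windows the other way: the \emph{first} Kempe change is simulated on the full region of radius $D$, and each subsequent one on a region shrunk by $3$ (formally, an induction that restricts to the subgraph of vertices at distance at most $D-2$ from $S$, where the buffer colour no longer appears and the colouring is clean). Then the residue of transposition $j$ lies in a shell at distance roughly $D-3(j-1)$, every later active region is separated from every earlier dirty shell by at least two cliques, and each new Kempe component is computed in a genuinely $(k-1)$-coloured subgraph. You correctly identified the stray-colour-$k$ interaction as the delicate point, but the fix is not ``more care at the moving frontier'': it is to reverse the order of the windows so that the frontiers accumulate outside all subsequent activity. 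With that change your argument goes through and gives the stated bounds.
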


\begin{proof}
Let us denote by $c_0$ the additional color which is not used in the colorings $\alpha$ and $\beta_S$.
We will prove by induction on $A$ the following property: assume that there exists in $G[S]$ a recoloring sequence using Kempe chains from $\alpha|_S$ to $\beta_S$ of length at most $A$, and the distance between $S$ and $B$ is $D \geq 3 A$, then there exists a coloring $\beta$ and a recoloring sequence from $\alpha$ to $\beta$ satisfying the properties of the statement of the Lemma.

When $A = 0$, this property is trivially true by taking $\alpha = \beta$. Hence, we will assume in the following $A > 0$. Let $C_S$ be the first recolored Kempe chain in a shortest recoloring sequence from $\alpha|_S$ to $\beta_S$. Let $v$ be a vertex recolored by this Kempe chain from the color $c_1$ to $c_2$, and let $C$ be the Kempe chain of the whole graph $G$ which recolors $v$ from $c_1$ to $c_2$. Because $S$ is composed of consecutive vertices, and $G$ is an interval graph, we must have $C_S = C \cap S$ since otherwise $G$ would contain an induced cycle of length at least $4$.

Let us consider the following recoloring schedule:
\begin{enumerate}
    \item recolor the vertices of $C$ colored $c_1$ at distance at most D from $S$ with the additional color $c_0$;
    \item recolor the vertices of $C$ colored $c_2$ which are not adjacent to a vertex colored $c_1$ with the color $c_1$ ; 
    \item recolor the vertices colored $c_0$ which do not have the color $c_2$ in their neighborhood with the color $c_2$ .
\end{enumerate}

Let $\alpha'$ be the resulting coloring of $G$. Note that $\alpha'$ is obtained from $\alpha$ with a recoloring schedule of length $3$. Moreover, the vertices at distance at most $D-2$ from $S$ are not colored with $c_0$. Indeed, after the first step, there is no vertex of $C$ colored $c_1$ at distance at most $D$ from $S$. After the second step, no vertex colored $c_2$ of $C$ is at a distance smaller or equal than $D-1$, since otherwise such vertex cannot be adjacent to a vertex colored $c_1$. Finally, after the last step, a vertex remains colored $c_0$ if it was adjacent to a vertex colored $c_1$, and by argument above, it must be at a distance at least $D-1$ from $S$.

Hence, if we consider the graph $G'$ composed of all the vertices at distance at most $D-2$ from $S$, then, the coloring $\alpha'$ is a $(k-1)$-coloring of $G$, and there is a recoloring sequence from $\alpha'|_S$ to $\beta_S$ using Kempe chains of length at most $A-1$. Moreover, the distance from $S$ to the border of $G'$ is at least $D-3 \geq 3(A-1)$. Using the induction hypothesis, there is an extension $\beta$ of $\beta_S$ and a recoloring sequence from $\alpha'$ to $\beta$ which does not recolor the border of $G'$. This gives a recoloring sequence from $\alpha$ to $\beta$ which does not recolor the border of $G$.

This completes the induction step and proves the property. The lemma then follows immediately from the existence of a recoloring sequence using Kempe chains of length at most $n$ for any interval graph with $n$ vertices.
\end{proof}

Let $x$ be a vertex, $N$ be an integer and $V'$ be the set of vertices at distance at least $N+1$ from $x$. The $N$ vertices \emph{before} (resp. \emph{after}) $x$ in the ordering are the $N$ vertices before (resp. after) $x$ in a clique path of $V'$ where the vertices with neighbors in $V \setminus V'$ are in the first or the last bag of the clique path in the ordering.

The following is a corollary of Lemma~\ref{lem:mainkempe}:

\begin{restatable}{corollary}{CoroKempe}
\label{coro:kempe}
Let $G$ be an interval graph, and $S$ be a $(4,5)$-ruling set of $G$. Let $N$ be an integer. Let $\alpha,\beta$ be two colorings of $G$ and $c_0$ be a color that does not appear in $\alpha,\beta$.
Let $I$ be an independent set at distance $6(2N+1)+1$ of the virtual path induced by $S$~\footnote{Note that since consecutive vertices of the virtual path are at distance at most $11$, we can compute it in $O(\log^*(n))$ rounds.}
Then, it is possible to recolor, for every $v \in I$,  $v$ and the $N$ vertices before and after it
in the interval graph with their final coloring in $\beta$ with a schedule of length at most $3(2N+1)$.
\end{restatable}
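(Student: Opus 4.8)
The plan is to apply Lemma~\ref{lem:mainkempe} locally around each vertex $v \in I$, in parallel, and to argue that the spacing condition on $I$ guarantees that these local applications do not interfere with one another. First I would fix $v \in I$ and describe the ``target region'' around $v$: let $S_v$ be the set consisting of $v$ together with the $N$ vertices before it and the $N$ vertices after it in the ordering, so $|S_v| \le 2N+1$ and $S_v$ is a set of consecutive vertices with respect to a clique path of the appropriate induced subgraph. Let $\beta_{S_v}$ be the restriction of the target coloring $\beta$ to $S_v$ (this is a proper $(k-1)$-coloring of $G[S_v]$, where $k-1$ is the number of colors used by $\alpha$ and $\beta$, since $c_0$ appears in neither). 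The goal is to produce, for each $v$, a recoloring schedule of length at most $3(2N+1)$ that brings $S_v$ to $\beta_{S_v}$ while leaving everything outside a controlled neighborhood of $S_v$ untouched, and then to run all these schedules simultaneously.

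Next I would isolate, for each $v \in I$, a subgraph $H_v$ of $G$ to which Lemma~\ref{lem:mainkempe} applies with $S = S_v$. Concretely, take $H_v$ to be the subgraph between two boxes of the box decomposition lying at distance roughly $3(2N+1)$ on either side of $v$ in the virtual path — this is possible since consecutive vertices of the virtual path are at distance at most $11$, so a window of $3(2N+1)$ vertices of $S$ sits inside $O(N)$ boxes, and the border $B_v$ of $H_v$ is then at distance $D_v \ge 3|S_v|$ from $S_v$, exactly the hypothesis of Lemma~\ref{lem:mainkempe}. Lemma~\ref{lem:mainkempe} then yields a recoloring schedule inside $H_v$, of length at most $3|S_v| \le 3(2N+1)$, using the one extra color $c_0$, that ends with $S_v$ colored by $\beta_{S_v}$ and never recolors $B_v$ (nor anything outside $H_v$). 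Since $B_v$ is fixed throughout, this schedule extends to a schedule of all of $G$ (doing nothing outside $H_v$).

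Finally I would check that the schedules for distinct $v,v' \in I$ can be executed in lockstep without creating conflicts. The key point is the choice of the separation $6(2N+1)+1$ in the virtual path: each $H_v$ occupies a window of about $6(2N+1)$ vertices of $S$ around $v$ in the virtual path (the $\approx 3(2N+1)$ on each side, inflated by the box-to-box distance bound), and because consecutive elements of $I$ are at virtual-path distance more than $6(2N+1)$, the subgraphs $H_v$ are pairwise at positive distance in $G$ — in particular they are vertex-disjoint and no edge joins two of them, by the separator property of boxes (Proposition~\ref{prop:basicbox} and Remark~\ref{rmk:separatorinterval}). Hence at every step the union of the per-vertex recolored sets is still an independent set, and properness is preserved step by step because each $H_v$ evolves exactly as in Lemma~\ref{lem:mainkempe} and the regions do not see each other. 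Pad all the schedules to the same length $3(2N+1)$ by idling, and the result is a single global schedule of the claimed length. The main obstacle I expect is the bookkeeping in the second paragraph: one has to be careful that ``$N$ vertices before/after $v$ in the ordering'' really does form a consecutive set with the border of the relevant induced subgraph landing in the first/last bag (so that Remark~\ref{rmk:separatorinterval} and Lemma~\ref{lem:mainkempe} apply verbatim), and that translating the ``distance $\ge 3|S|$ to the border'' hypothesis into a concrete number of boxes on each side of $v$ is done with enough slack to absorb the factor-$11$ slop between graph distance and virtual-path distance; this is exactly where the constant $6(2N+1)+1$ comes from, and getting that constant to be consistent everywhere is the delicate part.
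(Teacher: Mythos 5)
Your proposal is correct and follows essentially the same route as the paper's proof: apply Lemma~\ref{lem:mainkempe} to the set $S_v$ of $v$ and the $N$ vertices before and after it inside the subgraph between the two ruling-set elements at distance $3(2N+1)$ on either side of $v$, and use the $6(2N+1)+1$ spacing of $I$ to make these subgraphs pairwise disjoint so that all local schedules can be run in parallel. The distance bookkeeping you flag as delicate is handled in the paper exactly as you suggest, by noting that ruling-path distance only inflates graph distance, so the $D \ge 3|S_v|$ hypothesis of Lemma~\ref{lem:mainkempe} holds with slack.
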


\begin{proof}
Let $I$ be an independent set at distance $6(2N+1)+1$ of the virtual path induced by $S$ and $v \in I$. We denote by $S_v$ the set of vertices in the box of $v$ and the $N$ vertices before and after $v$ in the interval graph. Note that $S_v$ has size at most $(2N+1)$. Let us denote by $G_v$ the graph between $v'$ and $v''$, where $v'$ and $v''$ are the elements of $S$ that are at distance $3(2N+1)$ before and after $v$ in $S$. Note that by definition of $(4,5)$-ruling set, the box of $v'$ and the box of $v''$ are at distance at least $3(2N+1)$ from $v$. So by Lemma~\ref{lem:mainkempe}, it is possible to recolor $S_v$ with its target coloring in $\beta$ by only recoloring vertices of $G_v$ which are not in the border of $G_v$. Moreover, the length of the schedule is at most $3(2N+1)$.

One can notice that since $I$ is an independent set of $S$ at distance $6(2N+1)$, for every $v,w$ in $I$, $G_v$ and $G_w$ do not intersect. So we can perform in parallel all these recoloring sequences without any conflict, which completes the proof.
\end{proof}

\subsection{Buffer technique}\label{sec:buffer}

Let $G$ be an interval graph and $S$ be a $(4,5)$-ruling set of $G$.  Let $\alpha,\beta$ be two colorings of $G$ such that $\alpha$ is a $k$-coloring with $k < 4\omega$ and $\beta$ is an $(\omega+1)$-coloring. 
Let $N:=3 \Delta^2 (3 {\omega \choose 2} +2 + \omega)+2\omega$. 

Let $I$ be an independent set of the virtual path of $S$ at distance at least $6(2N+1)+1$ in the ruling path. By Corollary~\ref{coro:kempe}, using the additional color, we can recolor the graph in such a way that, for every $v \in I$, $v$ plus the $N$ vertices before and after them in the interval graph are colored with their target color in the $(\omega+1)$-coloring $\beta$. Let us call such a coloring $\gamma$ a \emph{semi-target coloring} for $\beta$. 
Informally speaking, a semi target coloring is a coloring such that, at bounded distance to every vertex of the graph (namely $O(N)$), there exist at least $N$ consecutive vertices that are colored with the  target coloring $\beta$. The rest of the proof consists in proving that we can recolor the graph between two consecutive such sets with the target coloring without modifying the rest of the coloring. In other words, if we have two sets of $N$ consecutive vertices that are colored with the target coloring, we can recolor the subgraph between these vertices with the target coloring without any knowledge on the rest of the graph.

After proving this result, we will be able to derive the main result of this section:

\begin{restatable}{lemma}{ClaimSemiGood}\label{clm:semigood}
Let $\gamma,\beta$ be two colorings of $G$ such that $\beta$ is a $(\omega+1)$-coloring and $\gamma$ is a semi-target coloring for $\beta$. Then we can recolor $\gamma$ into $\beta$ as long as $k \ge \omega+4$ (without additional colors) with a schedule of length at most $O(\poly(\Delta))$ in $O(N \logsn)$ rounds in the LOCAL model.
\end{restatable}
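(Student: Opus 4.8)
We need to prove Lemma~\ref{clm:semigood}: given a semi-target coloring $\gamma$ (where, at bounded distance $O(N)$ of any vertex, there are $N$ consecutive vertices already colored as in $\beta$) and the $(\omega+1)$-coloring $\beta$, we can recolor $\gamma$ into $\beta$ without extra colors, with a $\poly(\Delta)$ schedule, in $O(N\logsn)$ rounds.

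**The key building block.** The centralized result of Bartier and Bousquet~\cite{BousquetB19} states (roughly) that if a sufficiently large number — of order $\poly(\Delta)$, and here this is exactly why $N$ was chosen to be $3\Delta^2(3\binom{\omega}{2}+2+\omega)+2\omega$ — of consecutive vertices $X$ of an interval graph are already colored with a "nice" coloring (and the target coloring restricted to a neighborhood is such a nice coloring), then one can recolor every vertex in a consecutive block around $X$ to its target color by recoloring vertices *locally*, i.e. by a sliding-window argument in which $X$ moves left-to-right and each vertex leaves $X$ bearing its target color. Crucially this uses no color beyond those of $\alpha$ and $\beta$, which is why the hypothesis $k\ge\omega+4$ suffices (we have at least $\omega+1$ colors for $\beta$ plus enough slack to run the buffer slide).

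**Plan of the proof.** First I would fix a $(4,5)$-ruling set $S$ and, as in the setup preceding the lemma, the independent set $I$ of the virtual path at distance $\ge 6(2N+1)+1$, so that the "already-$\beta$-colored" blocks $Z_v$ (for $v\in I$) of $\ge N$ consecutive vertices are pairwise far apart. Between two consecutive such blocks $Z_v, Z_w$ lies a subgraph $H_{v,w}$ of $G$, which is again an interval graph, and whose border is contained in $Z_v\cup Z_w$ — hence already colored as in $\beta$. Second, I would invoke the buffer technique of~\cite{BousquetB19}: take the left block $Z_v$ (of size $\ge N \ge \poly(\Delta)$) as the initial buffer $X$, and slide it rightwards through $H_{v,w}$, one vertex at a time; by the correctness of their procedure, each vertex exiting the buffer on the left gets its final color $\beta$, and when the buffer reaches $Z_w$ the whole of $H_{v,w}$ is colored as in $\beta$. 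The border vertices are never recolored (they sit in $Z_v\cup Z_w$, which stays fixed), so this can be executed \emph{in parallel} for all the disjoint subgraphs $H_{v,w}$ simultaneously, with no conflict at the shared borders. Since each $H_{v,w}$ has diameter $O(N)$ and hence $O(N\Delta)$ vertices, the sliding schedule has length $\poly(\Delta)$ (using $N=\poly(\Delta)$), and the number of rounds needed for every node to learn its local portion of the schedule is $O(N\logsn)$: the $\logsn$ for computing $S$ and $I$, and the extra $O(N)$ multiplicative factor for gathering the relevant $O(N)$-radius neighborhood. Third, I would note that the blocks $Z_v$ themselves need no recoloring, and every vertex is in some $Z_v$ or some $H_{v,w}$, so after this single parallel phase the coloring is exactly $\beta$.

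**Main obstacles.** The delicate points are: (i) verifying that the border of each $H_{v,w}$ is genuinely inside the already-$\beta$-colored region, so that the "no knowledge on the rest of the graph" claim holds and the parallel executions do not collide — this is where the choice of $I$ at distance $\ge 6(2N+1)+1$ and the $N$-vertex buffers on both sides of each $v$ matter; (ii) checking that the hypotheses of the black-box from~\cite{BousquetB19} are met, in particular that $N$ is large enough (the formula $N=3\Delta^2(3\binom{\omega}{2}+2+\omega)+2\omega$ should be traced back to their bound on the buffer size as a function of $\Delta$ and $\omega$) and that $k\ge\omega+4$ supplies the colors their sliding argument consumes; and (iii) bounding the round complexity — once the schedule length is $\poly(\Delta)$ and the relevant subgraphs have radius $O(N)$, each node simulating its part of the schedule needs to see an $O(N)$-ball, giving the $O(N\logsn)$ bound. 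Beyond these, the argument is a routine assembly of the decomposition (Proposition~\ref{prop:basicbox}), Corollary~\ref{coro:kempe}, and the imported buffer lemma.
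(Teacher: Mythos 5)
Your overall strategy is the same as the paper's: use the pre-colored blocks $Z_v$ ($v\in I$) as initial buffers and run the Bartier--Bousquet sliding-buffer procedure (packaged in the paper as Lemma~\ref{lem:buffer} and Lemma~\ref{lem:recoluptoborder}) on the gaps between consecutive blocks, then compose these executions in parallel. The difference, and the place where your argument has a genuine gap, is the parallel-composition step. You process \emph{every} gap $H_{v,w}$ simultaneously in a single phase and justify non-interference by asserting that the shared blocks ``$Z_v\cup Z_w$ stay fixed.'' The black-box does not give you that: to turn the right end of $Z_v$ into a \emph{moving} buffer, the procedure must recolor the vertices inside the current window away from their target colors (a buffer coloring agrees with $\beta$ only on its first $\Delta$ cliques), and Lemma~\ref{lem:recoluptoborder} only promises never to recolor the \emph{border} of the region it operates on, not the whole block. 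Consequently the schedules for the two adjacent gaps $H_{u,v}$ and $H_{v,w}$ both recolor vertices in or next to $Z_v$, and you have not shown that the two sets of actually-recolored vertices are disjoint and pairwise non-adjacent at every step, which is what properness and the independent-set condition require. This is plausibly repairable (the blocks span roughly $2N''$ cliques while each schedule only enters about $N''$ cliques of the block it starts from, so one can check the touched regions are separated), but it needs to be argued, not asserted.

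The paper sidesteps this entirely by spending a second phase: it extracts an independent set $I'$ of $I$, labels the elements of $I$ by their distance to $I'$, first recolors in parallel the regions $G_y$ between the two nearest label-$2$ boxes around each label-$0$ node (these regions are genuinely disjoint \emph{and non-adjacent}, since a label-$3$ node separates any two of them), and only then recolors the remaining stretches between label-$1$ nodes, which by then are flanked by freshly target-colored buffers. If you want to keep your cleaner one-phase formulation, you must either prove the separation property of the recolored sets sketched above, or adopt the paper's two-phase labelling. Everything else in your plan --- the choice of $N$, the role of $k\ge\omega+4$, the $O(\poly(\Delta))$ schedule length, and the $O(N\logsn)$ round count --- matches the paper's argument.
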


The rest of this section is devoted to prove Lemma~\ref{clm:semigood}. In order to prove it, we will need the notion of buffer coloring introduced in~\cite{BousquetB19}. In this paper, we use the result of~\cite{BousquetB19} as a black-box and do not precisely define the notion of buffer coloring (due to the fact that the definition is quite long and technical).

Let $N'=N-2\omega \cdot (\Delta)$ and $N''=N'/\omega-1$. 

\begin{restatable}{claim}{Goodlength}\label{clm:goodlength}
Let $u,v$ be two vertices of $I$ such that the interbox $H$ of $u,v$ contains $N$ consecutive vertices colored with the target coloring $\beta$. Then any clique path of $H$ contains $N''+1$ consecutive cliques colored with the target coloring.
\end{restatable}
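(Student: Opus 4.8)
The claim converts a statement about $N$ consecutive \emph{vertices} being colored with the target coloring into a statement about $N''+1$ consecutive \emph{cliques} (bags) being colored with the target coloring, within any clique path of the interbox $H$. The plan is to exploit the relationship between the number of vertices appearing in a stretch of consecutive bags and the number of bags in that stretch. Recall that in a (nice) clique path, each new bag introduces exactly one new vertex, so a block of $t$ consecutive bags contains at most $t+\omega-1$ distinct vertices (the $\omega$ vertices of the first bag, plus one new vertex per subsequent bag). Conversely, a set of $m$ consecutive vertices spans a block of bags whose first bag is $f(v)$ for the leftmost $v$ and whose last bag is the last bag containing the rightmost vertex; the number of such bags is at least $m/\omega - 1$ roughly, but I need to be careful at the two ends of the block, which is exactly where the $-2\omega\cdot\Delta$ and the division by $\omega$ slack in the definitions of $N'$ and $N''$ comes from.

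**Key steps.** First I would fix a clique path $\mathcal{C}$ of $H$ and let $Z$ be the set of $N$ consecutive vertices colored with $\beta$. Let $a$ be the index of the first bag of $\mathcal{C}$ that is entirely contained in $Z$, and $b$ the index of the last such bag; the middle bags $b_a,\dots,b_b$ are all colored with the target coloring since every vertex in them lies in $Z$. The technical heart is a counting argument: the vertices of $Z$ that are \emph{not} accounted for by bags $b_a,\dots,b_b$ are those appearing only in the "fringe" bags at the two ends of the stretch spanned by $Z$. Each fringe consists of at most $\Delta$ consecutive bags at one end before we reach a full bag (because a vertex of degree at most $\Delta$ spans at most $\Delta+1$ bags, and once all vertices from before the fringe have died off the bag is contained in $Z$), and each such bag contributes at most one new vertex not in the next fringe bag plus the $\omega$ vertices of a single bag — so at most $2\omega\cdot\Delta$ vertices of $Z$ lie in the fringes. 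Therefore at least $N - 2\omega\Delta = N'$ vertices of $Z$ lie in bags $b_a,\dots,b_b$, and since each of these bags has at most $\omega$ vertices, there are at least $N'/\omega$ such bags, hence $N'/\omega - 1 = N''$ bags, and with the first one we get $N''+1$ consecutive cliques — wait, I should double-check the off-by-one: $N''+1 = N'/\omega$, so I actually want to conclude $\ge N'/\omega$ bags and call that $N''+1$. I would state the fringe bound carefully enough that the arithmetic lands on exactly $N''+1$.

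**The main obstacle.** The delicate part is the fringe analysis: showing that near each end of the stretch spanned by $Z$, after at most $\Delta$ bags we reach a bag all of whose vertices belong to $Z$. The intuition is that a vertex just to the left of $Z$ can "reach" (appear in a common bag with) only vertices within distance $1$ in $G$, and more generally a bag fails to be contained in $Z$ only because it contains a vertex whose interval extends outside $Z$; since such a vertex has at most $\Delta$ neighbors, it can appear in at most $\Delta+1$ bags, so after $\Delta$ bags all the "intruder" vertices from outside $Z$ have disappeared. I would make this precise using the interval representation: order the bags by position, and note that a vertex outside $Z$ occupies a contiguous run of at most $\Delta+1$ bags. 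This, combined with the fact that each bag contributes at most one new vertex as we move along the clique path, gives the $2\omega\Delta$ bound on fringe vertices. I expect that writing this cleanly — rather than the global counting — is where the real work is, but it is a routine interval-graph bookkeeping argument once the bookkeeping is set up correctly, and the generous constant $3\Delta^2(\cdots)+2\omega$ built into $N$ means there is ample slack.
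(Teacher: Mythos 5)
Your proposal is correct and follows essentially the same route as the paper: trim an $O(\omega\Delta)$-vertex fringe from the two ends of the block of $N$ target-colored vertices, observe that the surviving $\ge N'$ vertices are covered only by consecutive bags lying entirely inside the block (hence target-colored), and divide by the bag size $\omega$ to get $N'/\omega = N''+1$ cliques. The only (immaterial) difference is dual bookkeeping for the fringe --- you bound it as at most $\Delta$ contaminated bags per end, each of size at most $\omega$, whereas the paper bounds it as the border of the consecutive set (at most $2\omega$ vertices) together with their at most $\Delta$ neighbors each; both give the $2\omega\Delta$ slack already built into the definition of $N'$.
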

\begin{proof}
Let $B$ be the set of (at least $N$) consecutive vertices that are colored with the target color. Let $A$ be the border of $B$. Let $X$ be the set of vertices of $B$ which are not in $A$ nor adjacent to a vertex of $A$. Note that $A$ has size at most $2\omega$ since the border of a set of consecutive vertices is  composed of at most two cliques. Then $X$ has size at least $N'=N-2\omega (\Delta+1)$. Since every bag of the clique path contains at most $\omega$ vertices, $A$ contains the vertices of at least $N''+1=N'/\omega$  consecutive cliques of a clique path of $B$. 
\end{proof}

Let $H$ be an interbox containing $N$ consecutive vertices colored with the target coloring in the interbox.
Let $\mathcal{P}$ be a path decomposition of $H$. By Claim~\ref{clm:goodlength}, there exists a subset $Y'$ of $N''+1$ cliques colored with the target coloring $\beta$. Let $Y$ be the last $N''$ cliques of them. Let us denote by $C$ and $D$ the first and last clique of $Y$. Note that in particular, the clique before $C$ is colored with the target coloring $\beta$. 

\paragraph*{Buffer colorings.}
A coloring $\delta$ restricted to $N''$ consecutive cliques $Y$ of a clique path $\mathcal{P}$ (given with an orientation) is a \emph{buffer coloring (for $Y$)} if it satisfies several properties detailed in~\cite{BousquetB19}. Let us denote by $C,D$ the first and last clique of $Y$. Due to the technicality of the definition, we do not state these properties precisely here and will only use the existence of such a coloring as a black-box.
In our proof, we will only need the following properties that are either in the definition that $Y$ is a buffer coloring in~\cite{BousquetB19} or simple consequences of that definition:

\begin{enumerate}
    \item If $Y$ is colored with the target coloring and the clique before $Y$ in $\mathcal{P}$ is also colored with the target coloring, then the coloring is a buffer coloring for $Y$.
    \item If $\delta$ is a buffer coloring for $Y$, then the clique before $C$ in $\mathcal{P}$ is colored with the target coloring as well as the first $\Delta$ cliques of $Y$\footnote{The exact condition of~\cite{BousquetB19} is a bit weaker since it only needs that the clique before in the clique path is colored with the target coloring. To avoid the technicality of clique path, we ask for this stronger condition here.}.
    \item  If $\delta$ is a buffer coloring for $Y$ and the vertices of $V\setminus X$ incident to $C \cup D$ are colored with the target coloring, then $\delta$ can be transformed into the target coloring on $X$ by recoloring only vertices of $X$.
\end{enumerate}

Using buffer colorings, Bousquet and Bartier~\cite{BousquetB19} proved the following:

\begin{restatable}{lemma}{LemBuffer}
\label{lem:buffer}
Let $G$ be an interval graph with clique path $\mathcal{P}$ (given with an ordering). Let $\delta,\beta$ be two colorings. Let $k'$ be the number of colors in $\beta$ and $k \ge k'+3$.

Let $Y$ be a set of $N''$ consecutive cliques of $G$ such that $\delta$ is a buffer coloring for $Y$ (for the target coloring $\beta$). Let $C$ be the first clique of $X$ \footnote{Note that we have $\delta_C=\beta_C$ by the above remarks.}.
Then we can obtain a coloring $\delta'$ from $\delta$ such that:
\begin{itemize}
    \item Only vertices that belong to bags in $X$ are recolored.
    \item No vertex of $C$ is recolored. In particular $\delta'_C=\beta_C$.
    \item The coloring $\delta'$ restricted to the $N''$ cliques $Z$ starting in the clique after $C$ (that is, $Y$ minus its first clique plus the first clique after it) is a buffer coloring for $Z$.
    \item The total length of the schedule is $\poly(|Y|,k)$.
\end{itemize}
\end{restatable}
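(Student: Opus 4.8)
The plan is to derive this lemma directly from the machinery of~\cite{BousquetB19}, since the statement is essentially a reformulation, in the language of clique paths, of the single ``buffer-sliding'' step that lies at the heart of their centralized recoloring algorithm for chordal (and interval) graphs. First I would recall precisely the definition of a buffer coloring from~\cite{BousquetB19}, together with the three properties stated just above, and check that our window length $N''$ exceeds the threshold their argument needs; since $N''=N'/\omega-1$ and $N$ was chosen as a large polynomial in $\Delta$ and $\omega$, this is immediate. The role of a buffer coloring for $Y$ is that the first $\Delta$ cliques of $Y$ already carry the target coloring $\beta$, and the remaining cliques of $Y$ are colored in a controlled ``staircase'' pattern that can be pushed one clique to the right using only the $k-k'\ge 3$ colors outside the palette of $\beta$ as scratch space, in the same spirit as the Kempe-type manipulations of Sections~\ref{sec:reducecolors}--\ref{sec:kempe}.

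Next I would describe the sliding step itself. Since $\delta_C=\beta_C$ and the clique immediately before $C$ is also target-coloured, the vertices that live only in bags contained in the first few cliques of $Y$ are already at their target colours and can be frozen; in particular $C$ is never recoloured. The schedule then walks through the cliques of $Y$ from left to right: at each clique it performs a constant number of single-vertex recolourings (using a spare colour as a transit colour, so that at every intermediate step the recoloured vertex is isolated from conflicts, which automatically makes it an independent set, the requirement imposed by the distributed model) in order to advance the staircase by one clique, so that after all of $Y$ has been processed the window $Z$ obtained by dropping $C$ and appending the clique after the old last clique $D$ again satisfies the buffer invariant. Only vertices belonging to bags of $Y$ are ever touched, which gives the first two bullets; the last bullet follows because the whole process stays inside $Y$, whose size is $O(\omega\cdot N'')=\poly(\Delta,\omega)$, so the total number of recolouring steps is $\poly(|Y|,k)$.

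The main obstacle is not conceptual but a matter of bookkeeping: one must verify that the intermediate colourings produced during the slide remain proper, and that the final colouring on $Z$ satisfies every clause of the (long and technical) definition of buffer colouring from~\cite{BousquetB19}. Rather than redoing all of this, I would state the lemma as an essentially verbatim transcription of the corresponding statement in~\cite{BousquetB19}, flagging the only two cosmetic differences --- we phrase things in terms of clique paths rather than nice path decompositions, and we impose the slightly stronger hypothesis that the clique before $C$ is entirely target-coloured --- and observe that both are harmless specialisations of their setting, so their proof applies without change.
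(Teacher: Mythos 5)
Your proposal takes essentially the same route as the paper: the paper's proof of Lemma~\ref{lem:buffer} is likewise a black-box appeal to the machinery of~\cite{BousquetB19} (specifically their Lemmas 20--23 and the core of their Theorem 24), with a short discussion of the cosmetic differences --- the hypothesis on the coloring before the buffer, the shift from $\omega+3$ to the target palette size plus three, and the translation from consecutive cliques to a count of vertices. Your sketch of the internal sliding mechanism is extra detail the paper does not reproduce, but the argument and its justification match.
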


\begin{proof}[Sketch of the proof.]
The proof is a direct consequence of the Lemmas 20, 21, 22 and~23 of~\cite{BousquetB19}. The core of the proof of Theorem 24 contains a proof of this statement, even though it is not explicitly stated. At first glance one can remark a few differences with the statement of~\cite{BousquetB19}. In~\cite{BousquetB19}, they need the whole coloring before $S$ to be the target coloring, but since we only recolor vertices of $S$, the condition that the neighbors of $X$ are colored with the target coloring is enough. Moreover, their number of colors is only $\omega+3$ there while we need $\omega+4$. The reason is due to the fact that we are targeting a $\omega+1$ instead of an optimal $\omega$-coloring and the number of colors of their proof is actually the target number of colors plus $3$. Note moreover that we have a reasoning on a number of vertices while it is on consecutive cliques in~\cite{BousquetB19} but, as we explained above the bound on the number of vertices here ensures that the result holds on the number of consecutive cliques of the clique path.
\end{proof}

Let us now explain how we can use buffer colorings to recolor interval graphs in a distributed way. To do so, we will prove the following as a consequence of Lemma~\ref{lem:buffer}:

\begin{lemma}\label{lem:recoluptoborder}
Let $G$ be an interval graph and $k \ge \omega+4$. Let $X$ be a subset of at least $N'$ consecutive vertices of $G$ and $B$ be the border of $X$ and $\delta,\beta$ be two colorings of $G$ such that:
\begin{itemize}
    \item $\beta$ is an $(\omega+1)$-coloring and,
    \item $\delta$ is a $k$-coloring such that $\delta$ admits $N'$ consecutive vertices $X$ which forms a buffer coloring for $\beta$,
    \item $\beta_B=\delta_B$.
\end{itemize}

Then there is a recoloring schedule from $\delta$ to $\beta$ of length $O(\poly(\Delta,|X|))$ using no additional color and  which never recolors the vertices in $B$.
\end{lemma}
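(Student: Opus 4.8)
The plan is to use Lemma~\ref{lem:buffer} repeatedly to ``slide'' the buffer coloring across $X$, from the border clique where it sits towards the opposite one, freezing one clique at its target color at every step, and then to finish with the third property of buffer colorings. Throughout we may assume $\delta$ and $\beta$ coincide outside $X$, which is the setting in which this lemma is used (and is in any case implicit in asking for a schedule from $\delta$ to $\beta$ that does not recolor $B$).

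First I would fix a clique path $\mathcal P$ of $G$ oriented so that the border clique $B_\ell$ of $X$ incident to the initial buffer comes first and the other border clique $B_r$ comes last. By hypothesis $\delta$ restricted to the first $N''$ cliques $Y_0$ of $X$ is a buffer coloring for $\beta$, and the clique just before $Y_0$ lies in $B_\ell$, hence is colored with its target color since $\delta_B=\beta_B$; so the hypotheses of Lemma~\ref{lem:buffer} hold. Applying it produces a coloring $\delta_1$, obtained by a schedule of length $\poly(|Y_0|,k)=\poly(\Delta)$ that recolors only vertices lying in bags of $Y_0$ --- in particular no vertex of $B$, since $Y_0$ is disjoint from the border cliques of $X$ --- such that the first clique $C_0$ of $Y_0$ keeps its target color and $\delta_1$ is a buffer coloring for the one-clique translate $Y_1$ of $Y_0$. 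Iterating, after $t$ applications the cliques $C_0,\dots,C_{t-1}$ are permanently frozen with their target colors (later steps recolor only bags of $Y_t,Y_{t+1},\dots$, all lying strictly beyond $C_{t-1}$), $\delta_t$ is a buffer coloring for $Y_t$, and $B$ is never touched. Since the part of $\mathcal P$ spanned by $X$ consists of $O(|X|)$ cliques, after $T=O(|X|)$ steps the buffer $Y_T$ has reached the end of $X$, i.e.\ it is the set of the last $N''$ cliques of $X$, and by then every clique of $X\setminus B$ lying outside $Y_T$ carries its target color.

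It then remains to recolor $Y_T$ itself. Its first clique $C$ is an interior clique of $X$, hence has no neighbour outside $X$; its last clique is the last clique of $X$, so any of its neighbours outside $X$ is a neighbour of $B_r$, and thus carries its target color since $\delta$ and $\beta$ agree outside $X$ and $\delta_B=\beta_B$. Therefore the third property of buffer colorings applies and transforms $\delta_T$ into the target coloring on $Y_T$ by recoloring only vertices of $Y_T$, with a further schedule of length $\poly(\Delta)$ and without touching $B$. Concatenating all these schedules yields a transformation from $\delta$ to $\beta$ of total length $O(|X|)\cdot\poly(\Delta)=O(\poly(\Delta,|X|))$, using no extra color, that never recolors a vertex of $B$.

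The main obstacle is the endgame. Lemma~\ref{lem:buffer} only ever translates the buffer, so to make it disappear one must invoke the third property of buffer colorings, whose hypothesis is precisely that the outside-neighbours of the extreme cliques of the buffer already carry their target colors. Arranging that the buffer is halted in exactly the configuration where this holds --- first clique interior (hence no outside-neighbours), last clique pressed against $B_r$ whose external neighbourhood is already correct --- is the delicate point, and is where one must keep the orientation of $\mathcal P$, the location of $B_\ell,B_r$, and the invariant ``$Y_t\cap B=\varnothing$'' all consistent. A secondary routine check is that the number of cliques of $\mathcal P$ met by $X$ is $O(|X|)$, so that the accumulated schedule length remains polynomial in $\Delta$ and $|X|$.
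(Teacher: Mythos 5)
Your proposal runs on the right engine --- iterate Lemma~\ref{lem:buffer} to slide the buffer one clique at a time, freeze each clique it leaves behind at its target color, and close out the final buffer with the third property of buffer colorings once its outside neighbours are correctly colored. One half of the paper's proof is exactly this sweep, and your bookkeeping of the invariants and of the $O(|X|)\cdot\poly(\Delta)$ schedule length for that sweep is sound.

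The gap is in the setup you impose before the sweep starts. You assume (i) that $\delta$ and $\beta$ coincide outside $X$ and (ii) that the buffer sits flush against one border clique $B_\ell$, so that a single left-to-right pass covers everything. Neither is a hypothesis of the lemma, and both fail in the situations where the lemma is invoked (Lemmas~\ref{clm:semigood} and~\ref{lem+interval+border}): there $X$ is a set of $N'$ consecutive vertices around a ruling-set node sitting in the \emph{interior} of $G$, the coloring $\delta$ on $G\setminus X$ is an arbitrary $k$-coloring that the lemma is supposed to turn into $\beta$, and the only agreement with $\beta$ guaranteed in advance is on the (left and right) border of $G$ --- which is what the statement's $B$ and the condition $\beta_B=\delta_B$ are really about, the phrase ``border of $X$'' notwithstanding. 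Under your assumptions the lemma becomes nearly vacuous for its purpose, and under the correct reading your single rightward sweep recolors only the portion of $G$ to the right of the initial buffer: everything to its left is never touched. The paper's proof handles this with two sweeps: first slide the buffer rightward from $X$ until it abuts the right border of $G$ and finish it there via the third property (using $\delta_B=\beta_B$ on that border); then, since the region from $X$ rightward is now entirely at its target color, property~1 re-creates a buffer for the reversed orientation of the clique path, and the symmetric leftward sweep recolors everything down to the left border. Adding that second, mirror-image pass (and dropping the assumption that $\delta=\beta$ off $X$) is what your argument is missing.
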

\begin{proof}
The proof is a consequence of Lemma~\ref{lem:buffer}. Since $X$ forms a buffer, we can iteratively apply Lemma~\ref{lem:buffer} from $X$ going to the right until we reach the right border of $G$ (without recoloring it). 
We know that the colorings $\beta$ and $\delta$ agree on the right border. When we reach the right border, the coloring of the $N'$ vertices before it is a buffer coloring. And we observed that if the neighbors of a buffer are colored with the target coloring, then the coloring of the buffer can be modified to obtain the target coloring. So we can assume that $X$ and all the vertices at the right of $X$ are colored with the target coloring.

We can then repeat this argument from $X$ going to the left in order to recolor all the vertices at the left of $X$ with the target coloring, which completes the proof.
\end{proof}

We can now prove Lemma~\ref{clm:semigood} using Lemma~\ref{lem:recoluptoborder}.

\begin{proof}[Proof of Lemma~\ref{clm:semigood}]
Since $\delta$ is a semi-target coloring for $\beta$, for every $i \in I$, the $N$ vertices $X_i$ before and after every box of $i$ are colored with their target coloring. For every $i \in I$, Let $N'_i$ be the set of vertices of $X_i$ where vertices of the border of $X_i$ as well as their neighbors have been removed. Let $X_i'$ be the resulting set of size at least $N'$. The vertices of $X_i'$ are colored with the target coloring as well as their neighbors, so the semi-target coloring is a buffer coloring for the target coloring $\beta$.  

Now let $I'$ be  an independent set at distance $5$ in the ruling path of $I$. Such an independent set can be found in $O(\poly(\omega,\Delta) \logsn)$ rounds. We can number the elements of $I$ with label $0$ if they are in $I'$ and with label $i+1$ if they are incident to a label $i$. 

\begin{itemize}
    \item From every node $y$ of $I'$, we denote by $G_y$ the graph between the boxes of the closest labelled $2$ nodes of $y$. By Lemma~\ref{lem:recoluptoborder}, we can recolor  all the vertices between these two blocks with their target coloring. Moreover, since all the graphs $G_y$ are disjoint and not incident (there is a node labeled $3$ between any two consecutive labelled $0$ vertices), all these recolorings can be performed simultaneously. So a $O(\poly(N,k,\Delta))$ schedule is enough for this step.
    \item Now, consider the sections between two labeled $1$ vertices. Since all the vertices between the labeled $0$ and the labeled $2$ blocks are now colored with the target coloring, we indeed have a buffer coloring around labeled $1$ vertices. So by Lemma~\ref{lem:recoluptoborder} we can recolor all the vertices between these blocks with the target coloring. Again, all these operations can be performed simultaneously. Hence, a $O(\poly(N,k,\Delta))$ schedule is enough for this step.
\end{itemize}
After these two steps, (i) the nodes between a $0$-labelled and a $2$-labelled node of $I'$ are colored with the target coloring and (i) the nodes between two consecutive $1$-labelled nodes of $I'$ are colored with the target coloring. So the whole graph is colored with the target coloring, which completes the proof.
\end{proof}

Using Lemma~\ref{lem:recoluptoborder} we can prove the following lemma that will be used for chordal graphs.

\begin{lemma}\label{lem+interval+border}
Let $G$ be an interval graph and $\gamma,\beta$ be two colorings of $G$ such that:
\begin{enumerate}
    \item $\gamma$ is a $(\omega+3)$-coloring,
    \item $\beta$ is an $(\omega+1)$-coloring,
    \item $\gamma$ and $\beta$ agree on the leftmost and rightmost cliques of $G$,
    \item the diameter of $G$ is at least $12(2N + \Delta + 1) + 3$.
\end{enumerate}
Let $T$ be an upper bound on the diameter of $G$, and assume that we have access to one additional color. 
Then there is a recoloring schedule from $\gamma$ to $\beta$ of length $O(\poly(|G|))$ which never recolors the vertices in $B$ in at most $O(T)$ rounds.
\end{lemma}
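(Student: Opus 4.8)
The plan is to treat this lemma as a \emph{localised} version of the second item of Theorem~\ref{thm:recol-interval}: I would re-run exactly the machinery of Sections~\ref{sec:kempe} and~\ref{sec:buffer} (one Kempe step to create a buffer, then the buffer-sliding technique), but confine it to the interior of $G$ so that the border $B$ is never recolored. A first convenient observation is that since $\operatorname{diam}(G)\le T$, in $O(T)$ rounds every vertex can collect all of $G$ together with the colorings $\gamma$ and $\beta$; hence it suffices to exhibit a single valid schedule, which every vertex then recomputes identically, and the whole round-complexity discussion collapses to this $O(T)$ bound. I would start by fixing a clique path of $G$ whose first and last cliques are exactly the border $B$, and computing a box decomposition of $G$ (Lemma~\ref{lem:compute-box-decomposition}).

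The first step would be to create one buffer well inside $G$. Let $c_0$ be a color used neither by $\gamma$ (which uses $\omega+3$ colors) nor by $\beta$ (which uses $\omega+1$); this is the one extra color we are allowed. Using the diameter hypothesis, I would choose a vertex $v$ of the ruling set roughly in the middle of the clique path so that the set $S$ consisting of $v$ together with $\Theta(N)$ consecutive vertices around it is at distance at least $3|S|$ from $B$. Then I would apply Lemma~\ref{lem:mainkempe} to $G$, $S$, the coloring $\gamma$ and the target $\beta_S$ (which uses at most $\omega+1\le\omega+3$ colors): this produces an $(\omega+4)$-coloring $\gamma'$ and a recoloring schedule of length $O(N)=O(\poly(\Delta,\omega))$ from $\gamma$ to $\gamma'$ such that $\gamma'$ extends $\beta_S$ and no vertex of $B$ is recolored; in particular $\gamma'$ still agrees with $\gamma$, hence with $\beta$, on $B$. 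Since $\gamma'$ now colors $\Omega(N)$ consecutive vertices with the target coloring $\beta$, deleting the (at most two cliques forming the) border of that window together with their neighbors leaves at least $N'$ consecutive vertices $X$, all colored with $\beta$ and whose preceding clique is colored with $\beta$ too; by the first property of buffer colorings recalled in Section~\ref{sec:buffer} (and the same clique-counting argument as in Claim~\ref{clm:goodlength}), $\gamma'$ is a buffer coloring for $X$.

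The second step would be to propagate this buffer out to both borders. I would invoke Lemma~\ref{lem:recoluptoborder} with $\delta=\gamma'$ and $k=\omega+4$: its three hypotheses hold, since $\beta$ is an $(\omega+1)$-coloring, $\gamma'$ is an $(\omega+4)$-coloring admitting the buffer $X$ for $\beta$, and $\gamma'_B=\beta_B$. This yields a recoloring schedule from $\gamma'$ to $\beta$ of length $O(\poly(\Delta,|X|))=O(\poly(|G|))$ that uses no additional color and never recolors $B$. Concatenating the two schedules gives the desired recoloring $\gamma\to\gamma'\to\beta$ of total length $O(N)+O(\poly(|G|))=O(\poly(|G|))$, never touching $B$, and produced within $O(T)$ rounds.

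The main (and essentially the only non-routine) obstacle is the placement of $v$ in the first step: the window $S$ must simultaneously be large enough to yield a genuine buffer — at least $N'$ interior consecutive vertices, which forces $|S|$ to be of order $N$ — and far enough from $B$ that Lemma~\ref{lem:mainkempe} applies with $D\ge 3|S|$, so that its recoloring provably cannot reach $B$. Verifying that the diameter lower bound $12(2N+\Delta+1)+3$ leaves exactly enough room (one box near each border, a central window of size $\Theta(N)$, and the $3|S|$-slack on each side needed by Lemma~\ref{lem:mainkempe}) is the one place where the precise constants matter; everything else is a direct assembly of the results already proved in this section.
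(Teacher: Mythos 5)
Your proposal is correct and follows essentially the same route as the paper: use the diameter hypothesis to place a window of $\Theta(N)$ consecutive vertices far from the border, recolor it to $\beta$ via the Kempe-chain machinery of Lemma~\ref{lem:mainkempe} (the paper phrases this as ``the same techniques as in Corollary~\ref{coro:kempe}''), observe that this yields a buffer coloring, and then conclude with Lemma~\ref{lem:recoluptoborder}. Your explicit bookkeeping of the color counts ($\omega+3$ plus one extra giving $k=\omega+4$ for Lemma~\ref{lem:recoluptoborder}) and your flagged concern about fitting the $3|S|$ slack inside the diameter bound are exactly the points the paper glosses over, so the match is essentially complete.
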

\begin{proof}
By Lemma~\ref{lem:recoluptoborder}, if we can transform $\gamma$ into a coloring for which $N'$ consecutive vertices form a buffer coloring for the target coloring $\beta$ the conclusion follows.

Since the diameter of $G$ is at least  $12(2N + \Delta + 1) + 3$, we can find a vertex $y$ at distance at least  $6(2N + \Delta + 1) + 1$ from the border of $G$ in $O(T)$ rounds (since in $O(T)$ rounds we have the full knowledge of the graph).
Now using the same techniques as in Corollary~\ref{coro:kempe}, we can recolor the $N$ vertices before and after $y$ with the target coloring, which gives, as we already observed, a buffer coloring for $\beta$.
\end{proof}

\section{Decomposing, coloring and recoloring chordal graphs}
\label{sec:chordal}

We now investigate how to extend the results from the previous sections to chordal graphs. 
The algorithms for coloring and recoloring proceed in roughly two steps. The first step computes a partition of the chordal graph, obtained by iteratively removing interval subgraphs with controlled diameter. During the computation of this decomposition, we will assume that the nodes of the graph have access to a local view of a clique tree of the graph $G$. These local views can be computed in a distributed fashion with a constant number of rounds using the method from~\cite{KonradZ19} (see Section~\ref{sec:chordal-decomp} for more details). 

In the second step, the coloring algorithm consists in adding back step by step the vertices which have been deleted in the previous phase. Informally speaking, for each connected component, either the vertices that have been removed are attached to a single clique, and then we know that all the vertices attached to that clique form a $(\omega-1)$-degenerate graph, and we can then give them a color between $1$ and $\omega$ greedily. Either those vertices are attached to two cliques at large enough distance, and we can then use the machinery introduced in Section~\ref{sec:decompo-coloring-intervals} to color them with $\omega+1$ colors in total. Since at each step, each set has bounded diameter, one leader can decide of the coloring for all the vertices of that set.
For the recoloring algorithm, the idea is almost the same. The algorithm constructs iteratively a recoloring schedule by adding back step by step the vertices which were removed in the previous phase either using degeneracy or the tools of Section~\ref{sec:recoloring-intervals}.

\subsection{Interval decomposition of chordal graphs}\label{sec:chordal-decomp}

\begin{definition}
Let $G$ be a chordal graph, and $H$ a subgraph of $G$. We say that $H$ is a \emph{pending interval graph} if $H$ is an interval graph, and the border of $H$ is a subset of the first clique in a clique path of $H$. We say that $H$ is a \emph{separator interval graph}, if it is an interval graph, and the vertices of the border all belong to either the first or the last clique of a clique path of $H$.

An \emph{interval decomposition} of width $D$ and depth $\ell$ of a chordal graph $G$ is a partition $V_1, \ldots, V_\ell$ of the vertices of $G$ such that, $G[V_i]$ is a disjoint union of interval graphs with diameter at most $3D$, and every connected component of $G[V_i]$ is either:
\begin{itemize}
    \item a pending interval in $G[V_1,...,V_i]$; 
    \item or a separator interval graph with diameter at least $D$, in $G[V_1,...,V_i]$.
\end{itemize}

\end{definition}

The result below is adapted from the proofs of~\cite{KonradZ19}. We redo the proof for completeness.

\begin{restatable}{lemma}{LemDecompChordal}
\label{lem+decomp+chordal}
For every $D \geq 0$, there exists a distributed algorithm which computes an interval decomposition of width $D$ and depth $O(\log n)$ in $O(D \log n)$ rounds in the LOCAL model.
\end{restatable}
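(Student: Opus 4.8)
The plan is to run a rake-and-compress process (in the spirit of \cite{MillerR89}) on a clique tree of $G$, turning each elementary tree operation into the deletion of one layer $V_i$ of vertices of $G$. By \cite{KonradZ19}, in $O(1)$ rounds every vertex learns a local view of a clique tree $T$ of $G$ (the bag(s) it belongs to and the shape of $T$ within a constant radius), and $T$ has at most $n$ nodes. Since each bag of $T$ is a clique, hence has at most $\omega$ vertices, tree-distance in $T$ and graph-distance in $G$ agree up to constant factors; so in $O(D)$ rounds every vertex can reconstruct the portion of $T$ within tree-distance $\Theta(D)$, which is all it needs to run one phase.

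I would build the layers from the outside in: set $H_1:=G$, and at step $j=1,2,\dots$ peel a layer off $H_j$ to obtain $H_{j+1}$, so that each component removed at step $j$ is pending or separating with respect to $H_{j+1}$, exactly as the definition requires. Let $T_j$ be the clique tree of $H_j$ (the part of $T$ spanning the surviving bags). At step $j$, consider the maximal \emph{branchless paths} of $T_j$ (maximal subpaths all of whose internal nodes have degree $2$ in $T_j$), split into pendant ones (an endpoint is a leaf of $T_j$), isolated ones (all of $T_j$), and internal ones (both endpoints are branch nodes of degree $\geq 3$). For a branchless path whose corresponding subgraph of $G$ has diameter at most $3D$: remove the not-yet-assigned vertices living on it into the current layer if the path is pendant or isolated (a pending component), or if it is internal of diameter at least $D$ (a separator component); otherwise (internal, diameter $<D$) leave it in $H_{j+1}$. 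For a branchless path of diameter more than $3D$, compute a ruling set along it in $O(D+\logsn)$ rounds, cut it into chunks of diameter in $[D,2D]$, remove every second chunk into the current layer, and leave the rest in $H_{j+1}$; each removed chunk becomes a separator component if the surviving graph still touches it on both sides and a pending component otherwise.

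The structural requirements are then a clique-tree computation. The vertices of $G$ whose subtrees all lie within a subpath of $T$ induce a chordal graph having that subpath as a clique path, hence an interval graph, with diameter at most (roughly) the number of bags on the subpath, which is $\le 3D$ by construction; moreover such a vertex is adjacent to the outside only if its subtree reaches an endpoint bag of the piece, so all border vertices lie in the first or the last clique — giving the pending, resp. separator, property — and we reserved the chunks of diameter $\geq D$ for the separator case. Finally, distinct removed pieces within the same layer are pairwise non-adjacent, because we never take two branchless paths that meet at a degree-$2$ node and we remove only alternate chunks of a long path.

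The depth bound $O(\log n)$ is the crux. Contracting every short internal path of $T_j$ to a single edge turns $T_j$ into a tree on its leaves and branch nodes, and one step above performs a complete rake of this tree together with a complete compress of every long branchless path; by the Miller--Reif analysis the number of leaves-plus-branch-nodes then shrinks by a constant factor and every long branchless path loses a constant fraction of its length, so after $O(\log n)$ steps $T_j$ is empty and every vertex of $G$ has been assigned. The short internal paths that were temporarily ``parked'' do not create extra layers: as soon as a branch node drops to degree $\leq 2$ it merges into a branchless path and is handled in the ordinary way. Each step costs $O(D+\logsn)=O(D)$ rounds (local reconstruction plus ruling sets), for a total of $O(D\log n)$. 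The step I expect to be the main obstacle is precisely this simultaneous bookkeeping: proving geometric progress of the rake-and-compress while cutting long paths into bounded-diameter chunks, removing only alternate chunks, and being forbidden from deleting short internal paths (which would violate the ``diameter $\geq D$'' lower bound for separator components) yet needing to argue they eventually disappear.
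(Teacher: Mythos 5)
Your proposal is correct and follows essentially the same route as the paper: compute the clique tree via \cite{KonradZ19}, then run a rake-and-compress peeling that removes pendant branchless paths as pending components and long branchless paths as separator components, cutting the over-long ones into chunks of diameter $\Theta(D)$ with a ruling set, with the $O(\log n)$ depth following from the contraction of the branchless-path structure. The only difference is bookkeeping: the paper removes each long path in a single layer and then splits every layer into three sub-layers ($W^i_1,W^i_2,W^i_3$) in a second pass, whereas you remove alternate chunks immediately and defer the surviving chunks (which become isolated or pendant) to the next layer — both variants yield the same width, depth, and round bounds.
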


\begin{proof}
The algorithm starts by computing the clique tree of $G$. This can be done in a constant number of steps using the algorithm from~\cite{KonradZ19}. In particular, every node $v$ knows exactly which nodes of the clique tree it belongs to and which vertices of the graph are in each of these bags.
Then the algorithm proceeds in two steps. In the first step, it computes a partition of $G$, such that each connected component in a given class of the partition is either a pending interval graph, or a separating interval graph of diameter at least $D$. In a second step, the algorithm splits further each element of the partition into three parts, to obtain the $3D$ upper bound on the diameter of the components.

In order to compute this first partition, the algorithm iteratively removes vertices from the graph, and deletes the corresponding nodes from the clique tree. In other words, it removes a node of the clique tree when the corresponding bag is empty and contract edges of the clique tree for which a bag is included in another.
At the $i^{th}$ iteration, the algorithm starts by pre-selecting the vertices to remove. The algorithm selects all the vertices which appear only in vertices of degree at most $2$ in the clique tree and: 
\begin{itemize}
    \item are part of a pending interval graph of diameter at most $3D$ or,
    \item are part of an interval graph of diameter at least $D$.
\end{itemize}

Such an iteration takes $O(D)$ rounds. In other words, we remove a vertex $v$ if, when we consider the clique tree restricted to all the vertices at distance at most $D$ from $v$, the vertex $v$ only appears in a subpath of the corresponding clique tree and none of the vertices of this path are branching nodes of the clique tree.

At this point of the algorithm, the selected vertices induce a disjoint union of interval graphs, and each of these interval graphs is either a pending interval graph, or a separating interval graph of diameter at least $D$. This process terminates after at most $O(\log n)$ iterations. 
Indeed, from the point of view of the clique tree, the peeling process we use is a version of the rake-and-compress of Miller and Reif~\cite{MillerR89}, for which it is well-known that we have removed all the vertices from the graph after at most  $\ell = O(\log n)$ steps (recall that the clique tree has at most $n$ nodes). 
Let $V_i$ be the vertices removed at the iteration $\ell - i$ of the algorithm. Since each iteration can be done in $O(D)$ steps, this part of the algorithm stops after $O(D \log n)$ rounds.

In the second part, we split again (in parallel) each $V_i$ into $W^i_1, W^i_2, W^i_3$, to ensure that each component has diameter $O(D)$. 
Note that these graphs are interval graphs, therefore we can use the machinery of Section~\ref{sec:decompo-coloring-intervals}.
First, all the components of $V_i$ with diameter at most $3D$ are left unmodified, and are added to $W^i_3$. 
For the components of diameter greater than $3D$, we can compute a box decomposition in time $O(\logsn)$ (Lemma~\ref{lem:compute-box-decomposition}). Then, as before, we can easily refine this decomposition in order to keep only boxes at distance between $D$ and $2D$ one from the other. The nodes of the boxes (of the refined decomposition) are added to $W^i_2$. The interboxes (of the refined decomposition) have diameter at most $2D$. The component which contained the border of the large interval graph are added to $W^i_1$, and all the remaining vertices are added to $W^i_3$. Since we can do these operations in parallel for each component, this second step finishes after $O(D \log^* n)$ steps. After these operations, we obtain an interval partition of $G$ of width $D$ and depth $O(\log n)$. \end{proof}

\subsection{$(\omega+1)$-coloring chordal graphs}

In this section, we show how an interval decomposition can be used to compute an $(\omega+1)$-coloring of chordal graphs in the LOCAL model. Namely, we prove the following theorem.

\ThmColChordal*

The proof of Theorem~\ref{thm:col-chordal} follows from the decomposition of Lemma~\ref{lem+decomp+chordal} and the lemma below.

\begin{lemma}
There is an algorithm that, given a graph $G$ and an interval decomposition of $G$ of depth $\ell$ and width at least $15(\omega+1)$, 
computes a $(\omega+1)$ coloring of $G$ in $O(\omega \ell)$ rounds in the LOCAL model.
\end{lemma}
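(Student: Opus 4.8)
The plan is to build the colouring layer by layer, as sketched in the introduction. Write $V_1,\dots,V_\ell$ for the given interval decomposition, with the notation of the definition: $G[V_i]$ is a disjoint union of interval graphs of diameter at most $3D$ whose connected components are, inside $G[V_1\cup\cdots\cup V_i]$, either pending interval graphs or separator interval graphs of diameter at least $D$, where $D\ge 15(\omega+1)$. We process the layers in the order $V_1,V_2,\dots,V_\ell$ (colouring the ``core'' layer first), maintaining the invariant that after step $i$ the vertices of $V_1\cup\cdots\cup V_i$ carry a proper $(\omega+1)$-colouring of $G[V_1\cup\cdots\cup V_i]$ which is never changed afterwards. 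The structural fact we use at every step is that, when $V_i$ is reached, the only already-coloured vertices adjacent to a component $C$ of $G[V_i]$ are the external neighbours of the border $B_C$ of $C$, and these lie in a single clique (pending case) or in two cliques, one attached at each of the two ends $A,B$ of a clique path of $C$ (separator case). Since distinct components of a layer are moreover pairwise non-adjacent, a whole layer can be coloured in parallel.

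Each component $C$ of $G[V_i]$ has diameter at most $3D=O(\omega)$, so its minimum-identifier vertex can, in $O(D)=O(\omega)$ rounds, collect all of $C$, the decomposition data (the layer of each nearby vertex, and whether the component of each layer is pending or a separator) and the current colours of the already-coloured external neighbours of $C$; it then fixes the colours of all vertices of $C$ by a purely local computation and broadcasts them back in $O(\omega)$ rounds. No distributed subroutine inside $C$ is needed. Running the $\ell$ layers sequentially gives $O(\omega\ell)$ rounds in total, as claimed.

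It remains to describe the local colouring step run by the leader of $C$; here we are exactly back in the interval-graph setting of Section~\ref{sec:decompo-coloring-intervals}. If $C$ is pending, let $E$ be the clique of already-coloured external neighbours ($|E|\le\omega$); scanning a clique path of $C$ from the end where $E$ is attached, colour each new vertex with a colour in $\{1,\dots,\omega\}$ avoided by its already-coloured neighbours — there are at most $\omega-1$ forbidden colours since those neighbours together with the new vertex lie in a common clique. If $C$ is a separator, let $E_L,E_R$ be the (possibly empty) cliques of already-coloured external neighbours attached at the ends $A$ and $B$; extending as above from $E_L$ gives an $\omega$-colouring $\alpha$ of $C$, and from $E_R$ an $\omega$-colouring $\beta$ of $C$, each agreeing with the fixed colours on its side. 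Finally apply Lemma~\ref{lem:coloring-interval-border} to $C$ in its clique-path form — with the maximal cliques of a clique path of $C$ playing the role of boxes, which is legitimate because consecutive maximal cliques of an interval graph induce separators, and that is all the proof of Lemma~\ref{lem:switch} uses — to get an $(\omega+1)$-colouring of $C$ equal to $\alpha$ on $A$ and to $\beta$ on $B$, hence consistent with every already-coloured external neighbour. The hypothesis needed is that $A$ and $B$ are separated by at least $3\omega$ maximal cliques, which holds because $\operatorname{diam}(C)\ge D\ge 15(\omega+1)$ forces a clique path of $C$ to contain more than $15(\omega+1)-2>3\omega$ maximal cliques (for an interval graph the diameter is at most the number of maximal cliques in a clique path). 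The resulting colouring of $C$ is proper and can conflict only with not-yet-coloured vertices of later layers, to be handled at their own step, so the invariant is preserved; the base case $i=1$ has no external neighbours and we just take any $\omega$-colouring of each component.

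\textbf{Main obstacle.} The delicate point — and the real content of the proof — is justifying that the leader is genuinely back in the interval-graph situation, i.e. that the already-coloured external neighbours of $C$ never obstruct the greedy $\omega$-extension above. One must check, from the structure of the decomposition, that at every vertex processed during the extension the union of its external and internal already-coloured neighbours forms a clique, so at most $\omega-1$ colours are forbidden; this is what keeps $\alpha$ and $\beta$ inside $\{1,\dots,\omega\}$ even if some external neighbour happens to carry colour $\omega+1$ (colour $\omega+1$ is only ever used in the interiors of separator components, and those interiors are disjoint from the borders seen by other layers). Once this reduction to the interval-graph lemmas of Section~\ref{sec:decompo-coloring-intervals} is in place, the invariant propagates through all $\ell$ layers and $G$ ends up properly $(\omega+1)$-coloured, which (together with Lemma~\ref{lem+decomp+chordal}) proves Theorem~\ref{thm:col-chordal}.
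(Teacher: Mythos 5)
Your overall architecture is exactly the paper's: process the layers $V_1,\dots,V_\ell$ in order, let a leader of each component of $G[V_i]$ (diameter $O(\omega)$) gather the component and its already-coloured external neighbours and compute centrally, extend with $\omega$ colours in the pending case, and invoke Lemma~\ref{lem:coloring-interval-border} in the separator case; the $O(\omega\ell)$ round count follows the same way. Your explicit treatment of the case where an already-coloured external neighbour carries colour $\omega+1$ is a welcome addition (the paper is silent on it), and your greedy argument --- at most $\omega-1$ forbidden colours at each step, so a colour in $\{1,\dots,\omega\}$ always survives --- handles it correctly.

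There is, however, one step that does not work as written: letting ``the maximal cliques of a clique path of $C$ play the role of boxes'' when applying Lemma~\ref{lem:coloring-interval-border}. The proof of Lemma~\ref{lem:switch} does not only use that each box separates the graph; it crucially uses that boxes at distance $3$ along the virtual path are vertex-disjoint and pairwise non-adjacent, so that recolouring the colour class $y$ to $x$ \emph{outside} $A\cup X_{0,1}\cup B_1\cup X_{1,2}$ cannot conflict with the vertices still coloured $x$ \emph{inside} $A\cup X_{0,1}$. Consecutive maximal cliques of a clique path do not have this property: a single interval can belong to up to $\Delta+1$ consecutive bags, and in an interval graph $\Delta$ can be arbitrarily larger than $\omega$, so a vertex of the first chosen clique may itself lie in, or be adjacent to, bags far beyond the third one; the second recolouring step of Lemma~\ref{lem:switch} then creates a monochromatic edge. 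Spacing the chosen cliques $3\omega$ apart in clique-path index does not repair this, since the obstruction is governed by $\Delta$, not $\omega$. The fix is what the paper does: since the leader sees all of $C$, it can centrally compute a genuine $(4,5)$-ruling set and box decomposition of $C$ (graph-distance-based, so Proposition~\ref{prop:basicbox} applies), and the diameter lower bound $D\ge 15(\omega+1)$ supplies enough boxes between the two ends to run Lemma~\ref{lem:coloring-interval-border}. With that substitution your proof coincides with the paper's.
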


\begin{proof}
Let $V_1, \ldots, V_{\ell}$ be the interval partition of width $\omega$, and let $G_i = G[V_1\cup \ldots \cup V_{i}]$. We start with a coloring $\sigma_1$ of $G_1$ which can be computed in $O(\omega)$ rounds since each connected component of $G_1$ has diameter $O(\omega)$. Then for each $2 \leq i \leq \ell$, the algorithm computes an extension of the coloring from $G_{i-1}$ to $G_i$, using again $O(\omega)$ rounds. At each step $i$, let us denote by $\sigma_{i-1}$ the current coloring of the graph, and consider a connected component $S$ of $G[V_i]$. Let $G' = G_i[S \cup N(S)]$. By construction, $G'$ is an interval graph, with some vertices in its border already colored. There are two possible cases:
\begin{itemize}
    \item $S$ is a pending interval graph, which means that the vertices of $G'$ already colored are on a single side and form a clique. Then, given any $\omega$-coloring $\eta$ of $G'$, we can assume, up to permuting the colors in $\eta$, that $\eta$ and $\sigma_{i-1}$ agree on the already colored vertices of $G'$ and use the coloring $\eta$ to extend the coloring to $G'$. Namely, the vertices of the pending interval graph chooses a leader which knows the whole pending interval graph and can then try all the possible extensions of $\sigma_{i-1}$ to the vertices of the pending interval graph using at most $\omega$ colors. Such an extension must exist and then the leader can just assign it to the vertices of the pending interval.
    \item $S$ is a separating interval graph with diameter at least $15(\omega+1)$. It means that we have at least $3(\omega+1)$ boxes in the interval. Then $G'$ is an interval graph, and the vertices already colored are on its border. By Lemma~\ref{lem:coloring-interval-border}, we know that there exists an $(\omega+1)$-coloring of $G'$ which extends the partial coloring given by $\sigma_{i-1}$.
\end{itemize}
In all cases, we can extend the coloring $\sigma_{i-1}$ into a coloring $\sigma_i$ where all the vertices in $V_i$ are colored. Since we use $O(\omega)$ rounds for each step $i$, the algorithm uses in total $O(\omega \ell)$ rounds in the LOCAL model.
\end{proof}

\subsection{Recoloring chordal graphs}

In this section we describe how to use an interval decomposition in order to compute a recoloring schedule. We prove the following theorem:

\ThmChordal*

\medskip

Theorem~\ref{thm:recol-chordal} uses our decomposition (Lemma~\ref{lem+decomp+chordal}) and the result below.

\begin{restatable}{lemma}{LemIntervalDecomp}
Let $G$ be an interval graph and $\alpha,\beta$ be two proper $\omega+3$-colorings of $G$. We suppose we are given an interval decomposition of width $D= \Omega(\omega^2 \Delta^2)$ and depth $\ell$ of $G$. It is possible to find a schedule of length at most $O(\poly(\Delta)^\ell)$ to transform $\alpha$ into $\beta$ in $O(D \ell)$ rounds in the LOCAL model.
\end{restatable}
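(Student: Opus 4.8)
The plan is to process the interval decomposition layer by layer, starting from the "deepest" layer and adding back one layer at a time, maintaining at each stage a valid recoloring schedule for the subgraph induced by the layers processed so far. Let $V_1,\dots,V_\ell$ be the interval decomposition, and write $G_{\ge i} = G[V_i \cup \dots \cup V_\ell]$, so $G_{\ge \ell} = G[V_\ell]$ and $G_{\ge 1} = G$. The base case ($G_{\ge \ell}$) is handled directly: its components are interval graphs of diameter $O(D)$, so a leader in each component can compute a recoloring schedule by brute force (using Theorem~\ref{thm:recol-interval}, or simply because the component has bounded diameter and we have an extra color available). The inductive step is the heart of the argument: given a recoloring schedule $\mathcal{S}_{i+1}$ from $\alpha$ to $\beta$ on $G_{\ge i+1}$, we must build a schedule $\mathcal{S}_i$ on $G_{\ge i}$ that agrees with $\mathcal{S}_{i+1}$ on $V_{i+1}\cup\dots\cup V_\ell$ — more precisely, $\mathcal{S}_i$ restricted to $G_{\ge i+1}$ is $\mathcal{S}_{i+1}$, possibly with each single recoloring step "slowed down" into a polynomial-in-$\Delta$ number of steps so that the newly-added vertices of $V_i$ can be kept properly colored throughout.

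The key structural fact (from the definition of interval decomposition) is that each connected component $S$ of $G[V_i]$ is attached to $G_{\ge i+1}$ only along its border, which is either a single clique (pending case) or two cliques at the two ends of a clique path (separator case), and these borders have size at most $2\omega$. In the pending case, $S\cup N(S)$ is $(\omega-1)$-degenerate with the border colored externally, so the standard degeneracy-based argument (as in Corollary~\ref{lem:recolbetweenruling2}) lets us "track" the recoloring of $S$: whenever a border vertex is recolored to a color currently used in $S$, we first locally fix up $S$ by a short recoloring sequence using the $\ge \omega+3$ available colors (actually $\omega+4$ if we need the buffer machinery). In the separator case, $S$ is an interval graph with both borders externally colored and diameter $\ge D = \Omega(\omega^2\Delta^2)$, so we can invoke Lemma~\ref{lem+interval+border} (or Lemma~\ref{clm:semigood} together with Corollary~\ref{coro:kempe}) to recolor $S$ consistently with the prescribed border colorings, provided the border colorings change slowly enough — which is exactly why each step of $\mathcal{S}_{i+1}$ must be expanded into $\poly(\Delta)$ substeps. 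Because the components of $G[V_i]$ have pairwise disjoint and non-adjacent closed neighborhoods (they are separated in the decomposition), all these local fix-ups can be performed in parallel, so no conflicts arise at the interfaces.

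For the complexity bookkeeping: at each of the $\ell$ layers, the number of rounds needed is $O(D)$ for the components of diameter $O(D)$ plus the $O(\logsn)$ or $O(D)$ rounds to locate the relevant separators and run Theorem~\ref{thm:recol-interval}'s machinery, giving $O(D\ell)$ rounds total. For the schedule length, each layer multiplies the length by a factor $\poly(\Delta)$ (the number of substeps we insert between two consecutive original steps, from Lemma~\ref{lem:mainkempe} / Lemma~\ref{lem:recoluptoborder}), so after $\ell$ layers the length is $\poly(\Delta)^{\ell} \cdot (\text{base length})$, and the base length is itself $\poly(\Delta)$ or $O(n)$, yielding $O(\poly(\Delta)^\ell)$ as claimed. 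One has to be slightly careful that the colorings $\alpha,\beta$ restricted to each $G[V_i\cup N(V_i)]$ are valid $(\omega+3)$-colorings of an interval graph and that the border agreement hypotheses of Lemma~\ref{lem+interval+border} are met; this follows because the border of each component lies in $G_{\ge i+1}$, whose coloring is already governed by the (reversible) induction, so we may first recolor $\alpha$ and $\beta$ towards a common canonical coloring $\gamma$ on $G_{\ge i+1}$ and compose, as in Step~2 of the interval proof.

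\textbf{The main obstacle} I expect is the interplay between the "slowing down" of the outer schedule and the local fix-up schedules: when a border vertex of some component $S$ is recolored in $\mathcal{S}_{i+1}$, the induced local repair of $S$ may itself force recolorings of vertices of $S$'s border — but those border vertices live in $G_{\ge i+1}$, and we have promised not to deviate from $\mathcal{S}_{i+1}$ there. The resolution is to ensure the local repair only ever recolors \emph{interior} vertices of $S$ (never the border), which is precisely what Lemma~\ref{lem:mainkempe}, Lemma~\ref{lem:recoluptoborder}, and Lemma~\ref{lem+interval+border} are designed to guarantee — they produce schedules that fix the border pointwise — and this is why the separator components are required to have diameter $\Omega(\omega^2\Delta^2)$ (enough room for the buffer argument to "absorb" any Kempe chain before it reaches the border) and why we need $\omega+3$ rather than $\omega+1$ colors. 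Verifying that these hypotheses are satisfied at \emph{every} layer and for \emph{every} component simultaneously, and that the parallel execution is conflict-free, is the delicate part of the write-up.
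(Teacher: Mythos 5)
Your proposal follows essentially the same route as the paper: the same bottom-up induction over the layers, interleaving $\poly(\Delta)$ border-preserving fix-up steps (degeneracy for pending components, Lemma~\ref{lem+interval+border} for separating ones) before each step of the outer schedule, executed in parallel over the disjoint components, with the same $\poly(\Delta)^{\ell}$ schedule length and $O(D\ell)$ round count. The one device you leave implicit is the paper's normalization that every vertex recolored at step $j$ receives color $j \bmod k'$, which is how it ensures that only a single color must be cleared from $V_i$ before each outer step so that the $\omega+1$ versus $\omega+3$ color counting goes through.
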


\begin{proof}
Let $D$ chosen polynomial in $\omega$ and $\Delta$ (its value will be  specified later). Note that an interval graph of diameter $D=\poly(\omega, \Delta)$ has a number of nodes that is $O(\poly \Delta)$, because this number is upper bounded by $D\Delta$ and $\omega \leq \Delta$.  
Also let $k'$ be the total of number of colors used (including the additional colors).
Let us denote by $V_1, \ldots, V_\ell$ the interval decomposition of $G$, and $G_i = G[V_1\cup \ldots V_i]$. Let $\alpha$ and $\beta$ the two colorings given as input, and let $\alpha_i$ to $\beta_i$ be the restrictions of $\alpha$ and $\beta$ to $G_i$. The algorithm proceeds by building successively a recoloring schedule $\lambda_i$ for $G_i$ from $\alpha_i$ to $\beta_i$, and progressively extending this recoloring schedule for the rest of the graph. 
In order to simplify the proof, we will require that the schedule produced by the algorithm has an additional property, namely that all the vertices recolored at a step $j$ of the schedule, are recolored with the color $j \mod k'$. Note that we can easily produce a schedule satisfying this property, up to multiplying its length by a factor of $k'$. This property will be useful later in order to extend progressively the schedule.

Since $G_1$ is a disjoint union of interval graphs of diameter $O(D)$, we can compute a recoloring schedule in $G_1$ from $\alpha_1$ to $\beta_1$ of length linear in the number of nodes by~\cite{BousquetB19}, which is in $O(\poly \Delta)$ in $O(D)$ rounds.

Assume that we have computed a recoloring schedule in $G_{i-1}$. We now describe how to extend this schedule to $G_i$. Before each step of $\lambda_{i-1}$, we insert $t$ steps during which only the vertices of $V_i$ are recolored, where $t = \poly(\Delta)$ is the length of the schedule produced by Lemma~\ref{lem+interval+border} 
(multiplied by $k'$ to ensure the fact that the recolored vertices in each step go to the same color). Finally, after the last step of $\lambda_{i-1}$, we insert again $t$ steps where only the vertices of $V_i$ are recolored such that they can reach their target coloring.

Let us consider the step $j$ of $\lambda_{i-1}$,  and let $\sigma$ be the coloring of $G_{i}$ just before this step. Let $c = j \mod k'$ be the color with which vertices recolored at this step are recolored with. We know that $\sigma|_{V_i}$ uses only $\omega+3$ colors. Our goal is to recolor the vertices of $V_i$ so that no vertex is colored $c$, and the recoloring step of $\lambda_{i-1}$ can be safely applied. 

Since $G_{i-1}$ is obtained from $G_{i}$ by removing pending interval graphs, and separating interval graphs of diameter at least $D$, there is a coloring $\sigma'$ of $G_i$ which agrees with $\sigma$ on $G_{i-1}$ and which uses at most $\omega + 1$ colors, all different from $c$ for the vertices in $V_i$. 
Taking $D = 240 \omega^2 \Delta^2$, the conditions of Lemma~\ref{lem+interval+border} are satisfied, and there is a recoloring schedule from $\sigma$ to $\sigma'$ of length $t$. 
Similarly, after the last step of $\lambda_{i-1}$, if $\sigma$ is the current coloring, then only $\omega +1$ colors are used in $V_i$, and by Lemma~\ref{lem+interval+border}, there is a recoloring schedule from $\sigma$ to $\beta_i$ in $G_i$ of length $t$.

This new schedule can be computed for each component of $G[V_i]$ in a centralized way, and up to multiplying the length of the schedule by $k'$, we can assume that at step $j$, vertices are recolored with the color $j \mod k'$. In a distributed setting, this requires $O(D)$ steps since each component has diameter $O(D)$. Since we need $O(D)$ steps to extend the recoloring schedule from $G_{i-1}$ to $G_i$, the algorithm uses at most $O(D \ell)$ rounds in total. Moreover, the schedule has length at most $t^{\ell} = (\poly \Delta)^\ell$.
\end{proof}

Now, for Theorem~\ref{thm:recol-chordal}, from Lemma~\ref{lem+decomp+chordal}, we get $\ell$ in $O(\log n)$, and we set $D$ in $O(\omega^2\Delta^2)$, hence we have an algorithm in $O(\omega^2\Delta^2 \log n)$ rounds that produces a schedule of length $(\poly \Delta)^{\log n}$ that is $n^{O(\log \Delta)}$.

\section*{Acknowledgments}

We thank the reviewers of an earlier version for their comments, and Marthe Bonamy for starting this project with us.
The second author thanks Fabian Kuhn and Václav Rozhoň for their kind and expert answers to his questions on network decomposition. 

\bibliography{distrecol}

\end{document}